\newcommand{\llbracket}{{[\![}}
\newcommand{\rrbracket}{{]\!]}}
\newtheorem{mydef}{Definition}
\newtheorem{property}{Property}
\newtheorem{theorem}{Theorem}
\newtheorem{corollary}{Corollary}
\author{Timothée Goubault de Brugière}
\affiliation{Université de Lorraine, CNRS, Inria, LORIA, F-54000 Nancy, France}
\author{Simon Martiel}
\affiliation{Atos Quantum Lab, Les Clayes-sous-Bois, France}
\author{Christophe Vuillot}
\affiliation{Université de Lorraine, CNRS, Inria, LORIA, F-54000 Nancy, France}
\title{A graph-state based synthesis framework for Clifford isometries}
\begin{document}

\newcommand\redsout{\bgroup\markoverwith{\textcolor{red}{\rule[0.5ex]{2pt}{1pt}}}\ULon}

\maketitle

\begin{abstract}

We tackle the problem of Clifford isometry compilation, i.e, how to synthesize a Clifford isometry into an executable quantum circuit. We propose a simple framework for synthesis that only exploits the elementary properties of the Clifford group and one equation of the symplectic group. We highlight the versatility of our framework by showing that several normal forms of the literature are natural corollaries. We recover the state of the art two-qubit gate depth necessary for the execution of a Clifford circuit on an LNN architecture, concomitantly with another work. 
We also propose practical synthesis algorithms for Clifford isometries with a focus on Clifford operators, graph states and codiagonalization of Pauli rotations. Benchmarks show that in all three cases we improve the 2-qubit gate count and depth of random instances compared to the state-of-the-art methods. We also improve the execution of practical quantum chemistry experiments.

\end{abstract}

\section{Introduction}

Clifford operators represent one of the most useful and one of the most studied subclass of quantum operators. Clifford operators are involved in numerous applications such as quantum error correction \cite{gottesman1997stabilizer}, randomized benchmarking protocols \cite{knill2008randomized,magesan2011scalable}, quantum state distillation \cite{bravyi2005universal, knill2005quantum}, the study of entanglement \cite{bennett1996mixed}. It is known that Clifford operators can be generated by the Hadamard gates, the Phase gates and the CNOT gate \cite{aaronson2004improved}. Moreover the simulation of such circuits, the so-called Clifford circuits, can be done efficiently on a classical computer \cite{gottesman1998heisenberg} and there is a one-to-one mapping between the Clifford operators and the $2n \times 2n$ binary matrices from the symplectic group $Sp(2n, \mathbb{F}_2)$. 

Our work lies in the domain of quantum compilation for the quantum circuit model, where one wants to generate quantum circuits for a given hardware to execute a given quantum algorithm. Given the variety of quantum hardwares, each having its own set of universal gates, its own costly resources, and given the different approaches for quantum computation, namely a fault-tolerant approach \cite{shor1996fault,preskill1998fault} or a NISQ approach \cite{NISQC}, today's compilers have to be able to
generate and optimize quantum circuits for several metrics (number of non Clifford gates, number of entangling gates, depth), several set of gates (Clifford+T, MS+SU(2), etc.) and several architectures (Rigetti \cite{manenti2021full}, IBM \cite{chowImplementingStrandScalable2014,chamberlandTopologicalSubsystemCodes2020}, Google \cite{arute2019quantum}, etc.). While not universal for quantum computation, the Clifford group plays a significant role in quantum compilation for several types of hardware \cite{van2021constructing,maslov2018use,martiel2020architecture} in both a
NISQ of a fault-tolerant setting and notably for the hardwares using the Clifford+T gate set, one of the most studied gate set for compilation problems. In this setting, the available multi-qubit gates, generally the 2-qubit CNOT gate or the CZ gate, belong to the Clifford group and looking at a quantum circuit as a succession of Clifford operators separated by layers of T-gates is one way to express the problem of optimizing the 2-qubit gate cost as a Clifford circuit optimization problem. Given that the abstract and compact representation of Clifford operators makes it easier to design efficient synthesis algorithms the Clifford operator synthesis problem emerges naturally as an essential brick for quantum circuit optimization, and notably to reduce the 2-qubit gate cost.

The structure of the Clifford group has been deeply studied in the literature. Notably, many normal forms for Clifford circuits have been proposed \cite{aaronson2004improved, maslov2018shorter, duncan2020graph, bravyi2021hadamard, bataille2021reduced}. Those normal forms express any Clifford operator as a series of stages containing one particular gate. For instance, the first normal form proposed by Aaronson and Gottesman is -H-CX-P-CX-P-CX-H-P-CX-P where -H- stands for a stage of Hadamard gates, -P- a stage of phase gates, and -CX- a stage of CNOT gates \cite{aaronson2004improved}. An important part of the work in Clifford circuits synthesis during the last decade was to propose shorter normal forms, resulting in shorter or shallower circuits with respect to the 2-qubit gates. Normal forms for the preparation of stabilizer states (states that can be prepared with a Clifford circuit) have also been proposed \cite{aaronson2004improved, van2010classical, garcia2014geometry}. Those normal forms, while being universal, impose a strict structure on the produced circuits and may fail to find more efficient implementations for a given operator. Notably the most recent normal forms expose the role of a particular class of quantum operators, the so-called phase polynomials, in the implementation of a Clifford operator \cite{maslov2018shorter, bravyi2021hadamard}. Phase polynomials are now extensively studied in the literature and the best synthesis algorithms show that a stage by stage implementation of phase polynomials is irrelevant \cite{amy2018controlled, vandaele2021phase}. We can expect that a more efficient approach for Clifford synthesis that does not rely on a stage by stage implementation can also be found. 

Optimal synthesis of Clifford operators through brute-force search has been recently proposed \cite{bravyi20206,DBLP:conf/aspdac/SchneiderBW23,DBLP:conf/qce/PehamBKWB23}, but for the moment the method is limited to small Clifford circuits, up to 26 qubits. Other works in the literature also focus on directly optimizing Clifford circuits through template matchings or peephole optimizations \cite{kliuchnikov2013optimization, bravyi2021clifford}. These methods cannot handle the synthesis from an abstract representation nor leverage the global structure of these operators with potential efficiency and performance loss.

Overall, it lacks an efficient approach to synthesize Clifford operators other than by using normal forms or the unpractical tableau formalism. Furthermore, there are many cases in the literature where only a partial implementation of a Clifford operator is necessary\textcolor{red}{.} \textcolor{red}{O}bviously the synthesis of stabilizer states is one example, but more generally the synthesis of Clifford isometries where some of the input qubits are in the state $\ket{0}$ is of particular interest in some error-correction protocols such as magic state distillation \cite{bravyiUniversalQuantumComputation2005, bravyiMagicstateDistillationLow2012}. Again, to our knowledge, no work in the literature proposes to generalize the concept of normal forms and synthesis frameworks to Clifford isometries.

The main result of our paper is a generalized synthesis framework for Clifford isometries. Our framework is strongly inspired by the results obtained via the ZX-calculus for the optimization of quantum circuits \cite{duncan2020graph} although the derivation of our framework is done without any use of the ZX-calculus. Indeed, our proof uses only the properties of the Clifford group and the symplectic matrices used to represent Clifford operators. Our framework shares also some similarities with
the normal form proposed in \cite{bravyi2021hadamard}. Where our framework stands out is first of all in its simplicity: it only requires to reduce a symmetric boolean matrix into a suitable defined identity matrix with the use of a few elementary operations, each corresponding to one elementary Clifford gate. Notably it naturally includes the Hadamard gate in the available operations, which is the gate that is missing when considering a Clifford operator with stages of phase polynomials. We
also highlight its versatility by showing that many normal forms proposed in the literature can be trivially recovered. Similarly to \cite{bravyi2021hadamard}, we also characterize more precisely the operators involved in the synthesis of one operator when the gate set is restricted to $\{CNOT, CZ, S\}$. 
We show that any Clifford operation on $n$ qubits can be executed in two-qubit gate depth $7n-2$ on a Linear Nearest Neighbour (LNN) architecture. Our result is concomitant with the $7n+2$ depth (then improved to $7n-4$) of another article in the literature \cite{maslov2023cnot}. Overall, both work improve the previous best result of $9n$ \cite{duncan2020graph, bravyi2021hadamard}.

Finally we apply our framework to the design of practical synthesis algorithms for Clifford isometries. We propose two versions: one that optimizes the 2-qubit gate count and one that optimizes the 2-qubit gate depth. In our benchmarks we will focus on two special cases: the synthesis of graph states and the synthesis of Clifford operators. On random instances, we show that we outperform the state of the art. We also extend our framework to the codiagonalization of groups of commuting Pauli rotations, we apply our algorithms to a set of benchmarks from quantum chemistry and report improvements over the state of the art. 

\section{Background}

\subsection{Quantum isometries}

An isometry is a generalization of quantum states and quantum operators. They map Hilbert spaces of possibly two different dimensions and they preserve the inner products, similarly to unitary matrices.

\begin{mydef}
Let $0 \leq k \leq n$ be two integers. A $k$ to $n$ isometry can be represented by a $2^n \times 2^k$ complex matrix $V$ such that 
\[ V^{\dag}V = I_{2^k \times 2^k}. \]
\end{mydef}

In other words, isometries can be seen as the action of a quantum operator on a quantum memory where some of the qubits are in the state $\ket{0}$ --- those qubits can be regarded as ancillary qubits. Generally the compilation of isometries result in shorter circuits than systematically synthesizing the full quantum operator because the ancillary qubits in a fixed input state offer some degrees of freedom in the synthesis. Clearly, a state preparation is usually less costly than a full operator compilation.

In this paper we focus on the synthesis of a particular subclass of isometries: Clifford isometries, i.e, isometries that can be generated by a Clifford operator. 

\subsection{Pauli matrices and the Clifford group}

The Pauli matrices are defined by
\[ I=\begin{bmatrix}1&0\\0&1\end{bmatrix}, X = \begin{bmatrix} 0 & 1 \\ 1 & 0 \end{bmatrix}, Z = \begin{bmatrix} 1 & 0 \\ 0 & -1 \end{bmatrix}, Y = \begin{bmatrix} 0 & -i \\ i & 0 \end{bmatrix} \]
and they obey the following equations: 
\[ X^2 = Y^2 = Z^2 = I = \begin{bmatrix} 1 & 0 \\ 0 & 1 \end{bmatrix}, \]
\[ XY = iZ, \]
\[ ZX = iY, \]
\[ YZ = iX. \]

Therefore, the Pauli matrices generate a group where any member can be written as
\[ i^k \{I, X, Y, Z \} \text{ for some $k \in \{0,1,2,3\}$. } \]

This generalizes to several qubits: a Pauli operator on $n$ qubits is any operator of the form 
\[ P = i^k \bigotimes_{j=1}^n G_j, \text{ where } G_j \in \{I,X,Z,Y\}, k \in \{0,1,2,3\}. \]

We write $\mathcal{P}_n$ for the set of Pauli operators on $n$ qubits. For conciseness any Pauli operator, up to the global phase, can be written as a Pauli word, for instance $IZXI = I \otimes Z \otimes X \otimes I$. We also write $G_j, G \in \{X,Z,Y\}$, the Pauli operator that applies solely the operator $G$ on qubit $j$.

With the 2-bit encoding 
\[ I \rightarrow 00, X \rightarrow 01, Z \rightarrow 10, Y \rightarrow 11 \] 
we can always represent a Pauli operator with a boolean vector $p$ of size $2n$ where $(p[j], p[j+n])$ encodes the Pauli matrix applied on qubit $j$ plus an extra two bits to encode the value of $k$. We write $P$ interchangeably for the Pauli operator or the associated boolean vector. 

Example: 
\[ YZXI = \begin{bmatrix} 1 \\ 1 \\ 0 \\ 0 \\ --- \\ 1 \\ 0 \\ 1 \\ 0 \end{bmatrix}. \] 

Given such a boolean vector $P$, we write $P^Z$ for the first $n$ values of $P$ (encoding the $Z$ components) and $P^X$ for the last $n$ values (encoding the $X$ components). The product of two Pauli operators $(P,k) = (P_1, k_1) \times (P_2, k_2)$ is given by 
\[ P = P_1 \oplus P_2 \]
\[ k = k_1 + k_2 + \| P_2^X * P_1^Z \| - \| P_1^X * P_2^Z \|  \mod 4 \]
i.e, the boolean vector of the product of two Pauli operators is given by the bitwise XOR of the associated boolean vectors. Here $\| \cdot \|$ stands for the Hamming norm and $*$ stands for the element-wise product.

\begin{mydef}
The Clifford group on $n$ qubits is defined by
\[ \mathcal{C}_n = \{ U \in \mathcal{U}(2^n) \; | \; UPU^{\dag} \in \mathcal{P}_n \; \; \forall P \in \mathcal{P}_n \}. \]
\end{mydef}

It is well known that any Clifford operator can be generated by a quantum circuit in the gate set $\{H, S, CNOT \}$. It is common to also add the $CZ$ gate and the $\sqrt{X}$ gate in the usual gate set. A Clifford circuit is any circuit made of Clifford gates and uniquely defines a Clifford operator.

\begin{mydef}
Let $0 \leq k \leq n$ be two integers. A $k$ to $n$ Clifford isometry $V$ is defined as 
\[ V = CI_{2^n \times 2^m} \]

where $C$ belongs to the Clifford group on $n$ qubits and $I_{2^n \times 2^m}$ denotes the first $2^m$ columns of the identity matrix.
\end{mydef}

Equivalently, one can see a $k$ to $n$ Clifford isometry as a Clifford operator where $n-k$ of the inputs are set to $\ket{0}$. The case $k=0$ defines a stabilizer state and $k=n$ corresponds to a Clifford operator. An example of a $3$ to $5$ Clifford isometry defined by a Clifford circuit is given in Fig~\ref{fig::clifford_circuit}.

\begin{figure}
    \centering
    \includegraphics{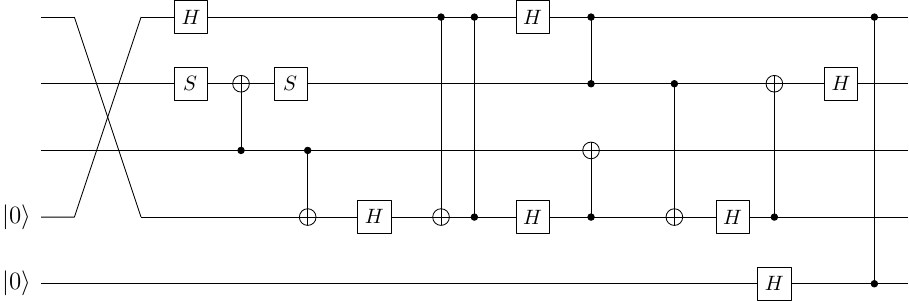}
    \caption{Example of a $5$-qubit Clifford circuit with $2$ qubits set to $\ket{0}$. This defines a $3$ to $5$ Clifford isometry.}
    \label{fig::clifford_circuit}
\end{figure}

\

For the rest of the paper, we will solely focus on the conjugation (by Clifford operators) of Pauli operators of the form 
\[ \bigotimes_{j=1}^n G_j, \text{ where } G_j \in \{I,X,Z,Y\} \]
i.e, without a global phase. 

\begin{property}

The conjugation\redsout{s} of any n-qubit Pauli word by a Clifford operator can be efficiently stored in a boolean vector of size $2n+1$.

\end{property}

\begin{proof}

The eigenvalues of the Pauli matrices without a global phase are $\{-1, 1\}$ and are unchanged after the conjugation by a Clifford. Therefore, the conjugations obtained can only be of the form 
\[ (-1)^k \bigotimes_{j=1}^n G_j, \text{ where } G_j \in \{I,X,Z,Y\}, k \in \{0,1\} \]

and we only have to store the value of $k$ on one bit.

\end{proof}

\subsection{Tableau representation of Clifford isometries}

It is well-known that Clifford operators can be efficiently represented by a polynomial sized tableau with boolean entries and the tableau can also be efficiently updated when multiplied by another Clifford operator. This relies on the fact that only the conjugations of the Pauli operators $Z_i, X_i, i=1 \hdots n$ are sufficient to fully characterize an n-qubit Clifford operator\footnote{By product we get the conjugation of any Pauli word, and the set of the $4^n$ Pauli words is a basis for the full set of complex matrices.}. The tableau is of size $(2n+1) \times 2n$, one column for each conjugation, and the $2n$ rows without the overall sign define a boolean matrix that belongs to the symplectic group $Sp(2n, \mathbb{F}_2)$. A matrix $M$ in the symplectic group satisfies the following relation:
\[ M \Omega M^T = \Omega \]

where $\Omega = \begin{bmatrix} 0 & I_n \\ I_n & 0 \end{bmatrix}$. The symplectic product between two Pauli vectors is $0$ if they commute and $1$ otherwise. So matrices in the symplectic group are simply one case of commutation relations between a set of Pauli vectors. Given that the identity tableau is symplectic and that $CP_1C^{\dag}$ and $CP_2C^{\dag}$ have the same commutation relation as $P_1$ and $P_2$, the tableau of any Clifford operator $C$ necessarily has the same commutation relations as the identity tableau and the tableau belongs to the symplectic group.

We write the tableau 
\[  M = \begin{bmatrix} Z \to Z & X \to Z \\ Z \to X & X \to X \end{bmatrix} \]
where the block $Z/X \to Z/X$ explicitly shows which part corresponds to the $Z/X$ components of the images of the $Z_i/X_i$ operators.

This efficient representation can be generalized to Clifford isometries and we give a constructive proof that a $k$ to $n$ Clifford isometry can be represented by a subset of $n+k$ columns of the tableau of a Clifford operator. We also show that this representation is not unique and we give a construction of the equivalence class.

\begin{theorem}

Any $k$ to $n$ Clifford isometry $V$ can be represented by $n+k$ columns of a Clifford tableau $M$. We write
\[ M_V = \begin{bmatrix} Z_1 \to Z & Z_2 \to Z & X \to Z \\ Z_1 \to X & Z_2 \to X & X \to X \end{bmatrix} \] 
where 
\begin{itemize}
    \item $\begin{bmatrix} Z_1 \to Z \\ Z_1 \to X \end{bmatrix} \in \mathbb{F}_2^{(2n+1) \times k}$ encodes the conjugations of $Z_i, i=1\hdots k$, 
    \item $\begin{bmatrix} Z_2 \to Z \\ Z_2 \to X \end{bmatrix} \in \mathbb{F}_2^{(2n+1) \times (n-k)}$ encodes the conjugations of $Z_i, i=k+1\hdots n$
    \item $\begin{bmatrix} X \to Z \\ X \to X \end{bmatrix} \in \mathbb{F}_2^{(2n+1) \times k}$ encodes the conjugations of $X_i, i=1\hdots k$. 
\end{itemize} 
Moreover, any of the $n-k$ columns of $\begin{bmatrix} Z_2 \to Z \\ Z_2 \to X \end{bmatrix}$ can be added to any other column other than itself such that the specification of the isometry is unchanged.

\label{thm1}

\end{theorem}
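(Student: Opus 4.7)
The plan is to work with an explicit extension of $V$ to a full Clifford unitary. By the definition of Clifford isometry recalled in the paper, there exists a Clifford operator $U$ on $n$ qubits with $V = U(I_{2^k}\otimes |0\rangle^{\otimes(n-k)})$. The standard tableau of $U$ stores the conjugations $U P U^{\dagger}$ for $P\in\{Z_1,\dots,Z_n,X_1,\dots,X_n\}$ in $2n$ columns, each of length $2n+1$. I would simply define $M_V$ to be the restriction of this tableau to the $n+k$ columns indexed by $\{Z_1,\dots,Z_n,X_1,\dots,X_k\}$, partitioned into the three blocks of the statement. Because the tableau of $U$ belongs to the symplectic group, the required algebraic structure of $M_V$ is inherited for free, so the existence part of the theorem reduces to this construction.

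Next I would show that the discarded columns, i.e.\ the conjugations of $X_{k+1},\dots,X_n$, carry no information relevant to $V$. Suppose $U' = UW$ is another Clifford extension whose tableau coincides with that of $U$ on the retained $n+k$ columns. Then $W$ must satisfy $WZ_iW^{\dagger}=Z_i$ for all $i$ and $WX_jW^{\dagger}=X_j$ for $j\le k$. Commutation with every $Z_i$ forces $W$ to be diagonal in the computational basis, and further commutation with $X_1,\dots,X_k$ forces $W$ to act trivially on the first $k$ qubits. Thus $W$ lies in the diagonal Clifford subgroup on the ancilla register, generated by $\{S_i,CZ_{i,j}:i,j>k\}$, and each such generator fixes $|0\rangle^{\otimes(n-k)}$ up to a global phase. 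Consequently $U$ and $U'$ define the same isometry, and $M_V$ specifies $V$ unambiguously.

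For the stated equivalence moves, I would exhibit concrete Clifford gates whose right multiplication on $U$ preserves the isometry and reproduces on $M_V$ precisely the addition of a column of the $Z_2\to$ block into another column. The two key gates are $CX_{i,j}$ with $i>k$ (control placed on an ancilla) and $CZ_{i,j}$ with $i>k$ and $j\le k$: both fix $|0\rangle^{\otimes(n-k)}$, either because the control qubit is already in $|0\rangle$, or because $CZ$ acts as the identity whenever one of the two qubits is $|0\rangle$. Using the standard conjugations $CX_{i,j}Z_jCX_{i,j}=Z_iZ_j$ and $CZ_{i,j}X_jCZ_{i,j}=Z_iX_j$, the first gate XORs the $Z_i$ column (a column of $Z_2\to$) into the $Z_j$ column, while the second XORs the $Z_i$ column into the $X_j$ column of $X\to$; the concomitant perturbations of the dropped $X_i,X_j$ columns for indices above $k$ are immaterial. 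Letting $i$ range over $\{k+1,\dots,n\}$ and $j$ over all other column labels, these two families realise every move permitted by the theorem.

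The main obstacle I expect is the second step: pinning down exactly which Cliffords can be absorbed into $W$ without altering the retained columns of the tableau, and checking that each element of this centraliser really does fix the ancilla state up to phase. The remaining arguments are routine bookkeeping of Clifford conjugation rules translated into column (and sign-row) operations on the tableau.
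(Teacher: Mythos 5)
Your proof is correct, but it takes a genuinely different route from the paper. The paper argues \emph{forward}: starting from $V = C\cdot(I\otimes\ket{0}^{\otimes n-k})$ it shows how to \emph{reconstruct} $V$ from the retained columns --- the stabilizers $CZ_iC^\dagger$, $i=1,\dots,n$, pin down the state $\ket{\psi}=C\ket{0}^{\otimes n}$, and then $CX_iC^\dagger$, $i\le k$, give all the states $VX_i\ket{0}^{\otimes k}$; the equivalence moves are obtained by inserting $Z_j^{a_j}$, $j>k$, which fix $\ket{0}$ on the ancillas. You instead argue \emph{backward}: you characterize the \emph{gauge freedom}, i.e.\ the centralizer argument showing that any two full Clifford extensions sharing the same $n+k$ retained columns differ by a diagonal Clifford supported on the ancilla register, which fixes $\ket{0}^{\otimes(n-k)}$ up to a global phase, so the induced isometries coincide. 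You then realize the permitted column additions by explicit right-multiplication with $CX_{i,j}$ and $CZ_{i,j}$ ($i>k$), both of which stabilize the ancilla product state and, at the tableau level, XOR the $Z_i$ column (a $Z_2\!\to$ column) into the target column while touching only dropped columns otherwise. The centralizer viewpoint is a clean structural alternative to the paper's state-reconstruction argument and makes the ``any other column'' claim fully explicit for all three target blocks, whereas the paper spells out the $X\!\to$ case in detail and treats the $Z_1\!\to$, $Z_2\!\to$ cases more briefly as a change of stabilizer basis. One small caveat: your centralizer step implicitly uses that the sign row of the tableau is also retained (so $WZ_iW^\dagger=Z_i$ exactly, not just up to sign); this matches the paper's $(2n+1)\times 2n$ tableau convention, but is worth stating explicitly, since otherwise $W$ could additionally contain ancilla-supported Pauli $X$ factors, which would spoil the fixed-point argument.
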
 

\begin{proof}

Let $V$ be a $k$ to $n$ Clifford isometry. There exists a Clifford operator $C$ such that $V$ is characterized by the set 
\[ \{ V\ket{b} = C \ket{b}\ket{0}^{\otimes n-k} \, | \, b \in \mathbb{F}_2^k. \}. \]

Let $\ket{\psi} = C\ket{0}^{\otimes n}$, we have 
\begin{align*} \forall b \in \mathbb{F}_2^k, C \ket{b}\ket{0}^{\otimes n-k} & = C \prod_{i=1}^k X_i^{b_i} \ket{0}^{\otimes n} \\
& = C \prod_{i=1}^k X_i^{b_i} C^{\dag} C\ket{0}^{\otimes n} \\
& = \left( C \prod_{i=1}^k X_i^{b_i} C^{\dag} \right) \ket{\psi}. \end{align*}

Therefore, $V$ is completely characterized by the values of 
\[ C \prod_{i=1}^k X_i^{b_i} C^{\dag} \ket{\psi} , \forall b \in \mathbb{F}_2^k. \]

By product we only need the knowledge of $\ket{\psi}$ and $C X_i C^{\dag}, i=1 \hdots k$. Equivalently we only need the states $VX_i \ket{0}^{\otimes k}, i=1\hdots k$.

It is well-known that a Clifford state is completely determined by the knowledge of $n$ Pauli operators that stabilize the state \cite{aaronson2004improved}. Given that 
\[ Z_i \ket{0}^{\otimes n} = \ket{0}^{\otimes n}, \; i=1 \hdots n, \]
we have
\[ C Z_i C^{\dag} \ket{\psi} = \ket{\psi}, \; i=1 \hdots n \]
that represent such a set of stabilizers for $\ket{\psi}$.

Overall, we need the values of 
\[ C Z_i C^{\dag}, \; i=1 \hdots n \]
and
\[ C X_i C^{\dag}, \; i=1 \hdots k \]
to completely determine $V$. The tableau representation of $C$ gives the conjugations of $Z_i, X_i, i=1 \hdots n$ by $C$ so finally we only need the first $n+k$ columns of the tableau of $C$.

We write 
\[ M_V = \begin{bmatrix} Z_1 \to Z & Z_2 \to Z & X \to Z \\ Z_1 \to X & Z_2 \to X & X \to X \end{bmatrix} \]
where 
\begin{itemize}
    \item $\begin{bmatrix} Z_1 \to Z \\ Z_1 \to X \end{bmatrix} \in \mathbb{F}_2^{(2n+1) \times k}$ encodes the conjugations of $Z_i, i=1\hdots k$, 
    \item $\begin{bmatrix} Z_2 \to Z \\ Z_2 \to X \end{bmatrix} \in \mathbb{F}_2^{(2n+1) \times (n-k)}$ encodes the conjugations of $Z_i, i=k+1\hdots n$
    \item $\begin{bmatrix} X \to Z \\ X \to X \end{bmatrix} \in \mathbb{F}_2^{(2n+1) \times k}$ encodes the conjugations of $X_i, i=1\hdots k$. 
\end{itemize}

We now show that adding any column of $\begin{bmatrix} Z_2 \to Z \\ Z_2 \to X \end{bmatrix}$ to any column of $\begin{bmatrix} X \to Z \\ X \to X \end{bmatrix}$ does not modify the isometry specified. 

This result relies on the following identities
\begin{align*} \forall i \in \llbracket 1,k \rrbracket, \forall a \in \mathbb{F}_2^{n-k}, \; VX_i\ket{0}^{\otimes k} & = C X_i \ket{0}^{\otimes n} \\ & = C X_i \prod_{j=k+1}^n Z_j^{a_{j-k}} \ket{0}^{\otimes n} \\ 
& = C \left(X_i \prod_{j=k+1}^n Z_j^{a_{j-k}} \right) C^{\dag} \ket{\psi} \\
& = \underbrace{\left(C  X_i C^{\dag} \right)}_{\text{one column of } \begin{bmatrix} X \to Z \\ X \to X \end{bmatrix}} \times \prod_{j=k+1}^n \underbrace{\left(C  Z_j^{a_{j-k}} C^{\dag} \right)}_{\text{one column of } \begin{bmatrix} Z_2 \to Z \\ Z_2 \to X \end{bmatrix}} \ket{\psi}  , \end{align*}

Normally, to reconstruct the states $V X_i\ket{0}^{\otimes k}$ from our tableau, we need the images of the $Z_i$'s to reconstruct $\ket{\psi}$ and the images of the $X_i$. What the identity above shows is that if we know any image of a Pauli word of the form $X_i \prod_{j=k+1}^n Z_j^{a_{j-k}}, a \in \mathbb{F}_2^{n-k}$ then we can still reconstruct the desired state. This is equivalent to adding any column of $\begin{bmatrix} Z_2 \to Z \\ Z_2 \to X \end{bmatrix}$ to any column of $\begin{bmatrix} X \to Z \\ X \to X \end{bmatrix}$. Similarly, adding any column of $\begin{bmatrix} Z_2 \to Z \\ Z_2 \to X \end{bmatrix}$ to $\begin{bmatrix} Z_1 \to Z \\ Z_1 \to X \end{bmatrix}$ and $\begin{bmatrix} Z_2 \to Z \\ Z_2 \to X \end{bmatrix}$ (other than itself) only gives a new basis for the stabilizers of $\ket{\psi}$ but it does not change the specification of $\ket{\psi}$.

\end{proof}

Note that the asymmetry between the $Z$ and $X$ operators in the tableau representation of an isometry comes from the fact that we choose the ancillary qubits to be in the $\ket{0}$ state. One could have chosen to have ancillas in the state $\ket{+}$ and we would need the images of all the $X_i$'s and not all the $Z_i$'s. Any intermediate mix of $\ket{0}$ and $\ket{+}$ states as ancillas is also allowed.

Following the rules for simulating Clifford circuits by tracking the images of the Pauli operators \cite{aaronson2004improved}, we can compute the tableau of the isometry of Fig~\ref{fig::clifford_circuit}. We get 
\[ M_V = \begin{bmatrix} Z \to Z & X \to Z \\ Z\to X & X \to X \end{bmatrix} = \begin{bmatrix} A & C \\ B & D \end{bmatrix} = \begin{bmatrix} 0 & 0 & 0 & 1 & 1 && 0 & 1 & 1 \\ 0 & 1 & 1 & 1 & 0 && 0 & 0 & 1 \\ 1 & 1 & 1 & 0 & 0 && 0 & 1 & 0 \\ 1 & 0 & 1 & 1 & 0 && 0 & 0 & 0 \\ 1 & 0 & 0 & 1 & 0 && 1 & 0 & 1 \\ \\ 1 & 0 & 0 & 1 & 0 && 1 & 0 & 1 \\ 1 & 0 & 1 & 0 & 0 && 1 & 1 & 0 \\ 0 & 0 & 0 & 1 & 0 && 1 & 0 & 0 \\ 0 & 1 & 1 & 1 & 0 && 0 & 1 & 0 \\ 0 & 0 & 0 & 0 & 1 && 0 & 0 & 0 \end{bmatrix}. \]

\subsection{The synthesis problem}

We propose one formalization of the synthesis of Clifford isometries. We work in the gate set $\{H,S,CZ,CNOT\}$ which is known to be universal for Clifford computation. We suppose we are given the tableau representation $M_V$ of a $k$ to $n$ isometry. We only have access to a black box representation of the isometry in a circuit, see Figure~\ref{synthesis_circuit}. 

    \begin{figure}[h]
        \begin{adjustwidth}{-1.5cm}{}
\center
\includegraphics[scale=0.75]{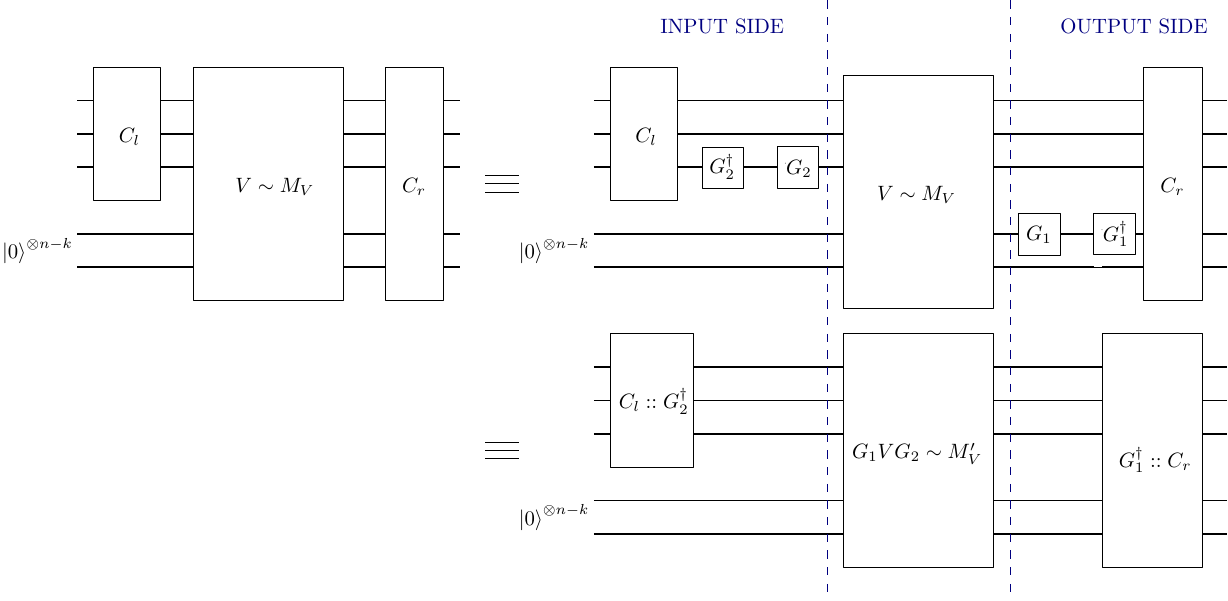}
        \end{adjustwidth}
\caption{Illustration of the synthesis framework: gates and their inverses are applied on the left and right of the isometry. One gate is merged with the isometry while the other one is stacked into a left/right circuit. Once $M_V = I$, then $C_l::C_r$ implements the desired isometry. Throughout the synthesis process, the circuit always implements the desired isometry.}
\label{synthesis_circuit}

\end{figure}

\subsubsection{Action of the elementary gates.}

\paragraph{On the output side.} Given a black box circuit implementing the isometry $V$, adding a gate $G_1$ from the right gives the operator $G_1V$ to characterize and we therefore need to compute the values of 
\[ (G_1V) \cdot Z_i \cdot (G_1V)^{\dag} = G_1\left(VZ_iV^{\dag}\right)G_1^{\dag}, i=1..n \] 
and 
\[ (G_1V)\cdot X_i \cdot (G_1V)^{\dag} = G_1\left(VX_iV^{\dag}\right)G_1^{\dag}, i=1..k. \]

The values of $VZ_iV^{\dag}$ and $VX_iV^{\dag}$ are given by $M_V$ and by the rules of conjugation we know how to directly update $M_V$. For simplicity we write 
\[ M_V = \begin{bmatrix} T \\ r \end{bmatrix} \]
where $r$ stands for the sign vector and $T$ is the $(2n \times (n+k))$ matrix that specifies the conjugations without the sign. 
We also note $\land$ for the bitwise AND operation.
\begin{itemize}
  \item Hadamard on qubit $m$: swap rows $m$ and $n+m$, set 
  \[ r = r \oplus \left(T[m,:] \land T[n+m,:] \right), \]
  \item S on qubits $m$: add row $n+m$ to row $m$, set 
  \[ r = r \oplus \left(T[m,:] \land T[n+m,:] \right), \]
  \item CZ on qubits $p, m$: add row $n+p$ to row $m$ and row $n+m$ to row $p$, set 
  \[ r = r \oplus \big[ T[n+m,:] \land T[n+p,:]  \land \left( T[p,:] \oplus T[m,:] \right) \big], \]
  \item CNOT with control $p$ and target $m$: add row $n+p$ to row $n+m$ and row $m$ to row $p$, set 
  \[ r = r \oplus \big[ T[n+p,:] \land T[m,:] \land \left( 1 \oplus T[n+m,:] \oplus T[p,:] \right)\big] . \]
\end{itemize}

\paragraph{On the input side.} When a gate $G_2$ is added from the left, we need the values of 
\[ (VG_2) \cdot Z_i \cdot (VG_2)^{\dag} \] 
and
\[ (VG_2) \cdot X_i \cdot (VG_2)^{\dag}. \] 
Writing $G_2Z_iG_2^{\dag} = \bigoplus_i a_i Z_i \oplus \bigoplus_i b_i X_i, a_i \in \{0,1\}, b_i \{0,1\}$ we have 
\[ (VG_2) \cdot Z_i \cdot (VG_2)^{\dag} = \bigoplus_i a_i \left(VZ_iV^{\dag}\right) \oplus \bigoplus_i b_i \left(VX_iV^{\dag}\right). \] 

Therefore the values of $(VG_2) \cdot Z_i \cdot (VG_2)^{\dag}$ are linear combinations of the columns of $M_V$. The same applies for the values of $(VG_2) \cdot X_i \cdot (VG_2)^{\dag}$.

\ \\

This gives:
\begin{itemize}
  \item Hadamard on qubit $m$: swap columns $m$ and $n+m$, 
  \item S on qubits $m$: add column $m$ to column $n+m$, 
  \item CZ on qubits $p, m$: add column $p$ to column $n+m$ and column $m$ to column $n+p$, 
  \item CNOT with control $p$ and target $m$: add column $p$ to column $m$ and column $n+m$ to column $n+p$, 
\end{itemize}

Similar updates for the sign vector can be derived.

For the rest of the paper we will now refer as the \textit{input qubits} the $k$ qubits on which a gate can be applied on the left. Similarly we refer as the \textit{output qubits} the $n$ qubits on which a gate can be applied on the right.

\subsubsection{The synthesis workflow.}

The synthesis process consists in adding a Clifford gate and its inverse side by side on the left or right of the black box isometry, as shown in Fig.~\ref{synthesis_circuit}. This will not change the overall functionality of the circuit but one can merge one Clifford gate to the black box and update $M_V$. At the same time the inverse gate is stacked to a list of left or right operations. When $M_V$ is equal to the identity tableau, we have a valid sequence of operations implementing the Clifford isometry by stacking the left operations and the \textit{reverse} of the right operations. 

Note that we can only focus on reducing $T$ to the identity because the sign vector can be treated separately at the end of the computation. Indeed, if $T = I$, adding twice an $S$ gate on qubit $m$ at the end of the circuit will simply do the operation 
\[ r[m] = r[m] \oplus 1. \]
In other words, keeping track of the value of $r$ is sufficient because it can be ultimately zeroed with a layer of $Z$ gates. 

\

Designing synthesis algorithms directly from the tableau representation is clumsy. For the synthesis of Clifford states and Clifford operators the derivation of normal forms is common instead of proposing a direct synthesis algorithm \cite{aaronson2004improved, maslov2018shorter, duncan2020graph, bravyi2021hadamard, bataille2021reduced,van2010classical, garcia2014geometry}. 

We try to overcome both the difficulty of manipulating the Clifford tableaus and the necessity to use normal forms by proposing a simple framework for the synthesis of Clifford isometries. One advantage of our framework will be its versatility: it can be used to design either direct synthesis algorithms or to prove normal forms.

\section{A graph-state formalism for Clifford isometries synthesis}

We prove that any $k$ to $n$ Clifford isometry can be put in a graph-state form up to local Hadamard gates. 
Then we derive a synthesis framework by showing the behavior of the different elementary Clifford gates. 

\subsection{Graph-state structure of the tableau representation}

First, we prove a normal form for symplectic matrices that exposes an underlying graph-state structure of Clifford operators. 

\begin{theorem}
\label{theorem_clifford}
Let $M = \begin{bmatrix} A & C \\ B & D \end{bmatrix} = \begin{bmatrix} Z \to Z & X \to Z \\ Z \to X & X \to X \end{bmatrix} \in Sp(2n, \mathbb{F}_2)$. If $B$ is invertible, then there exist two boolean symmetric matrices $S, S'$ such that 
\begin{equation} M = \left[ \begin{array}{c|c} S'B & (B^T)^{-1} \oplus S'BS \\ \hline B & BS \end{array} \right].  \label{sympl_form} \end{equation}
\end{theorem}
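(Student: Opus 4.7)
The plan is to extract block identities from the symplectic condition $M\Omega M^T = \Omega$ (equivalently $M^T \Omega M = \Omega$, which holds since $\Omega = \Omega^{-1}$ and $M$ is invertible), and then use the invertibility of $B$ to solve for $A$, $C$, $D$ in the claimed form.

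First I would expand $M^T \Omega M = \Omega$ blockwise. Writing $M = \begin{bmatrix} A & C \\ B & D \end{bmatrix}$, this yields the three relations
\begin{equation*}
A^T B = B^T A, \qquad C^T D = D^T C, \qquad A^T D + B^T C = I.
\end{equation*}
(The first two say that $A^T B$ and $D^T C$ are symmetric matrices over $\mathbb{F}_2$; the third is the off-diagonal block.) Expanding $M \Omega M^T = \Omega$ instead gives the transposed siblings $A C^T = C A^T$ and $B D^T = D B^T$, which I will also use.

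Next, since $B$ is invertible, I would define
\begin{equation*}
S' := A B^{-1}, \qquad S := B^{-1} D,
\end{equation*}
so that $A = S' B$ and $D = B S$ automatically. Symmetry of $S'$ is equivalent to $A B^{-1} = B^{-T} A^T$, i.e.\ $B^T A = A^T B$, which is exactly the first relation above. Symmetry of $S$ is equivalent to $B^{-1} D = D^T B^{-T}$, i.e.\ $D B^T = B D^T$, which is the transposed sibling. So both matrices are symmetric as required.

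Finally, I would recover $C$ from the off-diagonal equation $A^T D + B^T C = I$: solving gives $C = B^{-T}(I + A^T D) = B^{-T} + B^{-T} A^T D$, and since $B^{-T} A^T = (A B^{-1})^T = (S')^T = S'$, we get $C = (B^T)^{-1} \oplus S' B S$, matching \eqref{sympl_form}. There is no real obstacle here beyond bookkeeping in $\mathbb{F}_2$; the only subtlety is recognizing that invertibility of $B$ together with just two of the symplectic block-identities is enough to force the remaining block $C$ to take the prescribed shape, which is what makes the normal form useful for the later graph-state interpretation.
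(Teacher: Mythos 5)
Your proof is correct and takes essentially the same route as the paper: define $S' = AB^{-1}$ and $S = B^{-1}D$, then read off symmetry and the form of $C$ from the block symplectic conditions. The one small difference is that you establish $S = S^T$ directly from the relation $BD^T = DB^T$ coming from $M\Omega M^T = \Omega$, whereas the paper derives it from $D^TC \oplus C^TD = 0$ after first pinning down $C$ and using $S' = S'^T$; your choice makes the three facts independent rather than sequential, which is a minor simplification.
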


\begin{proof}

If $B$ is invertible, then we can define $S = B^{-1}D$ and $S' = AB^{-1}$ and $K = C \oplus (B^T)^{-1}$ such that 
\[ A = S'B,  \]
\[ D = BS, \]
\[ C = (B^T)^{-1} \oplus K. \]

Matrices in the symplectic group verify three different equations. One of them will prove that $K = S'BS$ and the two others will show that $S$ and $S'$ are symmetric. 

\begin{enumerate}
  \item $B^TA \oplus A^TB = 0$ gives 
  \[ B^TS'B \oplus B^TS'^TB = B^T(S' \oplus S'^T)B = 0 \] 
  which is possible if and only if $S' = S'^T$. \ \\
  \item $B^TC \oplus A^TD = I$ gives 
  \[ B^T((B^T)^{-1} \oplus K) \oplus B^TS'^TBS = I \oplus B^T(K \oplus S'BS) = I. \] 
  So we have $B^T(K \oplus S'BS) = 0$ which is possible if and only if $K = S'BS$. \ \\
  \item $D^TC \oplus C^TD = 0$ gives 
  \begin{align*} S^TB^T(B^{-T} \oplus S'BS) \oplus ((B^{-T} \oplus S'BS))^TBS & = S^TB^TB^{-T} \oplus S^TB^TS'BS \oplus B^{-1}BS \oplus S^TB^TS'^TBS \\ &= S^T \oplus S = 0 \end{align*}
  which is possible if and only if $S = S^T$.
\end{enumerate}

\end{proof}

\ \\
This trivially extends to Clifford isometries: 
\begin{corollary}
\label{theorem_clifford}
Let $M_V = \begin{bmatrix} A & C \\ B & D \end{bmatrix} \in \mathbb{F}_2^{2n \times (n+k)}$ be the tableau representation of a Clifford isometry. If $B$ is invertible, then there exist two boolean symmetric matrices $G_k \in F_2^{k \times k}, G_n \in F_2^{n \times n}$ such that 
\begin{equation} M_V = \left[ \begin{array}{c|c} G_nB & (B^T)^{-1} \begin{pmatrix} I_k \\ 0 \end{pmatrix} \oplus G_nB \begin{pmatrix} G_k \\ 0 \end{pmatrix} \\ \hline B & B \begin{pmatrix} G_k \\ 0 \end{pmatrix} \end{array} \right]. \label{isometry_form} \end{equation}
\end{corollary}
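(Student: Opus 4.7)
The plan is to lift the isometry tableau to a full Clifford tableau, apply Theorem~\ref{theorem_clifford} to it, and then use the column-freedom from Theorem~\ref{thm1} to clean up the last $k$ columns.

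First, I would complete $M_V$ into a full Clifford tableau $\widetilde M \in Sp(2n, \mathbb{F}_2)$ by choosing any valid images for $X_{k+1}, \ldots, X_n$; such a completion always exists because, by Theorem~\ref{thm1}, $M_V$ is already a subset of the tableau of some Clifford $C$ with $V\ket{b} = C\ket{b}\ket{0}^{\otimes n-k}$. Crucially, the lower-left $n \times n$ block of $\widetilde M$ remains the given $B$, which is invertible by hypothesis, so Theorem~\ref{theorem_clifford} applies and produces symmetric matrices $S, S' \in \mathbb{F}_2^{n \times n}$ with
\[
\widetilde M = \left[ \begin{array}{c|c} S'B & (B^T)^{-1} \oplus S'BS \\ \hline B & BS \end{array} \right].
\]

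Next I would read off the columns of $M_V$ from this expression. The first $n$ columns, forming the $Z\to$ block, give $A = S'B$ directly, and I would set $G_n := S'$. For the last $k$ columns I would decompose
\[
S = \begin{pmatrix} S_{11} & S_{12} \\ S_{12}^T & S_{22} \end{pmatrix}, \qquad S_{11} \in \mathbb{F}_2^{k\times k} \text{ symmetric},
\]
so that the first $k$ columns of $BS$ are $B \begin{pmatrix} S_{11} \\ S_{12}^T \end{pmatrix}$. This is within one tweak of the claimed form $B\begin{pmatrix} G_k \\ 0 \end{pmatrix}$, up to the spurious lower block $S_{12}^T$.

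To zero that block, I would invoke the column freedom from Theorem~\ref{thm1}: each of the $n-k$ columns of the $Z_2\to$ block may be added to any $X\to$ column without changing the isometry. Adding the linear combination of the last $n-k$ columns of $\begin{bmatrix} A \\ B \end{bmatrix}$ encoded by $S_{12}^T$ to the $X\to$ block cancels $S_{12}^T$ in the lower half, yielding $D = B\begin{pmatrix} S_{11} \\ 0 \end{pmatrix}$. Because the upper halves of these same columns are $S'B(\cdot)$, the operation simultaneously turns the original $C$ (the first $k$ columns of $(B^T)^{-1} \oplus S'BS$) into $(B^T)^{-1}\begin{pmatrix} I_k \\ 0 \end{pmatrix} \oplus G_n B\begin{pmatrix} S_{11} \\ 0 \end{pmatrix}$. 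Setting $G_k := S_{11}$ then produces exactly the claimed form, and symmetry of $G_n$ and $G_k$ is inherited from $S'$ and $S$.

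The only step requiring real care is the initial completion: one has to verify that the $n+k$ columns of $M_V$ genuinely extend to a symplectic matrix sharing the same $B$, which is really the content of Theorem~\ref{thm1} (existence of the ambient Clifford $C$). Once that is granted, the rest is direct bookkeeping on top of Theorem~\ref{theorem_clifford}, which is why the extension to isometries is indeed ``trivial''.
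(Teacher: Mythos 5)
Your proposal is correct and follows the same route as the paper's proof: apply the symplectic decomposition of Theorem~\ref{theorem_clifford} to an ambient full Clifford tableau, identify $G_n = S'$ and $G_k$ as the top-left $k\times k$ block of $S$, and then use the column freedom from Theorem~\ref{thm1} to cancel the spurious lower block of $S$'s first $k$ columns. The only cosmetic difference is direction: the paper takes the first $n+k$ columns of the already-decomposed full tableau, while you complete $M_V$ upward first, but the substance and the two key ingredients are identical.
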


\begin{proof}

From Eq.~\ref{sympl_form}, by taking the first $n+k$ columns we have an operator of the form 
\[ M_V = \left[ \begin{array}{c|c} G_nB & (B^T)^{-1} \begin{pmatrix} I_k \\ 0 \end{pmatrix} \oplus G_nB \begin{pmatrix} G_k \\ F \end{pmatrix} \\ \hline B & B \begin{pmatrix} G_k \\ F \end{pmatrix} \end{array} \right] \]
where $F \in F_2^{(n-k) \times k}$ is an arbitrary rectangular boolean matrix, $G_n$ is $S'$ in Eq.~\ref{sympl_form} and $G_k$ is the top block $k \times k$ of $S$. 

We rewrite $M_V$ as 
\[ M_V = \left[ \begin{array}{c|c|c} G_nB \begin{pmatrix} I_k \\ 0_{(n-k) \times k} \end{pmatrix} & G_nB \begin{pmatrix} 0_{k \times (n-k)} \\ I_{n-k} \end{pmatrix} & (B^T)^{-1} \begin{pmatrix} I_k \\ 0 \end{pmatrix} \oplus G_nB \begin{pmatrix} G_k \\ F \end{pmatrix} \\ \hline B \begin{pmatrix} I_k \\ 0_{(n-k) \times k} \end{pmatrix} & B \begin{pmatrix} 0_{k \times (n-k)} \\ I_{n-k} \end{pmatrix} & B \begin{pmatrix} G_k \\ F \end{pmatrix} \end{array} \right] \]

by splitting the first $n$ columns of $M_V$ and we recall from Theorem~\ref{thm1} that adding any column $k+1 \hdots n$ to any column $n+1 \hdots n+k$ of $M_V$ is free and does not change the functionality of the Clifford isometry. So with suitable column operations we can zero $F$. Hence the result.

\end{proof}

\begin{theorem}

Any Clifford isometry, up to Hadamard gates on the output qubits, has a normal form given by Eq.~\ref{isometry_form}.

\end{theorem}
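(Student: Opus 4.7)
The plan is to reduce the theorem to the preceding corollary by showing that, for every Clifford isometry $V$, a suitable layer of Hadamards applied on the output qubits produces a tableau whose bottom-left $n\times n$ block $B$ is invertible; the normal form of Eq.~\ref{isometry_form} then follows immediately.

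First I would translate the Hadamard action into a linear-algebra condition. By the row-swap update rule, applying Hadamards on a subset $S \subseteq \{1,\dots,n\}$ of output qubits swaps rows $m$ and $n+m$ of $M_V$ for each $m \in S$, so the new $B$-block is the matrix $B'$ whose $m$-th row equals $A[m,:]$ if $m \in S$ and $B[m,:]$ otherwise. The statement therefore reduces to the linear-algebra question: given the tableau of a Clifford isometry, exhibit a subset $S$ such that this row-transversal $B'$ is invertible over $\mathbb{F}_2$.

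Second, I would recast this question in symplectic terms. The first $n$ columns of $M_V$ encode the pairwise commuting, linearly independent Pauli operators $V Z_i V^{\dag}$; their Pauli vectors span a Lagrangian subspace $L \subset \mathbb{F}_2^{2n}$ for the standard symplectic form. Writing $L_S := \mathrm{span}(\{Z_i : i \notin S\} \cup \{X_i : i \in S\})$ for the ``coordinate Lagrangian'' associated with $S$, a quick unpacking of definitions shows that $B'$ is invertible if and only if $L \cap L_S = \{0\}$. The theorem thus boils down to the symplectic fact: every Lagrangian in $\mathbb{F}_2^{2n}$ admits a transverse coordinate Lagrangian.

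The key step, and the main obstacle, is proving this transversality. The cleanest route is via Edmonds' matroid intersection theorem applied to (i) the linear matroid on the $2n$ rows of $N = \bigl[\begin{smallmatrix} A \\ B \end{smallmatrix}\bigr]$ and (ii) the partition matroid whose blocks are the pairs $\{m, n+m\}$. The intersection condition asks that for every subset $X$ of rows, decomposed as $X = T_1 \sqcup (T_2 + n)$ with $T_1, T_2 \subseteq \{1,\dots,n\}$, one has $\mathrm{rank}(N|_X) \geq |T_1 \cap T_2|$. This reduces to the isotropy bound $\dim(L \cap V_{\bar X}) \leq n - |T_1 \cap T_2|$, where $V_{\bar X}$ is the coordinate subspace supported outside $X$; the bound follows from a standard dimension count using the radical (of dimension $|T_1 \triangle T_2|$) and the non-degenerate part (of dimension $2(n - |T_1 \cup T_2|)$) of the symplectic form restricted to $V_{\bar X}$, together with the fact that any isotropic subspace of a symplectic space is bounded in dimension by the radical plus half the non-degenerate dimension. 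Matroid intersection then delivers a transversal basis of size $n$, from which the desired $S$ is read off, and the corollary completes the proof.
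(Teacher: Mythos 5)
Your reduction matches the paper's: find a subset $S$ of output qubits so that the row transversal $B'$ (row $m$ from $A$ if $m\in S$, from $B$ otherwise) is invertible. At that point the paper simply observes that this is a known fact and cites Aaronson--Gottesman, which establishes it by an explicit Gaussian-elimination / row-echelon argument on the tableau. You instead supply a self-contained, non-constructive proof via Edmonds' matroid intersection. I checked the details: the equivalence $B'$ invertible $\Leftrightarrow L\cap L_S=\{0\}$ is right; the partition-matroid rank computation gives $r_2(E\setminus X)=n-|T_1\cap T_2|$; the identification $\mathrm{rank}(N|_X)=n-\dim(L\cap V_{\bar X})$ is right; the radical and non-degenerate dimensions of $\omega|_{V_{\bar X}}$ are $|T_1\triangle T_2|$ and $2(n-|T_1\cup T_2|)$ respectively; and $|T_1\triangle T_2|+(n-|T_1\cup T_2|)=n-|T_1\cap T_2|$, so the isotropy bound is exactly tight. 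So the argument is correct. The trade-off: the Aaronson--Gottesman route the paper cites is elementary and constructive (it tells you which Hadamards to apply, which matters if one actually implements the algorithm), whereas your matroid-intersection argument is cleaner conceptually, makes the ``every Lagrangian admits a transverse coordinate Lagrangian'' structure explicit, and generalizes more readily; but it is less elementary (it invokes Edmonds' theorem) and, as stated, only gives existence of $S$ rather than a procedure to find it --- though the polynomial-time matroid intersection algorithm does recover constructiveness if one wants it. One small wording point: where you write ``isotropic subspace of a symplectic space,'' you mean a space with a possibly degenerate alternating form; symplectic usually implies non-degenerate, and the whole point is that $\omega|_{V_{\bar X}}$ is degenerate in general.
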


\begin{proof}

It is sufficient to prove that for any tableau representation of a Clifford isometry the action of Hadamard gates on the output qubits, i.e, specific row swaps on the tableau, is enough to make $B$ invertible. This is a well-known result already proved in \cite{aaronson2004improved} for instance.

\end{proof}

\ \\

At first sight, it seems that we need to track the values of $G_n, G_k, B$ to perform the synthesis of a Clifford isometry. In fact, the knowledge of $G_n, G_k $ and a submatrix of $B$ is sufficient: 

\begin{theorem}

Given a Clifford isometry of the form given by Eq.~\ref{isometry_form}, if $G_k, G_n$ are equal to the null matrix and if $B = \begin{bmatrix} I_k & 0_{k \times (n-k)} \\ B' & B'' \end{bmatrix}$ for some boolean matrices $B', B''$, then the Clifford isometry acts as the $H^{\otimes n}$ operator on the quantum memory. 

\end{theorem}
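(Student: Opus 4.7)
The plan is to substitute $G_n=G_k=0$ into the normal form of Eq.~\ref{isometry_form} and show that the resulting tableau is exactly the tableau (of an isometry) that $H^{\otimes n}$ would produce, modulo the gauge freedom of Theorem~\ref{thm1} and the sign vector (which can always be fixed by a final layer of $Z$ gates, as noted in the discussion after the synthesis workflow).

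First I would evaluate each block of Eq.~\ref{isometry_form} at $G_n=G_k=0$. The top-left block collapses to $0_{n\times n}$, the bottom-right block to $0_{n\times k}$, the bottom-left block stays equal to $B$, and the top-right block reduces to $(B^T)^{-1}\!\left(\begin{smallmatrix}I_k\\0\end{smallmatrix}\right)$. Because $B$ is block lower triangular with identity top-left and (necessarily) invertible $B''$, the matrix $(B^T)^{-1}$ is block upper triangular with identity top-left too, so its first $k$ columns are simply $\left(\begin{smallmatrix}I_k\\0\end{smallmatrix}\right)$. Reading the tableau column by column then gives: for each $i\le n$, $C Z_i C^\dagger$ is a pure $X$-type Pauli whose support is the $i$-th column of $B$; and for each $i\le k$, $C X_i C^\dagger = Z_i$.

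Next I would invoke the characterization of the isometry from the proof of Theorem~\ref{thm1}, namely $V\ket{b}=(C X^{b}C^\dagger)\,C\ket{0}^{\otimes n}=Z^{b}\ket{\psi}$ with $\ket{\psi}=C\ket{0}^{\otimes n}$. The state $\ket{\psi}$ is stabilized by the $n$ images $C Z_i C^\dagger$, which are pure $X$-type Paulis whose coordinate vectors are exactly the columns of $B$. Since $B$ is invertible, these generators span the same abelian group as $X_1,\dots,X_n$, and therefore $\ket{\psi}=\ket{+}^{\otimes n}=H^{\otimes n}\ket{0}^{\otimes n}$ (up to the sign bits, which are absorbed at the end). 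Combining these facts,
\[
V\ket{b}=Z^{b} H^{\otimes n}\ket{0}^{\otimes n}=H^{\otimes n} X^{b}\ket{0}^{\otimes n}=H^{\otimes n}\bigl(\ket{b}\otimes\ket{0}^{\otimes n-k}\bigr),
\]
which is precisely the statement that $V$ acts on the padded input as $H^{\otimes n}$.

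The only subtle step is the stabilizer identification: recognizing that an invertible $X$-type stabilizer group is the stabilizer of $\ket{+}^{\otimes n}$ regardless of the specific basis chosen in $B$. The block form of $B$ is used lightly—just to guarantee invertibility—while the column-addition gauge from Theorem~\ref{thm1} (adding columns of the $Z_2$ block into the $Z_1$ and $X$ blocks) is what makes the choice of stabilizer basis irrelevant. No other obstacle arises, and no further circuit identity is needed beyond the elementary facts $HZH=X$, $HXH=Z$.
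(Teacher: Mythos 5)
Your proof is correct and reaches the same conclusion via a noticeably different path from the paper's. After the common observation that $G_n=G_k=0$ kills the diagonal blocks and that the block form of $B$ forces $(B^T)^{-1}\!\left(\begin{smallmatrix}I_k\\0\end{smallmatrix}\right)=\left(\begin{smallmatrix}I_k\\0\end{smallmatrix}\right)$, the paper stays entirely inside the linear-algebraic tableau manipulation: it invokes the column-addition gauge of Theorem~\ref{thm1} to reduce $B''$ to $I$ and $B'$ to $0$, lands on the explicit canonical tableau, and identifies it as $H^{\otimes k}\otimes\ket{+}^{\otimes(n-k)}$. You instead read the columns semantically — the $CZ_iC^\dagger$ are pure $X$-type Paulis with an invertible $X$-coordinate matrix, hence generate the stabilizer of $\ket{+}^{\otimes n}$, while $CX_iC^\dagger=Z_i$ — and then compute $V\ket{b}=Z^bH^{\otimes n}\ket{0}^{\otimes n}=H^{\otimes n}\bigl(\ket{b}\ket{0}^{\otimes n-k}\bigr)$ using $HZH=X$. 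The paper's argument buys a purely mechanical reduction consistent with the synthesis framework it is building; yours buys a shorter derivation that explains \emph{why} the claim is true from the stabilizer point of view and makes the role of the gauge freedom transparent (it is exactly the non-uniqueness of stabilizer generators). One point in your closing paragraph is slightly off: you say the block form of $B$ is used ``lightly — just to guarantee invertibility,'' but in fact that block form is doing real work in your argument. It is precisely what makes $(B^T)^{-1}\!\left(\begin{smallmatrix}I_k\\0\end{smallmatrix}\right)=\left(\begin{smallmatrix}I_k\\0\end{smallmatrix}\right)$ and hence $CX_iC^\dagger=Z_i$; without it you would get $CX_iC^\dagger=\prod_j Z_j^{(B^{-T})_{ji}}$, $V\ket{b}=Z^{B^{-T}[:,1:k]\,b}\ket{+}^{\otimes n}$, and the conclusion $V=H^{\otimes n}$ would fail. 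The gauge freedom of Theorem~\ref{thm1} lets you change the $Z$-stabilizer basis (which is why the specific $B',B''$ are immaterial), but it cannot remove the dependence of the $X$-columns on the block structure.
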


\begin{proof}

Replacing in Eq.~\ref{isometry_form} gives 
\[ M_V = \left[ \begin{array}{c|c} 0 & (B^T)^{-1} \begin{pmatrix} I_k \\ 0 \end{pmatrix} \\ \hline B & 0 \end{array} \right]. \]

Given that $B = \begin{bmatrix} I_k & 0_{k \times (n-k)} \\ B' & B'' \end{bmatrix}$, we also have $B^{-1} = \begin{bmatrix} I_k & 0_{k \times (n-k)} \\ B''' & B'''' \end{bmatrix}$ for some boolean matrices $B''', B''''$. Therefore, it is easy to check that 
\[ (B^T)^{-1} \begin{bmatrix} I_k \\ 0 \end{bmatrix} = \begin{bmatrix} I_k \\ 0 \end{bmatrix} \]
Overall, we have a new block structure for $M_V$, 
\[ M_V = \left[ \begin{array}{c|c|c} 0 & 0 & \begin{pmatrix} I_k \\ 0 \end{pmatrix} \\ \hline I_k & 0_{k \times (n-k)} & 0 \\ B' & B'' & 0 \end{array} \right]. \]

$B$ is invertible, therefore $B''$ is also invertible. We recall that we can add any column of $B''$ on any other column of $M_V$, it does not change the functionality of the Clifford isometry. So with suitable column operations we can reduce $B''$ to the identity operator and zero $B'$. We end with the operator 

\[ M_V = \left[ \begin{array}{c|c} 0_{k \times n} & I_k \\ 0_{(n-k) \times n}  & 0_{(n-k) \times k} \\ \hline  I_n & 0_{n \times k} \end{array} \right]. \]

which correspond to the $k$ to $n$ isometry $H^{\otimes k}\otimes\ket{+}^{n-k}$.

\end{proof}

This means that the last $n-k$ rows of $B$ or $B^{-1}$ do not influence the specification of the identity isometry and it is useless to keep track of their values in the synthesis process. Instead, we set 
\[ G = \begin{bmatrix} G_k & B_{k,n} \\ B_{k,n}^T & G_n \end{bmatrix} \]
where $B_{k,n} = B^{-1}[1:k,:] \in F_2^{k \times n}$. This is the \textit{graph-state form} of the Clifford isometry because $G$ can be seen as an adjacency matrix and as we will see the $CZ$ gates directly add or remove an edge from $G$, similarly to the graph-state formalism. We illustrate the computational process on the example of the isometry in Fig~\ref{fig::clifford_circuit}. We recall that we computed its tableau 
\[ M_V = \begin{bmatrix} A & C \\ B & D \end{bmatrix} = \begin{bmatrix} 0 & 0 & 0 & 1 & 1 && 0 & 1 & 1 \\ 0 & 1 & 1 & 1 & 0 && 0 & 0 & 1 \\ 1 & 1 & 1 & 0 & 0 && 0 & 1 & 0 \\ 1 & 0 & 1 & 1 & 0 && 0 & 0 & 0 \\ 1 & 0 & 0 & 1 & 0 && 1 & 0 & 1 \\ \\ 1 & 0 & 0 & 1 & 0 && 1 & 0 & 1 \\ 1 & 0 & 1 & 0 & 0 && 1 & 1 & 0 \\ 0 & 0 & 0 & 1 & 0 && 1 & 0 & 0 \\ 0 & 1 & 1 & 1 & 0 && 0 & 1 & 0 \\ 0 & 0 & 0 & 0 & 1 && 0 & 0 & 0 \end{bmatrix}. \]
One can check that $B$ is already full rank. We have 
\[ B^{-1} = \begin{bmatrix} 1 & 0 & 1 & 0 & 0 \\ 1 & 1 & 0 & 1 & 0 \\ 1 & 1 & 1 & 0 & 0 \\ 0 & 0 & 1 & 0 & 0 \\ 0 & 0 & 0 & 0 & 1 \end{bmatrix} \]
and from it follows the values of $G_n, G_k$ and $B_{kn}$: 
\[ G_n = AB^{-1} = \begin{bmatrix} 0 & 0 & 1 & 0 & 1 \\ 0 & 0 & 0 & 1 & 0 \\ 1 & 0 & 0 & 1 & 0 \\ 0 & 1 & 1 & 0 & 0 \\ 1 & 0 & 0 & 0 & 0 \end{bmatrix}, \]
\[ B_{kn} = \begin{pmatrix} I_k & 0_{k \times (n-k} \end{pmatrix} B^{-1} = \begin{bmatrix} 1 & 0 & 1 & 0 & 0 \\ 1 & 1 & 0 & 1 & 0 \\ 1 & 1 & 1 & 0 & 0 \end{bmatrix}. \]
\[ G_k = B_{kn}D = \begin{bmatrix} 0 & 0 & 1 \\ 0 & 0 & 1 \\ 1 & 1 & 1 \end{bmatrix},  \]

The goal is to reduce $G$ to 
\[ \left[ \begin{array}{c|cc} 0_k & I_k & 0_{k \times (n-k)} \\ \hline I_k & \multicolumn{2}{c}{\multirow{2}{*}{$0_{n \times n}$}} \\ 0_{n-k \times k} \end{array} \right]. \] 
Indeed, when $G_n=0, G_k=0$ and $B_{k,n} = \begin{bmatrix} I_k \\ 0_{(n-k) \times k} \end{bmatrix}$ we have seen from the proof above that $M_V$ acts as the $H^{\otimes n}$ operator. Adding $n$ Hadamard gates on the output side in the framework illustrated in Fig.~\ref{synthesis_circuit} reduces $M_V$ to the identity isometry and this finishes the synthesis.

\subsection{Available operations} \label{sec::ope}

We now list the action of the authorized operations on $G$. We write $e_i$ the $i$-th canonical vector, $e_{ij}$ the matrix with all zero except on entry $(i,j)$ and $E_{ij} = I \oplus e_{ij}$. For clarity we also note $\tilde{e}_{ij} = e_{ij} \oplus e_{ji}$ if $i \neq j$ and $\tilde{e}_{ii} = e_{ii}$. The proofs are given in the Appendix.
\begin{itemize}
  \item $S$ on the output qubit $i$ ($1 \leq i \leq n$): 
  \[ G \leftarrow G \oplus \tilde{e}_{i+k, i+k} \]
  \item $S$ on the input qubit $i$ ($1 \leq i \leq k$):
  \[ G \leftarrow G \oplus \tilde{e}_{i, i} \]
  \item $CZ$ on the output qubits $i,j$ ($1 \leq i < j \leq n$):
  \[ G \leftarrow G \oplus \tilde{e}_{i+k, j+k} \]
  \item $CZ$ on the input qubits $i,j$ ($1 \leq i < j \leq n$):
  \[ G \leftarrow G \oplus \tilde{e}_{ij} \]
  \item $CNOT$ on the output qubits with control $i$, target $j$ ($1 \leq i < j \leq n$): 
  \[ G \leftarrow E_{i+k,j+k}GE_{j+k,i+k}. \]
  \item $CNOT$ on the input qubits with control $i$, target $j$ ($1 \leq i < j \leq k$): 
  \[ G \leftarrow E_{ij}GE_{ji}. \]
\end{itemize}
Contrary to the other gates that preserve the structure of Eq.~\ref{isometry_form}, the Hadamard gate and the $R_x(\pi/2)$ gate do not guarantee it because the matrix $B$ can become non invertible. We need to separate two cases.
\begin{itemize}  
  \item if $G_n[i,i] = 0$, then we can apply an $R_x(\pi/2)$ on the output qubit $i$:
  \[ G \leftarrow G \oplus G[:,k+i]G[:,k+i]^T. \]
  \item if $G_n[i,i] = 1$, then we apply an Hadamard gate on the output qubit $i$, or equivalently we apply $S R_x(\pi/2) S$, and the action on $G$ is 
  \[ G \leftarrow G \oplus g_ig_i^T \]
  where $g_i = G[:,k+i] \oplus e_i$. 
  \item if $G_k[i,i] = 0$, then we can apply an $R_x(\pi/2)$ on the input qubit $i$, we get 
  \[ G \leftarrow G \oplus G[:,i]G[:,i]^T. \]
  \item if $G_k[i,i] = 1$, then we apply an Hadamard gate on the input qubit $i$, or equivalently we apply $S R_x(\pi/2) S$, and the action on $G$ is 
  \[ G \leftarrow G \oplus g_ig_i^T \]
  where where $g_i = G[:,i] \oplus e_i$. 
\end{itemize}

The available operations are summarized in Table~\ref{framework}. For practical reasons, we also give the available operations as actions on the submatrices $G_n, G_k$ and $B_{k,n}$ in Table~\ref{framework2}. We will mostly design algorithms in the gate set \{CNOT, CZ, S\}, without the Hadamard gate, and it is easier to give intuitions on their behavior by looking at the action of the gates on $G_n, G_k$ and $B_{k,n}$ separately rather than on the full graph matrix $G$. Notably, we want to emphasize the following key points: 
\begin{itemize}
    \item CZ gates and S gates only flip an entry of either $G_k$ or $G_n$, depending on which side those gates are applied, 
    \item a CNOT circuit on the output side will apply an operator $C$ on $B_{k,n}^T$, giving the new operator $CB_{k,n}^T$ and will conjugate $G_n$ by the same operator, giving $CG_nC^T$,
    \item similarly a CNOT circuit on the input side will apply an operator $C$ on $B_{k,n}^T$, giving the new operator $B_{k,n}^TC$ and will conjugate $G_k$ by the same operator, giving $C^TG_kC$.
\end{itemize}

\subsection{Summary: the synthesis framework}

We have shown that any $k$ to $n$ Clifford isometry, up to Hadamard gates, is completely determined by a symmetric boolean matrix $G \in F_2^{(n+k) \times (n+k)}$. $G$ has the following block structure 
\[ G = \begin{bmatrix} G_k & B_{k,n} \\ B_{k,n}^T & G_n \end{bmatrix} \] 
where $G_k \in F_2^{k \times k}, G_n \in F_2^{n \times n}$ are symmetric and $B_{k,n} \in F_2^{k \times n}$ is full rank. Synthesizing the isometry is equivalent to reducing $G$ to the identity operator 
\[ \left[ \begin{array}{c|cc} 0_k & I_k & 0_{k \times n-k} \\ \hline I_k & \multicolumn{2}{c}{\multirow{2}{*}{$0_{n \times n}$}} \\ 0_{n-k \times k} \end{array} \right] \] 
with the available operations given in Tables~\ref{framework} and ~\ref{framework2}.

\begin{table}
\resizebox{1.1\columnwidth}{!}{
\begin{tabular}{ccccc}
\toprule
\toprule
Clifford gate & Action on $G$ & Matrix interpretation & Graph interpretation & Range of validity \\
\cmidrule(lr){1-1} \cmidrule(lr){2-2} \cmidrule(lr){3-3} \cmidrule(lr){4-4} \cmidrule(lr){5-5}
$S_i$ & $G[i,i] \leftarrow G[i,i] \oplus 1$ & Flips diagonal entry & Complements loop of node $i$ & $ 1 \leq i \leq n+k$ \\
\\
\multirow{2}{*}{$CZ_{i,j}$} & $G[i,j] \leftarrow G[i,j] \oplus 1$ & Flips two symmetric &\multirow{2}{*}{Complements edge $(i,j)$} & $ 1 \leq i, j \leq k$ \\
& $G[j,i] \leftarrow G[j,i] \oplus 1$ &entries&& or $ k+1 \leq i, j \leq n+k$ \\
\\
\multirow{2}{*}{$CNOT_{i,j}$} & $G[i,:] \leftarrow G[i,:] \oplus G[j,:]$ & Elementary row & Complements the neighborhood & $ 1 \leq i, j \leq k$  \\
& $G[:,i] \leftarrow G[:,i] \oplus G[:,j]$ & and column operation & of node $j$ for node $i$ & or $ k+1 \leq i, j \leq n+k$ \\
\\
$H_i$ & $G \leftarrow G \oplus gg^T $ & Rank-one update & Local complementation on node $i$ & $ 1 \leq i \leq n+k$ \\
(if $G[i,i]=1$)  & $g = G[:,i] \oplus e_i$ && (with loops except on node $i$) \\
\\
$(R_x)_i$ & $G \leftarrow G \oplus gg^T$ & Rank-one update & Local complementation on node $i$ & $ 1 \leq i \leq n+k$  \\
(if $G[i,i]=0$) & $g = G[:,i]$  && (with loops) \\
\bottomrule
\end{tabular}}
\caption{Available operations for the synthesis of a $k$ to $n$ Clifford isometry $G$.}
\label{framework}
\end{table}

\begin{table}
\centering
\begin{tabular}{cccccl}
\toprule
\toprule
Clifford gate & Side & Qubits & Acts on & & Action \\
\cmidrule(lr){1-1} \cmidrule(lr){2-2} \cmidrule(lr){3-3} \cmidrule(lr){4-4} \cmidrule(lr){6-6}
$S$ & Input & $i$ & $G_k$ && $G_k[i,i] \leftarrow G_k[i,i] \oplus 1$ \\
\\
$S$ & Output & $i$ & $G_n$ && $G_n[i,i] \leftarrow G_n[i,i] \oplus 1$ \\
\\
\multirow{2}{*}{$CZ$} & \multirow{2}{*}{Input} & \multirow{2}{*}{$i,j$} & \multirow{2}{*}{$G_k$} && $G_k[i,j] \leftarrow G_k[i,j] \oplus 1$ \\
&&&&& $G_k[j,i] \leftarrow G_k[j,i] \oplus 1$ \\
\\
\multirow{2}{*}{$CZ$} & \multirow{2}{*}{Output} & \multirow{2}{*}{$i,j$} & \multirow{2}{*}{$G_n$} && $G_n[i,j] \leftarrow G_n[i,j] \oplus 1$ \\
&&&&& $G_n[j,i] \leftarrow G_n[j,i] \oplus 1$ \\
\\
\multirow{3}{*}{$CNOT$} & \multirow{3}{*}{Input} & \multirow{3}{*}{$i,j$} & \multirow{3}{*}{$G_k, B_{k,n}$} && $G_k[i,:] \leftarrow G_k[j,:] \oplus G_k[i,:]$ \\
&&&&& $G_k[:,i] \leftarrow G_k[:,j] \oplus G_k[:,i]$ \\
&&&&& $B_{k,n}[i,:] \leftarrow B_{k,n}[j,:] \oplus B[i,:]$ \\
\\
\multirow{3}{*}{$CNOT$} & \multirow{3}{*}{Output} & \multirow{3}{*}{$i,j$} & \multirow{3}{*}{$G_n, B_{k,n}$} && $G_n[i,:] \leftarrow G_n[j,:] \oplus G_n[i,:]$ \\
&&&&& $G_n[:,i] \leftarrow G_n[:,j] \oplus G_n[:,i]$ \\
&&&&& $B_{k,n}[:,i] \leftarrow B_{k,n}[:,j] \oplus B[:,i]$ \\
\\
\multirow{5}{*}{$H/R_x$} & \multirow{5}{*}{Input} & \multirow{5}{*}{$i$} & \multirow{5}{*}{$G_k, G_n, B_{k,n}$} && $G_k \leftarrow G_k \oplus gg^T$ \\
&&&&& $G_n \leftarrow G_n \oplus B_{k,n}[i,:]^TB_{k,n}[i,:]$ \\
&&&&& $B_{k,n} \leftarrow B_{k,n} \oplus gB_{k,n}[i,:]$ \\
&&&&& ($g = G_k[:,i]$ if $G_k[i,i]=0$, \\
&&&&& $g = G_k[:,i] \oplus e_i$ otherwise) \\
\\
\multirow{5}{*}{$H/R_x$} & \multirow{5}{*}{Output} & \multirow{5}{*}{$i$} & \multirow{5}{*}{$G_k, G_n, B_{k,n}$} && $G_n \leftarrow G_n \oplus gg^T$ \\
&&&&& $G_k \leftarrow G_k \oplus B_{k,n}[:,i]^TB_{k,n}[:,i]$ \\
&&&&& $B_{k,n} \leftarrow B_{k,n} \oplus B_{k,n}[:,i]g^T$ \\
&&&&& ($g = G_n[:,i]$ if $G_n[i,i]=0$, \\
&&&&& $g = G_n[:,i] \oplus e_i$ otherwise) \\
\bottomrule
\end{tabular}
\caption{An alternative way to present the available operations for the synthesis of a $k$ to $n$ Clifford isometry $G = [G_k \; B_{k,n} ; B_{k,n}^T \; G_n]$. }
\label{framework2}
\end{table}

Overall, everything acts as if we have a graph state on $n+k$ qubits on which only a subset of operations are available: two-qubit gates are only possible among one set of $k$ qubits or the complementary set of $n$ qubits. Contrary to the graph state synthesis where one wants to remove every edge of the graph, we look for an "identity" layout of the edges where $k$ qubits of the first set are solely connected to one qubit of the other set. This is illustrated in Figure~\ref{fig::graph} with the application of one CZ and one CNOT gate. This form is a generalization of the form derived in \cite{duncan2020graph} but we did not use the ZX-calculus in our proof and we provide a more in-depth characterization of the synthesis framework implied by this graph-state form.

We remind that our framework is not universal in the sense that we are not authorized to apply any Clifford gate, therefore we cannot generate any Clifford circuit. Solving optimally an instance of our framework does not guarantee that we have an optimal implementation of the Clifford isometry.

We now review the possibilities offered by this new framework and propose some algorithms to perform the synthesis of Clifford isometries.

\begin{figure}

\center
\subfloat[Each submatrix $G_n, G_k, B_{k,n}$ corresponds to adjacency relationships between nodes of the graph.]{\includegraphics[scale=0.9]{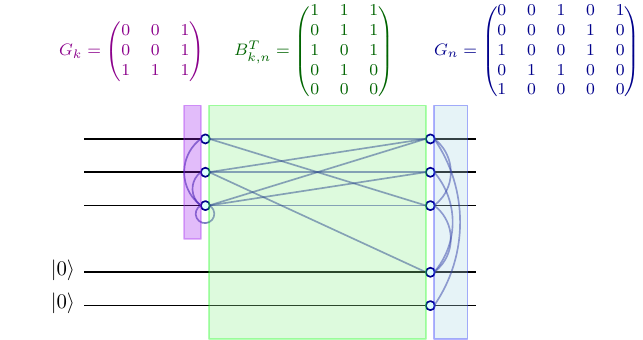}} 
\\
\hspace*{-2cm}
\subfloat[Similarly to the workflow explained in Fig.~\ref{synthesis_circuit}, one can add a gate and its inverse on the input or output side without changing the functionality of the global circuit. One gate is merged to a circuit and the other is absorbed by the graph state which is modified accordingly. In red the entries, rows or columns that are modified by the gate. The corresponding added edges are also in red, removed edges are red and dotted.]{\includegraphics[scale=0.9]{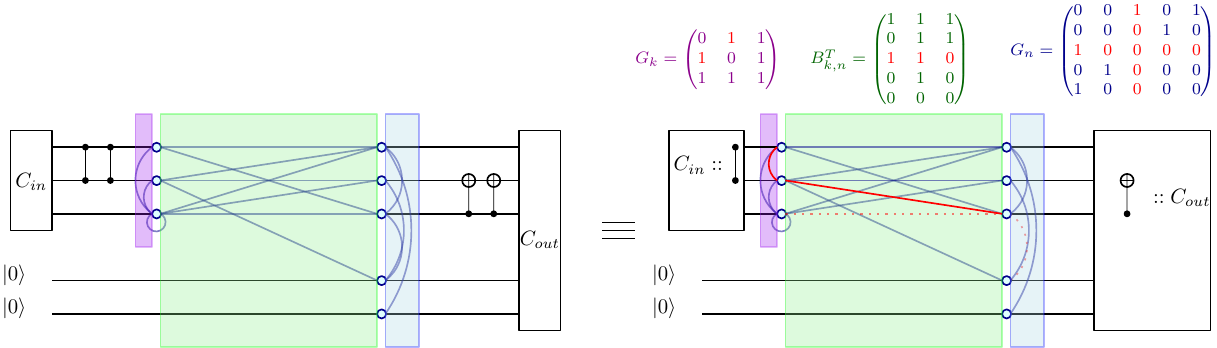}}
\\
\subfloat[Once the identity graph is obtained, the isometry is equivalent to a layer of Hadamard gates and this ends the synthesis.]{\includegraphics[scale=0.9]{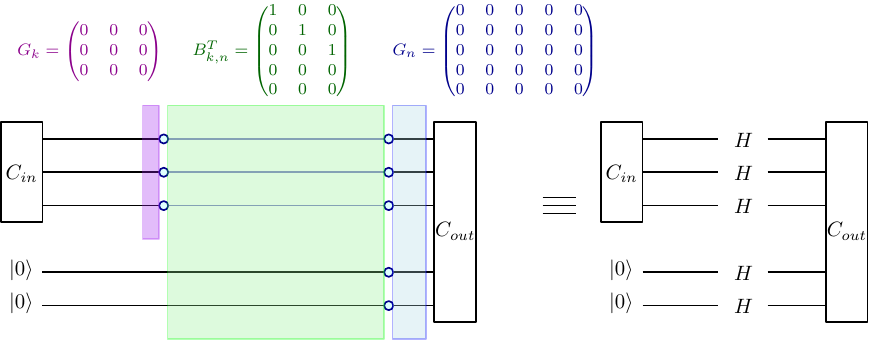}}
\caption{Illustration of the graph-state framework.}
\label{fig::graph}
\end{figure}

\section{Synthesis algorithms}

\subsection{Normal forms} \label{normal_forms}

One advantage of this framework is that normal forms can be derived quite easily with short proofs. We now propose some of them. 
Notably, we show that we can recover similar structures to some previous normal forms found in the literature. By similar structures we mean that the number and properties of the 2-qubit gate layers are the same, but our own normal forms may differ in the 1-qubit gate layers. Given that the main cost in the implementation of Clifford isometries lie in the execution of the 2-qubit gate layers, we believe that our own normal forms can be assimilated as the ones of the literature despite some differences in the 1-qubit gate layers.

We use the following terminology: -S- corresponds to a layer of S gates, -C- corresponds to a layer of CNOT gates, -CZ- corresponds to a layer of CZ gates and -H- corresponds to a layer of H gates. Given that the gates on the input side are only applied to the first $k$ qubits, and that the gates in the output side can be applied to the whole $n$ qubits, we will note the layers with a subscript to emphasize when it concerns only $k$ or $n$ qubits.

Note that our normal forms always end with a layer -H\textsubscript{n}- of Hadamard gates: this layer transforms any isometry into its graph-state form. For each qubit, there may or may not be an Hadamard gate applied. We also always have a full layer -H- of $n$ Hadamard gates "in the middle" of our normal forms that ends the synthesis of our graph-state framework. This layer draws a boundary between the operations on the input qubits and the operations on the output qubits. For clarity we omit the subscript for this layer.

Note also that the layers applied on the output side have to be read in reverse order to understand their action in our graph-state framework.

\subsubsection{S\textsubscript{k}-CZ\textsubscript{k}-H-C\textsubscript{n}-CZ\textsubscript{n}-S\textsubscript{n}-H\textsubscript{n}.}

The layers of -S- and -CZ- gates on the input side zero $G_k$. The layers -S- and -CZ- on the output side zero $G_n$. Finally the -C- layer reduces $B$ to the identity. Note that we are free to place these three last layers (CNOT, S and CZ) as we like: this will not change the -C- layer but we have to adjust the -S- and -CZ- layers depending on the value of $G_n$ that can be changed by the action of the CNOT layer. In fact, as long as a suitable phase polynomial is implemented we have a valid implementation of the Clifford isometry.

With a full Clifford operator, i.e when $k=n$, we recover the normal form -H-S-CZ-C-H-CZ-S-H- given in \cite{duncan2020graph}. Note that, in this particular case, the -C- layer can be placed before or after the -H- layer: the only difference is that we will act on the $B$ or $B^T$ part of $G$. We also have one -H- layer less.

\begin{corollary}

Given a $k$ to $n$ Clifford isometry $V$ under graph-state form $\begin{bmatrix} G_k & B_{k,n} \\ B_{k,n}^T & G_n \end{bmatrix}$, then, up to Pauli operators, 
\begin{equation} V = \frac{1}{\sqrt{2}^n} \sum_{x \in \{0,1\}^k} \sum_{y \in \{0,1\}^n} e^{i \frac{\pi}{2} \left(x^T G_k x + y^TG_ny \right) + i\pi x^T B_{k,n} y} \ket{y} \bra{x} \label{iso_eq} \end{equation}

\end{corollary}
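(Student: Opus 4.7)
The plan is to specialise the normal form derived just above, namely $S_k$-$CZ_k$-$H$-$C_n$-$CZ_n$-$S_n$ (the trailing $H_n$ of the full normal form would further reduce the graph-state form all the way to the strict identity isometry and can be dropped here), and to compute $V\ket{x}$ for $x \in \{0,1\}^k$ by tracking a computational basis state through each layer in turn.

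First, I would apply the input-side layers $S_k$ and $CZ_k$ to $\ket{x}$: the $S$ gates, controlled by the diagonal entries of $G_k$, contribute the phases $i^{G_k[i,i]\,x_i}$, and the $CZ$ gates, controlled by the off-diagonal entries, contribute $(-1)^{G_k[i,j]\,x_i x_j}$ for $i<j$. Using $x_i^2 = x_i$ on $\{0,1\}$, these collapse into the single fourth-root-of-unity factor $e^{i\pi/2 \cdot x^T G_k x}$. After implicitly appending $\ket{0}^{\otimes (n-k)}$ and applying the middle Hadamard layer $H^{\otimes n}$, the state becomes
\begin{equation*}
\frac{e^{i\pi/2 \cdot x^T G_k x}}{\sqrt{2^n}} \sum_{z \in \{0,1\}^n} (-1)^{x^T z_{[1:k]}} \ket{z}.
\end{equation*}
The CNOT layer $C_n$ then acts as an invertible $\mathbb{F}_2$-linear map $L$ on the computational basis; by its synthesis-time construction it is precisely the CNOT circuit that reduces the block $B$ of the tableau to the identity, so after the substitution $y = Lz$ the sign exponent rewrites as $x^T B_k y$, i.e.\ the factor becomes $e^{i\pi\, x^T B_k y}$. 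Finally the output-side $CZ_n$ and $S_n$ layers contribute $e^{i\pi/2 \cdot y^T G_n y}$ by exactly the same computation as on the input side. Multiplying the four factors and summing over $y$ yields the claimed formula.

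The main obstacle is justifying that the CNOT layer $C_n$ really implements the intended linear map $L$, namely one whose inverse has $B_k$ as its first $k$ rows. This has to be extracted from the update rules of Table~\ref{framework2}: a CNOT between output qubits $i,j$ acts on $B$ via the column operation $B[:,i] = B[:,i] \oplus B[:,j]$, and the synthesis is precisely the sequence of such operations that drives $B$ to its identity target. Dualising the composition of these column operations to its action on kets shows that $L$ is the inverse of the overall linear transformation on $B$, so that the first $k$ rows of $L^{-1}$ are exactly $B_k = B^{-1}[1:k,:]$ by definition, while the remaining $n-k$ rows are irrelevant to the specification of $V$ (as already noted after Theorem~\ref{thm1}). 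The qualifier "up to Pauli operators" absorbs the sign vector $r$ that was tracked separately throughout the synthesis, which can always be cleaned up by a final layer of $X$ and $Z$ gates.
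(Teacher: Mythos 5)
Your proof is correct and follows essentially the same route as the paper's: both apply the $S_k$-$CZ_k$-$H$-$C_n$-$CZ_n$-$S_n$ normal form layer by layer and track the phase and state-label transformations, the only cosmetic difference being that you track the ket $V\ket{x}$ while the paper tracks the whole operator as a bra-ket sum. Your justification that the CNOT layer acts as $\ket{z}\mapsto\ket{Bz}$, so that the $(-1)$-exponent picks up $B_k = B^{-1}[1{:}k,:]$ after the change of variable, is if anything a touch more explicit than the paper's bare assertion $\ket{y}\to\ket{By}$ followed by the substitution $y\leftarrow By$.
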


\begin{proof}

We start from the identity isometry and we check that the normal form gives the expected formula. We write $I_{n,k}$ the $n \times n$ matrix with the first $k$ columns equal to the identity, and $0$ everywhere else.

\begin{align*} \sum_{x \in \{0,1\}^k} \ket{x}\ket{0}^{\otimes n-k}\bra{x} & \xrightarrow[]{CZ+S} & \sum_{x \in \{0,1\}^k} e^{i \frac{\pi}{2} x^T G_k x} \ket{x}\ket{0}^{\otimes n-k}\bra{x} \\
& \xrightarrow[]{H^{\otimes n}} & \frac{1}{\sqrt{2}^n}\sum_{x \in \{0,1\}^k} \sum_{y \in \{0,1\}^n} e^{i \frac{\pi}{2} x^T G_k x + i\pi \sum_{j=1}^k x_jy_j} \ket{y} \bra{x} \\
& \xrightarrow[]{CNOT} & \frac{1}{\sqrt{2}^n}\sum_{x \in \{0,1\}^k} \sum_{y \in \{0,1\}^n} e^{i \frac{\pi}{2} x^T G_k x + i\pi x^T I_{n,k} y} \ket{B y} \bra{x} \\
& \xrightarrow[]{CZ+S} & \frac{1}{\sqrt{2}^n}\sum_{x \in \{0,1\}^k} \sum_{y \in \{0,1\}^n} e^{i \frac{\pi}{2} \left(x^T G_k x + y^TB^TG_nBy \right) + i\pi x^T I_{n,k} y} \ket{B y} \bra{x} \\
& \xrightarrow[]{y \leftarrow B y} & \frac{1}{\sqrt{2}^n}\sum_{x \in \{0,1\}^k} \sum_{y \in \{0,1\}^n} e^{i \frac{\pi}{2} \left(x^T G_k x + y^TG_ny \right) + i\pi x^T I_{n,k} B^{-1} y} \ket{y} \bra{x} \\
& \xrightarrow[]{B_{k,n} = B^{-1}[1:k,:] = I_{n,k}B^{-1}} & \frac{1}{\sqrt{2}^n}\sum_{x \in \{0,1\}^k} \sum_{y \in \{0,1\}^n} e^{i \frac{\pi}{2} \left(x^T G_k x + y^TG_ny \right) + i\pi x^T B_{k,n} y} \ket{y} \bra{x} \\
\end{align*}

\end{proof}

\subsubsection{S\textsubscript{k}-C\textsubscript{k}-S\textsubscript{k}-H-C\textsubscript{n}-S\textsubscript{n}-C\textsubscript{n}-S\textsubscript{n}-H\textsubscript{n}.}

We can remove any use of -CZ- layers with the use of CNOT gates. Any symmetric boolean matrix $M$ can be written $M = LDL^T \oplus D'$ where $L$ is lower triangular and $D, D'$ are diagonal \cite{aaronson2004improved}. This applies to $G_k$ and $G_n$. Write $G_k = LDL^T \oplus D'$. The first layer of -S- gates on the input side adds $D'$ to $G_k$ such that we get $G_k \oplus D' = LDL^T$. Then, referring to the action of CNOT circuits in Section~\ref{sec::ope}, the -C- layer only needs to implement the triangular operator $L^{-1}$ on the output side to reduce $G_k \oplus D'$ to $D$ which is zeroed with a second layer of -S- gates. We have a similar argument to treat the zeroing of $G_n$. The last layer -C- on the output side ultimately reduces $B$ to the identity.

For the full Clifford operator case, we recover a normal form similar to -C-S-C-S-H-C-S-C-S- given in \cite{maslov2018shorter}.
In this case each -C- stage implements a triangular operator. This is also true in our own normal form as two -C- blocks are already triangular and the -C- block that reduces $B$ to the identity can be written as a product of two triangular matrices.

\subsubsection{Example}

We illustrate the two normal forms given above with the concrete example of Fig.~\ref{fig::graph}. The exact circuits are given and for each layer we show its effect on the graph matrix $G$ and how we eventually recover the identity. This is presented in Figures~\ref{fig::normal_form1} and \ref{fig::normal_form2}.

\begin{figure}

\center

\includegraphics[scale=1]{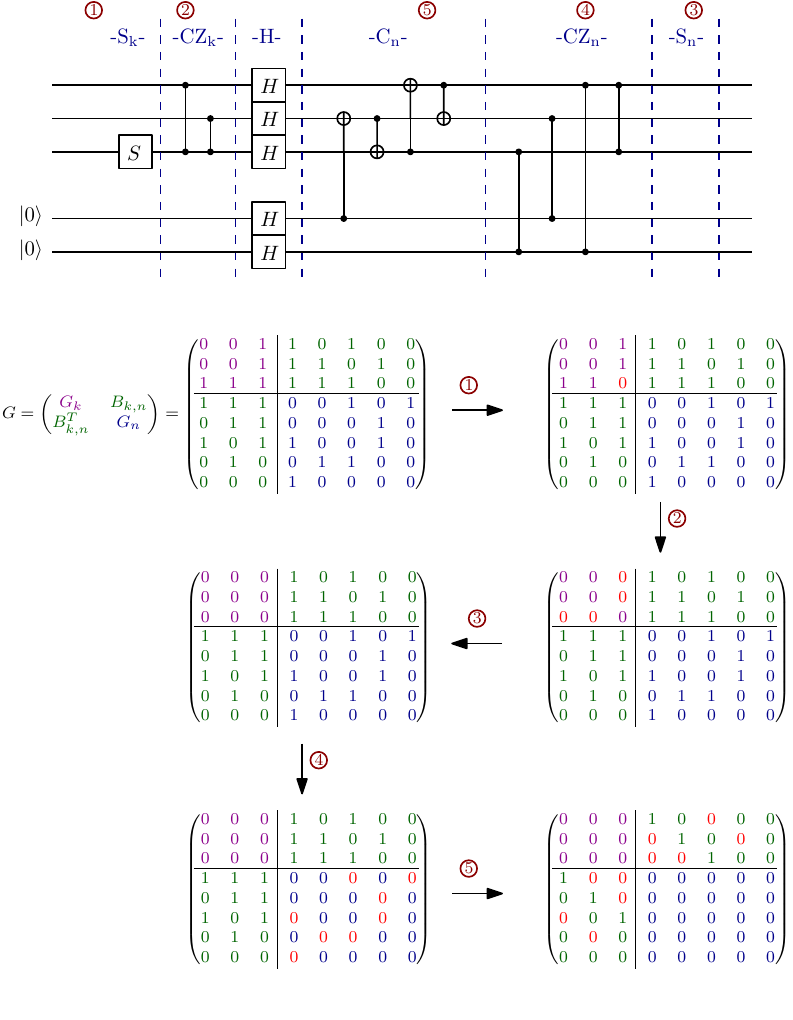}

\caption{Normal form S\textsubscript{k}-CZ\textsubscript{k}-H-C\textsubscript{n}-CZ\textsubscript{n}-S\textsubscript{n}-H\textsubscript{n} for the Clifford isometry given in Fig.~\ref{fig::graph}. We show how each layer acts on the original graph state matrix to eventually reduce it to the identity. To do so, we recall that we read the circuit on the input side from left to right and the circuit on the output side from right to left, as explained in Figures \ref{synthesis_circuit} and \ref{fig::graph}.}
\label{fig::normal_form1}
\end{figure}

\begin{figure}
\vspace*{-2cm}
\hspace*{-2cm}
\includegraphics[scale=0.9]{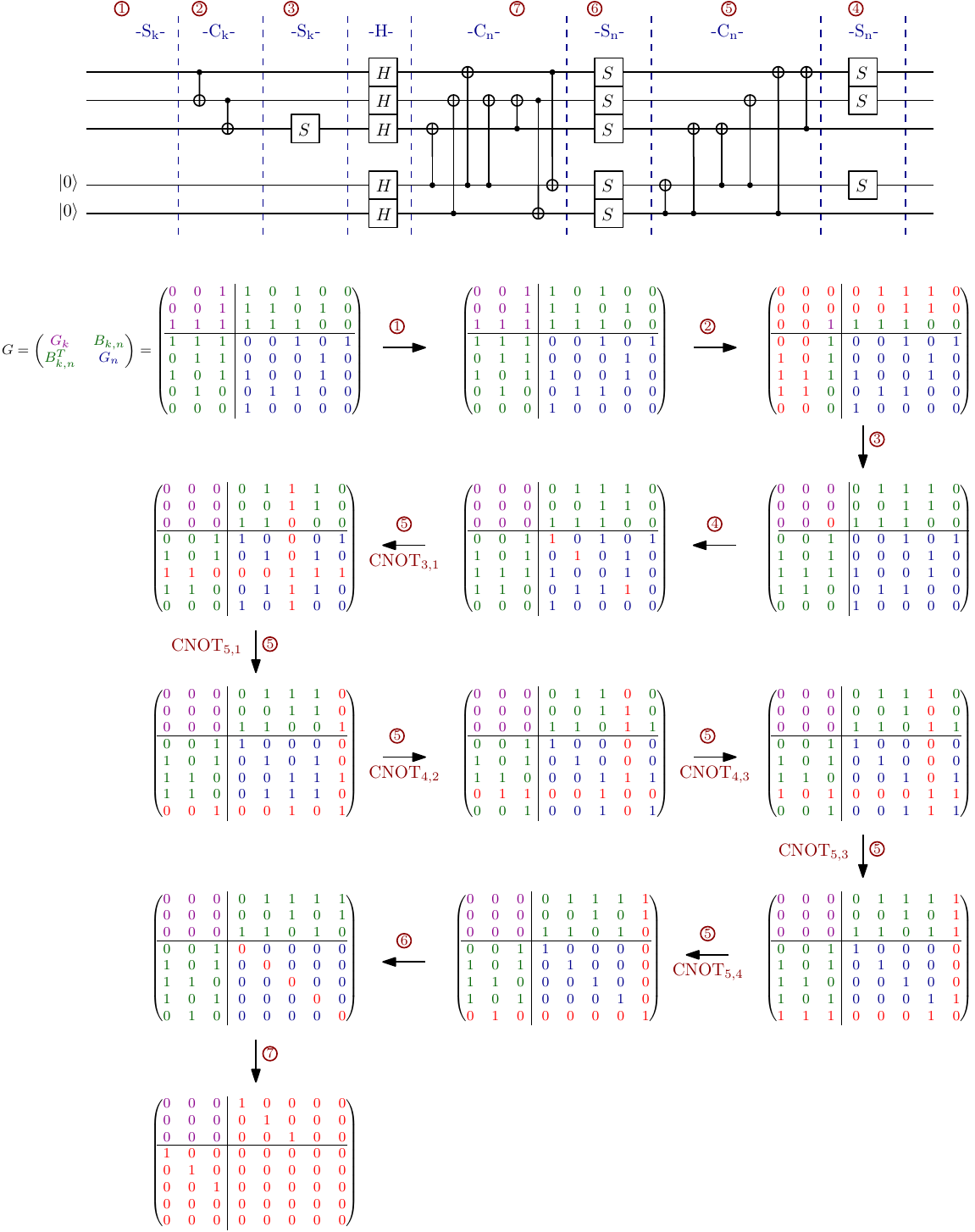}

\caption{Normal form S\textsubscript{k}-C\textsubscript{k}-S\textsubscript{k}-H-C\textsubscript{n}-S\textsubscript{n}-C\textsubscript{n}-S\textsubscript{n}-H\textsubscript{n} for the Clifford isometry given in Fig.~\ref{fig::graph}. We show how each layer acts on the original graph state matrix to eventually reduce it to the identity. To do so, we recall that we read the circuit on the input side from left to right and the circuit on the output side from right to left, as explained in Figures \ref{synthesis_circuit} and \ref{fig::graph}. Two -C- layers implement triangular operators. We explicit a step-by-step simulation of the right-most -C- layer to show the effect on both $B_{k,n}$ and $G_n$.}
\label{fig::normal_form2}
\end{figure}

\subsubsection{Other normal forms for Clifford operators.}

We show how other normal forms in the literature for Clifford operators can also be derived from our framework.

\paragraph{-C-CZ-S-H-S-CZ-C- \cite{maslov2018shorter}}

This normal form consists in an application of a stage of CNOT gates, then a stage of CZ gates and finally a stage of S stages on both the input and the output side (remember that the operations done on the output side are reversed). Our normal form also has an additional -H- layer at an ultimate layer.

It is straightforward to define suitable operators for each stage: 
\begin{itemize}
  \item the -C- stage on the input side can implement any operator $C_1$ and the -C- stage on the output side can implement any operator $C_2$, as long as $C_1 B C_2 = I$,
  \item the -CZ- and -S- stages on the input side are defined such that it implements the graph state $C_1 G_k C_1^T$,
  \item the -CZ- and -S- stages on the output side are defined such that it implements the graph state $C_2^T G_n C_2$.
\end{itemize}

Naturally, as explained in \cite{maslov2018shorter}, the order of the -C-, -CZ- and -S- stages are free. In fact, as long as a suitable phase polynomial is implemented, then we have a valid implementation of the Clifford operator. 

\paragraph{-C-CZ-S-H-CZ-S-H- \cite{bataille2021reduced}}

Similar to the normal form -C-CZ-S-H-S-CZ-C-, this normal form highlights the fact that only one -C- stage is sufficient on one of the two sides. Again, here is a viable construction in our framework: 
\begin{itemize}
  \item reduce $B$ to the identity with the -C- stage,
  \item reduce the new $G_k$ and $G_n$ to zero with the -CZ- and -S- stages on both input and output sides.
\end{itemize}

\subsubsection{Normal forms for stabilizer states}

We explicit some normal forms for the special case of stabilizer states. In the case $k=0$ the normal forms greatly simplify because we only need to reduce a symmetric matrix $G_n$ to $0$ with no input side as all input qubits are in the initial state $\ket{0}$.

\paragraph{-H-CZ\textsubscript{n}-S\textsubscript{n}-H\textsubscript{n}}

Stabilizer states are equivalent to graph states up to local Clifford, see e.g \cite{PhysRevA.69.022316} for a proof. Given the definition of a graph state the normal form follows. In our framework, the -CZ- and -S- layers zero $G_n$ component by component.

\paragraph{-H-S\textsubscript{n}-C\textsubscript{n}-S\textsubscript{n}-H\textsubscript{n}}

Using a similar technique: writing $G_n = LDL^T \oplus D'$, we zero $G_n$ by first adding $D'$ with a layer of $S$ gates, then the CNOT layer implements $L^{-1}$ and the second layer of $S$ implements $D$.

\subsection{Synthesis in \{CNOT, S, CZ\}}

Overall, the structure of the normal forms given in Section~\ref{normal_forms} are very similar: two phase polynomials are implemented from either side of the Hadamard layer, the differences being in the way those phase polynomials are implemented. If we restrict the synthesis of $G$ in the gate set \{CNOT, S, CZ\}, we necessarily implement phase polynomials and we show that only a partial specification of the phase polynomials is sufficient.

\begin{theorem}

Let $V$ be a $k$ to $n$ Clifford isometry $V$ under graph-state form specified by its graph matrix $G = \begin{bmatrix} G_k & B_{k,n} \\ B_{k,n}^T & G_n \end{bmatrix}$. An operator of the form
\begin{equation} P_1 H^{\otimes n} P_2 \label{pp_eq} \end{equation}
where $P_1, P_2$ are phase polynomials implements $V$ if and only if 
\[ P_1^{\dag} = \sum_{x \in \{0,1\}^n} e^{i\frac{\pi}{2} x^TG_nx} \ket{A_1x}\bra{x} \]
\[ P_2 = \sum_{x \in \{0,1\}^k} e^{i\frac{\pi}{2} x^TG_kx} \ket{A_2x}\ket{0}^{\otimes n-k}\bra{x} \]
such that
\[ \begin{bmatrix} A_2^T & 0 \end{bmatrix} A_1 = B_{k,n} \]
where $A_1$ is $n \times n$, $A_2$ is $k \times k$.

\end{theorem}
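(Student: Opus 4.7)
My plan is to prove both directions by directly computing the matrix elements of $P_1 H^{\otimes n} P_2$ on a computational basis input and comparing them to the expression for $V$ derived in Eq.~\ref{iso_eq}. For $x \in \{0,1\}^k$, $P_2\ket{x} = e^{i\pi x^T G_k x/2}\ket{A_2 x}\ket{0}^{\otimes n-k}$. Applying $H^{\otimes n}$ via the identity $H^{\otimes n}\ket{w} = 2^{-n/2}\sum_z (-1)^{w\cdot z}\ket{z}$ with $w = (A_2 x, 0)$ turns the inner product $w\cdot z$ into the bilinear form $x^T [A_2^T\;|\;0]\,z$. Finally, the defining relation $P_1^\dagger = \sum_w e^{i\pi w^T G_n w/2}\ket{A_1 w}\bra{w}$ gives $P_1 = \sum_w e^{-i\pi w^T G_n w/2}\ket{w}\bra{A_1 w}$, and the change of variables $w = A_1^{-1} z$ (with $A_1$ invertible since $P_1$ is unitary) yields
\[ P_1 H^{\otimes n} P_2 \ket{x} \;=\; \frac{e^{i\pi x^T G_k x/2}}{\sqrt{2^n}}\sum_{z \in \{0,1\}^n}(-1)^{x^T [A_2^T\;|\;0]\,A_1\, z}\, e^{-i\pi z^T G_n z/2}\ket{z}. \]

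For the \emph{if} direction, substituting $[A_2^T\;|\;0]\,A_1 = B_k$ turns the cross term into $(-1)^{x^T B_k z}$, so the computation matches Eq.~\ref{iso_eq} in its $G_k$- and $B_k$-content. The only apparent discrepancy is the sign of the quadratic phase in $z$: the identity $e^{-i\pi z^T G_n z/2} = (-1)^{\sum_i G_n[i,i]\, z_i}\, e^{+i\pi z^T G_n z/2}$ shows that the two expressions differ by the Pauli $Z^{\mathrm{diag}(G_n)}$ acting on the output. Hence $P_1 H^{\otimes n} P_2$ agrees with $V$ up to Pauli operators, which is consistent with the convention of the preceding corollary and with the sign vector $r$ being treated separately in the synthesis workflow.

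For the \emph{only if} direction, the derived matrix element of $P_1 H^{\otimes n} P_2$ already has the cross term $(-1)^{x^T [A_2^T\;|\;0]\,A_1\, z}$. Requiring equality with $V$ (up to Paulis) forces this to match the cross term $(-1)^{x^T B_k z}$ of Eq.~\ref{iso_eq} for all $x \in \mathbb{F}_2^k$ and $z \in \mathbb{F}_2^n$. Since a bilinear form over $\mathbb{F}_2$ is determined by its values on pairs of basis vectors, this entails $[A_2^T\;|\;0]\,A_1 = B_k$, which is precisely the claimed constraint.

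The main obstacle is the careful bookkeeping of the Pauli freedom that appears when turning $P_1^\dagger$ into $P_1$: the sign flip in the quadratic phase is exactly compensated by a Pauli-$Z$ layer on the output, consistent with the paper's convention of handling the sign vector separately. Once this subtlety is settled, the proof reduces to the invertibility of $A_1$ (needed for $P_1$ to be unitary) and to the standard uniqueness of a bilinear form from its values on basis pairs.
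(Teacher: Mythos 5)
Your proof is correct in substance and follows essentially the same route as the paper: expand the two phase polynomials and $H^{\otimes n}$ in the computational basis and match coefficients against Eq.~\ref{iso_eq}. The paper instead compares $H^{\otimes n}P_2$ with $P_1^{\dag}V$, which avoids inverting $P_1$ and hence never produces the $e^{-i\pi z^T G_n z/2}$ sign flip; your handling of that flip as a $Z$-type Pauli is fine and consistent with the ``up to Pauli'' convention. One small incompleteness: for the ``only if'' direction you have already fixed the quadratic forms in $P_1,P_2$ to be $G_n,G_k$, so your argument only establishes the constraint on $A_1,A_2$; to prove the full equivalence as stated you should start from arbitrary quadratic forms $\Gamma_1,\Gamma_2$ in the ans\"atze, and observe (by setting $x=0$ and then $z=0$) that matching the diagonal and off-diagonal quadratic phases forces $\Gamma_1=G_n$ and $\Gamma_2=G_k$ as well.
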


\begin{proof}

$P_1 H^{\otimes n} P_2$ is a valid implementation of $V$ if and only if 
\[ H^{\otimes n}P_2 = P_1^{\dag} V. \]

We write 
\[ P_1^{\dag} = \left(\sum_{x \in \{0,1\}^n} e^{i\frac{\pi}{2} x^T\Gamma_1x} \ket{A_1x}\bra{x} \right),  \]
\[ P_2 = \left(\sum_{x \in \{0,1\}^k} e^{i\frac{\pi}{2} x^T\Gamma_2x} \ket{A_2x}\ket{0}^{\otimes n-k}\bra{x} \right),  \]
and we expand the two terms: 
\begin{align*} P_1^{\dag} V  & = \left(\sum_{y \in \{0,1\}^n} e^{i\frac{\pi}{2} y^T\Gamma_1y} \ket{A_1y}\bra{y} \right) \left( \frac{1}{\sqrt{2}^n}\sum_{x \in \{0,1\}^k} \sum_{y \in \{0,1\}^n} e^{i \frac{\pi}{2} \left(x^T G_k x + y^TG_ny \right) + i\pi x^T B_{k,n} y} \ket{y} \bra{x} \right)  \\
& = \frac{1}{\sqrt{2}^n}\left(  \sum_{x \in \{0,1\}^k} \sum_{y \in \{0,1\}^n} e^{i \frac{\pi}{2} \left(x^T G_k x + y^T(G_n \oplus \Gamma_1)y \right) + i\pi x^T B_{k,n} y} \ket{A_1y}\bra{x} \right) \\
& = \frac{1}{\sqrt{2}^n} \left( \sum_{x \in \{0,1\}^k} \sum_{y \in \{0,1\}^n} e^{i \frac{\pi}{2} \left(x^T G_k x + y^TA_1^{-1}(G_n \oplus \Gamma_1)A_1^{-1}y \right) + i\pi x^T B_{k,n} A_1^{-1}y} \ket{y}\bra{x}\right) 
\end{align*}

\begin{align*} 
H^{\otimes n}P_2 & = \frac{1}{\sqrt{2}^n} \left( \sum_{x \in \{0,1\}^n} \sum_{y \in \{0,1\}^n} e^{i\pi x^Ty} \ket{y}\bra{x} \right)\left(\sum_{x \in \{0,1\}^k} e^{i\frac{\pi}{2} x^T\Gamma_2x} \ket{A_2x}\ket{0}^{\otimes n-k}\bra{x} \right) \\
& = \frac{1}{\sqrt{2}^n}  \left( \sum_{x \in \{0,1\}^k} \sum_{x' \in \{0,1\}^{n-k}} \sum_{y \in \{0,1\}^n} e^{i\pi [A_2x ; x']^Ty} \ket{y}\bra{A_2x}\bra{x'} \right)\left(\sum_{x \in \{0,1\}^k} e^{i\frac{\pi}{2} x^T\Gamma_2x} \ket{A_2x}\ket{0}^{\otimes n-k}\bra{x} \right) \\
& = \frac{1}{\sqrt{2}^n}  \left( \sum_{x \in \{0,1\}^k} \sum_{y \in \{0,1\}^n} e^{i\pi x^TA_2^TE_ky} e^{i\frac{\pi}{2} x^T\Gamma_2x} \ket{y}\bra{x}  \right) 
\end{align*}

We have the equality if and only if 
\[ \forall x \in \{0,1\}^k, \forall y \in \{0,1\}^n, e^{i \frac{\pi}{2} \left(x^T G_k x + y^TA_1^{-1}(G_n \oplus \Gamma_1)A_1^{-1}y \right) + i\pi x^T B_{k,n} A_1^{-1}y} = e^{i\pi x^TA_2^TE_ky} e^{i\frac{\pi}{2} x^T\Gamma_2x} \]
i.e
\[ G_k = \Gamma_2, \]
\[ G_n = \Gamma_1, \]
and
\[ \begin{bmatrix} A_2^T & 0 \end{bmatrix} A_1 = B_{k,n}. \]

\end{proof}

\begin{corollary}
Let $\ket{G}$ be a graph state. Then we have 
\[ \ket{G} = \left( \sum_x e^{i\frac{\pi}{2} \cdot x^TGx} \ket{Ax}\bra{x} \right)^{\dag} H^{\otimes n} \ket{0}^{\otimes n} \]
for any invertible boolean matrix $A$.
In other words, the synthesis of a graph state in the gate set $\{CNOT, CZ, S\}$ is equivalent to the synthesis of a Clifford phase of a phase polynomial given by the graph matrix $G$ without any restriction on the final linear reversible operator applied.
\end{corollary}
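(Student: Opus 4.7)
The corollary is essentially the $k=0$ specialization of the preceding theorem, so the plan is to apply that theorem directly with the graph state $\ket{G}$ viewed as a $0$ to $n$ Clifford isometry.

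First, I would identify the relevant data. A graph state $\ket{G}$ on $n$ qubits is a $0$ to $n$ Clifford isometry whose graph-state form degenerates: the block $G_k$ is the empty $0 \times 0$ matrix, the block $B_k$ is empty (it has $0$ rows), and the only nontrivial block is $G_n = G$. The Hilbert space on the input side is one-dimensional, so the isometry is just the vector $\ket{G}$.

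Next, I would invoke the previous theorem with this choice of $G_k, B_k, G_n$. The theorem tells us that any operator of the form $P_1 H^{\otimes n} P_2$ implements $V = \ket{G}\bra{}$ if and only if $P_1^\dag = \sum_x e^{i\frac{\pi}{2} x^T G_n x}\ket{A_1 x}\bra{x}$ and $P_2 = \sum_x e^{i\frac{\pi}{2} x^T G_k x}\ket{A_2 x}\ket{0}^{\otimes n-k}\bra{x}$, subject to $\begin{bmatrix} A_2^T & 0 \end{bmatrix} A_1 = B_k$. Because $B_k$ has no rows, this matrix equation is vacuously satisfied for every choice of $A_1$ and $A_2$, so the only constraint on $A_1$ is invertibility (which is part of being a reversible linear operator accompanying the phase polynomial). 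Likewise, with $k=0$, the operator $P_2$ collapses to the single term indexed by the empty string, which is simply the state-preparation $\ket{0}^{\otimes n}$.

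Putting these pieces together, the theorem reads $\ket{G} = P_1 H^{\otimes n} \ket{0}^{\otimes n}$ with $P_1 = \bigl(\sum_x e^{i\frac{\pi}{2} x^T G x}\ket{A x}\bra{x}\bigr)^{\dag}$ for any invertible boolean matrix $A$, which is the claimed identity. The second sentence of the corollary is then just a reinterpretation: since $A$ ranges over all invertible matrices, the linear reversible map that sits between the diagonal Clifford phase and $H^{\otimes n}\ket{0}^{\otimes n}$ is unconstrained, so synthesizing $\ket{G}$ in $\{CNOT, CZ, S\}$ reduces to synthesizing the diagonal Clifford phase polynomial $\sum_x e^{i\frac{\pi}{2} x^T G x}\ket{x}\bra{x}$ with any chosen linear reversible envelope. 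There is no real obstacle here; the only care needed is in verifying that the degenerate cases ($k=0$, empty $B_k$, empty $P_2$) are handled correctly in the statement of the prior theorem.
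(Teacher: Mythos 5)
Your proposal is correct, and it takes the approach the paper implicitly intends: the paper states this corollary without proof, so the authors clearly regard it as the direct $k=0$ specialization of the preceding theorem, which is exactly what you carry out. You correctly identify that when $k=0$ the block $B_k$ is a $0 \times n$ matrix, so the constraint $\begin{bmatrix} A_2^T & 0 \end{bmatrix} A_1 = B_k$ is vacuous and $A_1$ can be any invertible $n \times n$ boolean matrix; and you correctly observe that $P_2$ degenerates to the preparation of $\ket{0}^{\otimes n}$. The only minor caveat worth flagging explicitly is that the corollary holds up to Pauli operators and global phase, inheriting that qualification from the normal-form formula (Eq.~\ref{iso_eq}) on which the theorem is built, but this does not affect the substance of your argument.
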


\subsection{Synthesis for an LNN architecture}

\begin{theorem}

Any $k$ to $n$ Clifford isometry can be executed in two-qubit depth $4n + 3k - 2$ on an LNN architecture.

\end{theorem}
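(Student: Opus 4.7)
The plan is to realize the isometry through one of the normal forms from Section~\ref{normal_forms} and then instantiate each layer with an LNN-efficient subcircuit. Starting from the graph-state form $G = \begin{bmatrix} G_k & B_k \\ B_k^T & G_n \end{bmatrix}$, I would split the synthesis into four pieces in chronological order: an input-side block acting only on the first $k$ qubits, a middle Hadamard layer, an output-side block acting on all $n$ qubits, and a final Hadamard layer.

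Both blocks will combine a CNOT network (used to reduce $B_k$ towards the target form $[I_k \,|\, 0]$) with a graph-state network (the CZ$+$S layer that zeros out the induced $G_k$, respectively $G_n$). The central idea is to choose an LU-type decomposition of $B_k$ so that the CNOT factor that ends up on each side is triangular. Triangular CNOT circuits on $m$ qubits admit an LNN realization in depth $O(m)$ via the Kutin--Moulton--Smithline construction, and graph-state preparations on $m$ qubits admit LNN depth $O(m)$ via the Maslov brick-wall scheme. The depth accounting then goes as follows: the input-side block involves only a $k\times k$ triangular reduction plus a $k$-qubit graph-state preparation and can be realized in LNN depth at most $3k$; the output-side block involves a full $n$-qubit triangular reduction plus an $n$-qubit graph-state preparation, for depth at most $4n$. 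The Hadamard layers are single-qubit and contribute nothing to the two-qubit depth, so the total is $4n + 3k$.

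Concretely, the steps I would carry out are, in order: \textbf{(i)} commit to a specific normal form (for instance $S_k$-$C_k$-$S_k$-$H$-$C_n$-$S_n$-$C_n$-$S_n$-$H_n$ from Section~\ref{normal_forms}) and show, via an explicit LU-style factorization of $B_k$, that the free CNOT factors can always be taken triangular; \textbf{(ii)} invoke (or re-derive) the LNN depths of a triangular CNOT operator and of an arbitrary graph-state preparation on $m$ qubits; \textbf{(iii)} assemble the blocks and perform the depth accounting on each side, pipelining the triangular CNOT layer with the graph-state layer whenever needed to meet the stated constants.

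The main obstacle is step \textbf{(iii)} on the input side: naively summing two LNN-optimal depth bounds of $2k$ each would give $4k$ rather than the required $3k$. The fix is to exhibit an explicit brick-wall schedule on the $k$-qubit line in which the last few rounds of the triangular CNOT circuit overlap in time with the first few rounds of the graph-state preparation --- in those rounds the active two-qubit gates act on disjoint nearest-neighbour pairs and can therefore be executed in parallel, saving a linear number of layers. A similar overlap argument is what keeps the output side at $4n$ rather than the naive $5n$ that the Kutin--Moulton--Smithline bound for an \emph{arbitrary} CNOT network would suggest. Once the per-phase LNN depths are established, the theorem follows by direct summation over the four phases.
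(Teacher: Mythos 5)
Your proposal takes a genuinely different route from the paper, and the route has a gap at exactly the spot you flag as "the main obstacle." The paper does not save two-qubit depth by pipelining a triangular CNOT network with a CZ/S graph-state network. Instead it does three sequential two-qubit circuits, all on the output side: a Maslov-style CZ circuit of depth $2n$ that zeroes $G_n$, the first half of the Kutin--Moulton--Smithline construction of depth $2n$ that makes $B$ northwest-triangular with a $k\times k$ block $B''$, and the second half of that construction in depth $3k$ that reduces $B''$ to the identity. The crucial ingredient --- which your proposal has no substitute for --- is that $G_k$ is then zeroed \emph{for free} during that final $3k$-depth phase by inserting $R_x(\pi/2)$ gates, which are single-qubit and contribute nothing to the two-qubit depth. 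Each inserted $R_x$ performs a rank-one update $G_k \leftarrow G_k \oplus u u^T$ where $u$ is a row of $B''$ at some intermediate step of the sorting network. The heart of the proof is then a combinatorial lemma, proved by induction on $k$, showing that the family of rows encountered during the sorting network generates, via such rank-one updates, the entire space of symmetric $k\times k$ boolean matrices.

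Your overlap argument is asserted, not established, and it is not clear it can work: the CZ and CNOT brick-wall circuits are both dense nearest-neighbour networks with no reason for their active pairs to stay disjoint over a linear number of layers, and the accounting you give (reducing $2k+2k$ to $3k$ on the input side, $5n$ to $4n$ on the output side) is not backed by any explicit schedule. Moreover, the claim that the output-side CNOT factor can be taken triangular is false in general --- the output $-C-$ stage must implement an arbitrary linear reversible operator reducing $B$ to the identity, not a triangular one; the paper sidesteps this by stopping at a northwest-triangular $B''$ after $2n$ layers and finishing in the $3k$ block. So to make your proposal into a proof you would need to either (a) prove a concrete pipelining lemma that achieves the overlap you describe for the specific Maslov and Kutin circuit families, which seems hard and is not known, or (b) adopt the paper's mechanism of free $R_x$ insertions together with the spanning lemma that justifies it.
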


\begin{proof}
Our proof mainly relies on the synthesis of CNOT circuits in depth $5n$ from \cite{kutin2007computation} and CZ circuits in depth $2n-2$ from \cite{maslov2018shorter}. 
First, we can reduce $G$ to 
\[ G = \begin{bmatrix} G'_k & B'^T \\ B' & 0 \end{bmatrix} \]
with a CZ circuit of depth $2n-2$ using the technique from \cite{maslov2018shorter}. Initially the circuit in \cite{maslov2018shorter} is of depth $2n+2$ but the last four layers are purely CNOT gates which do not affect the zeroing of $G_n$, therefore we are free to remove them and keep a circuit of depth $2n-2$. Then we need to take a look at how the algorithm from \cite{kutin2007computation} works. Given a linear reversible operator $A$ on $n$ qubits they apply two circuits: 
\begin{itemize}
    \item the first one of depth $2n$, applying an operator $C_1$ such that $A' = C_1A$ is northwest triangular. For instance on $5$ qubits, 
    \[ A' = C_1 A = \begin{bmatrix} * & * & * & * & 1 \\ * & * & * & 1 & \\ * & * & 1 && \\ * & 1 &&& \\ 1 &&&& \end{bmatrix} \]
    where $*$ stands for any boolean value, and the entries non specified are $0$.
    \item the second circuit, of depth $3n$, implements an operator $C_2$ that synthesizes the triangular operator: 
    \[ C_2A' = I.  \]
\end{itemize}
Overall, we get $A = C_1^{-1}C_2^{-1}$ as a CNOT circuit of depth $5n$. For the rest of the proof, we need to go into the details of the generation of $C_2$. An example on $5$ qubits is given in Fig.~\ref{kutin_process}. The general idea is to give a label $a_i = n-i+1, i= 1 \hdots n$ to each row of $A'$ and to sort those labels with a LNN compliant sorting network as shown in Fig.~\ref{sorting}. Each box will swap the rows (and the labels) and may perform a row operation between the two swapped rows. Namely if a row with label $a_i$ is swapped with a row of label $a_j$, $i<j$, the row with label $a_j$ may have a nonzero $i$-th component and if so we perform the operation 
\[ a_j \leftarrow a_j \oplus a_i  \]
to zero that component. Given the initial label positions, a label $a_j$ will have to "encounter" all labels $a_i$, $i < j$ during the sorting process and we are ensured that the j-th row will eventually be the j-th canonical vector.

Each box will therefore consist of either a SWAP operation or an operation of the form: 
\[ (u,v) \leftarrow (u\oplus v, u) \]
or 
\[ (u,v) \leftarrow (v, u\oplus v). \]

The SWAP operation requires 3 CNOT gates while the SWAP+CNOT requires in fact only 2 CNOT gates \cite{kutin2007computation}. Therefore each box will be of depth at most $3$ and the total circuit $C_2$, of depth $n$ as a box-based circuit, will be of depth $3n$ as a CNOT-based circuit. Note also that, in any case, we always have at some point during the execution of a box on two rows the sum of the values of the two rows that appear on one row of $A'$. This will be an important property in our proof.

\begin{figure}
\subfloat[The sorting network applied to the operator, as a box-based circuit. \label{sorting}]{\includegraphics[scale=0.56]{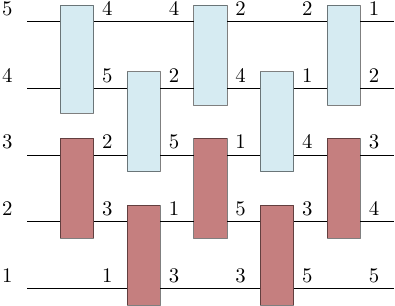}} \hspace{1cm}
\subfloat[The action of the sorting network on the operator for each layer.]{\includegraphics[scale=0.56]{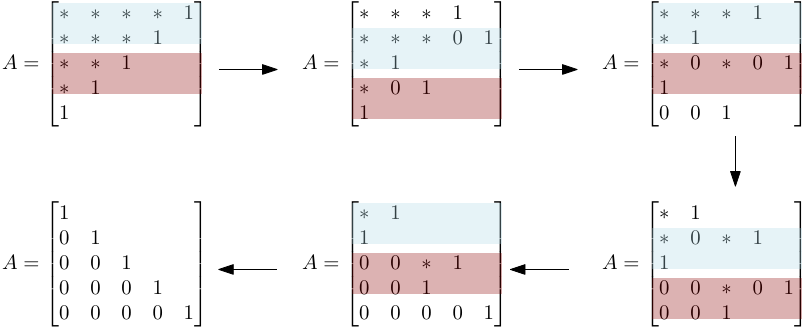}}
\caption{Illustration of the second part of the algorithm from \cite{kutin2007computation} to reduce a northwest triangular linear reversible operator to the identity.}
\label{kutin_process}
\end{figure}

Coming back to our proof, with the first half of the construction from \cite{kutin2007computation} we are able to reduce the CNOT part $B'$ to a northwest triangular matrix $B''$ in depth $2n$ as well. At this point we have an operator of the form 
\[ G = \begin{bmatrix} G''_k & \begin{pmatrix} B''^T &  0 \end{pmatrix} \\ \begin{pmatrix} B'' \\ 0 \end{pmatrix} & 0 \end{bmatrix} \]
where $B''$ is $k \times k$ and northwest triangular.

We now show that we can reduce $B''$ to the identity and $G_k$ to zero in depth $3k$. In fact, we will show that we can solely insert suitable $R_x(\pi/2)$ gates in the CNOT circuit reducing $B''$ to the identity matrix (taken from \cite{kutin2007computation}) to reduce $G_k$ to $0$ as well.

With $G$ in the form $\begin{bmatrix} G_k & D^T \\ D & 0 \end{bmatrix}$ with arbitrary $D$, the action of an $R_x(\pi/2)$ gate on the output side is equivalent to a rank-one modification of $G_k$ by a vector $u$ given by one of the row of $D$. In other words, if we apply an $R_x(\pi/2)$ on qubit $j$, then 
\[ G_k \leftarrow G_k \oplus D[j,:]^TD[j,:]. \]

Therefore in our case we can reduce $G_k$ to 0 if the set 
\[ S = \{ u^Tu | {u \text{ appearing in $B''$ during the synthesis }} \} \]
 is a basis of the set of symmetric boolean matrices. 

We prove by induction that $S$ generates the set 
\[ P_n = \{ e_ie_i^T \; | \; i \in \llbracket 1,n \rrbracket \} \cup \{ e_ie_j^T  \oplus e_je_i^T \; | \; i,j \in \llbracket 1,n \rrbracket, i \neq j \} \] 
and therefore generates the set of symmetric boolean matrices.

This is trivial for $n=1$. 
Now suppose this is true for some $n$. We introduce similar notations to the ones in \cite{kutin2007computation}: 
\begin{itemize}
  \item each line $i$ of $B''$ has a value $c_i$, i.e a boolean vector of size $n+1$,
  \item each line $i$ of $B''$ carries a label $a_i \in \llbracket 1, n+1 \rrbracket$, independent of $c_i$.
\end{itemize}

For simplicity, at any moment during the algorithm from \cite{kutin2007computation}, we now refer as parity $k$ the line of $B''$ that carries the label $a_k$, and by extension the value of the line carrying the label $a_k$. The key is to see the algorithm from \cite{kutin2007computation} as moving parities in $B''$ and during the SWAP of the parities we add or not some parities to the others. For simplicity we also consider the reverse circuit, i.e, the one that transforms the identity operator $I$ to $B''$. This does not change the set $S$ but this will simplify the reasoning. We still refer as $B''$ the matrix at any point during the algorithm.

The first point is to notice that the $(n+1)$-th parity will never be added to the other parities. So we can conclude that the first $n$ parities, apart from their interactions with the $(n+1)$-th parity, will behave exactly as in the case with $n$ qubits, therefore they satisfy the property and generates the set $P_{n}$. \\

We only need to show that, with the new values obtained on the last parity, we can generate $P_{n+1}$, i.e, we need to generate the set 
$\{ e_{n+1}e_{n+1}^T \} \cup \{ e_{n+1}e_j^T \oplus e_je_{n+1}^T \; | \; j \in \llbracket 1,n \rrbracket \}$. \\ \ \\

Every value carried by the $(n+1)$-th parity is of the form 
\[ e_{n+1} \oplus u_k \]
for a family of boolean vectors $(u_k)_k$ of size $n+1$ with the $(n+1)$-th component always equal to $0$. Applying an $R_x$ in two occasions gives the transformation 
\[ A \leftarrow A \oplus (e_{n+1} \oplus u_i)(e_{n+1} \oplus u_i)^T \oplus (e_{n+1} \oplus u_j)(e_{n+1} \oplus u_j)^T \]
\[ A \leftarrow A \oplus u_iu_i^T \oplus u_ie_{n+1}^T \oplus e_{n+1}u_i^T \oplus u_ju_j^T \oplus u_je_{n+1}^T \oplus e_{n+1}u_j^T \]
\[ A \leftarrow A \oplus \underbrace{u_iu_i^T \oplus u_ju_j^T}_{\in P_n} \oplus (u_i\oplus u_j)e_{n+1}^T \oplus e_{n+1}(u_i \oplus u_j)^T  \]

This means that, up to terms in $P_n$ that can be removed with suitable $R_x$ gates on the first $n$ parities, we can generate the set $e_{n+1} (\oplus_k \alpha_k u_k)^T \oplus (\oplus_k \alpha_k u_k)e_{n+1}^T$ for any $\alpha_k \in \{0,1\}$, i.e, any linear combination of the $(u_k)$'s.
What remains is to show that the family of vectors $(u_k)_k$ is a basis such that we can generate all the $e_i$'s and this concludes the proof.

We show that any new $u_p$ appearing during the process is not a linear combination of the previous $(u_k)_{k=1 \hdots p-1}$, given that we have $n$ such $u_k$ we get a basis $(u_k)_{k=1 \hdots n}$. By construction of the algorithm the $(n+1)$-th parity is initially on the last row of $B''$ and is equal to $e_{n+1}$. During the synthesis this parity will be swapped, and sometimes added, with the parity "above".

In other words, if the $(n+1)$-th parity is currently on line $k$ of $B''$, we have 
\[ c_k = e_{n+1} \oplus v_{k-1} \]
where $v_{k-1}$ is necessarily a linear combination of the set of parities located lines $k+1\hdots n$, i.e, the lines "below". The family $(u_j)_{j=1 \hdots k-1}$ also consists in linear combinations of the set of parities located lines $k+1\hdots n$. During the swap/addition with the parity above we necessarily create the value 
\[ e_{n+1} \oplus v_{k-1} \oplus c_{k-1} \]
and we set $u_{k} = v_{k-1} \oplus c_{k-1}$. $u_k$ is necessarily linearly independent with the previous $(u_j)_{j=1 \hdots k-1}$ otherwise the lines $k+1$ to $n$ and the $k-1$-th line of $B''$ are not linearly independent which is not possible by invertibility of $B''$. 

We set $v_{k} = v_{k-1}$ or $v_k = v_{k-1} \oplus c_{k+1} $, depending on the value of the $(n+1)$-th parity of $B''$ and we pursue the algorithm.
\end{proof}

\begin{corollary}
Any n-qubit Clifford operator can be executed in two-qubit depth $7n-2$ on an LNN architecture.
\end{corollary}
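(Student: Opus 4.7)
The plan is to apply the preceding theorem with $k=n$, since any $n$-qubit Clifford operator is by definition an $n$ to $n$ Clifford isometry, and an $n$ to $n$ isometry has a square $B$ block that is automatically full rank (invertible, after possibly applying output Hadamards as in Theorem on the normal form), so the graph-state form of the previous sections applies directly without any degenerate cases. Substituting $k=n$ into the depth bound $4n+3k$ gives $4n+3n=7n$, which is exactly the stated claim.

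More concretely, I would first note that in the $k=n$ regime the tableau has no extra "free" columns $Z_2 \to Z/X$, so the reduction of $G$ in the theorem's proof proceeds with $G_k$ of size $n\times n$, $G_n$ of size $n\times n$, and $B$ of size $n\times n$ invertible. The theorem then assembles the LNN circuit as: a depth-$2n$ CZ layer (from \cite{maslov2018shorter}) zeroing $G_n$, a depth-$2n$ CNOT layer (the first half of the Kutin--Moulton--Smithline construction \cite{kutin2007computation}) reducing $B$ to northwest-triangular form, and finally a depth-$3k=3n$ block that both reduces the triangular $B$ to the identity and simultaneously zeros $G_k$ via interleaved $R_x(\pi/2)$ gates. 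Summing gives $2n+2n+3n=7n$.

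The only thing to verify in the corollary is that nothing in the theorem's construction degrades when $k=n$: the inductive argument producing a basis of symmetric matrices from the $(u_k)$ vectors goes through identically, since its only dependence on $k$ is that it handles the "top-left $k\times k$ block" of $G$, which is just all of $G_k$ here. The final layer of Hadamards needed to convert the graph-state form into the true identity tableau is a depth-$1$ single-qubit layer and does not contribute to the two-qubit depth. Consequently the corollary follows with no additional work.

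I do not expect a real obstacle here: the statement is a direct specialization, and the main nontriviality (the interleaving of $R_x$ gates inside the triangular-reduction CNOT circuit to simultaneously clear $G_k$) has already been carried out in the theorem's proof. The only subtlety worth mentioning explicitly is the invertibility of $B$ for a full Clifford operator, which is guaranteed up to a single layer of output Hadamards by the normal-form theorem cited earlier, and which again contributes no two-qubit depth.
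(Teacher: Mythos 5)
Your proposal is correct and matches the paper's intent exactly: the corollary is a direct specialization of the preceding theorem (depth $4n+3k$) to $k=n$, and your verification that nothing degenerates in that regime is sound.
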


A similar result was independently found in \cite{maslov2023cnot}. They first provided a circuit of depth $7n+2$ which was then improved to $7n-4$. Note that in our proof we showed it is possible to work exclusively in the output side by using \{CNOT+S\} (to zero $G_n$) then \{CNOT+Rx\} circuits (to reduce $B_{k,n}$ to the identity while zeroing $G_k$). We slightly deviated from the need to implement two phase polynomials on both sides which gives, in our opinion, good insights about the variety of techniques that can be used in Clifford circuits synthesis.

\section{Practical algorithms for graph state synthesis and application to the codiagonalization of Pauli rotations}
\label{section:algorithms}

We give two algorithms for the synthesis of graph states, a special case of Clifford isometry where $k=0$. One algorithm optimizes the number of two-qubit gates while the second one optimizes the depth of the circuit, both assuming an all-to-all connectivity between the physical qubits. Then we use our algorithms in quantum compilers that synthesize a sequence of Pauli rotations. Usually such sequences are implemented by packing the rotations into groups of commuting operators, each group is then codiagonalized with a Clifford circuit and implemented with the synthesis of a phase polynomial. An efficient graph-state synthesis algorithm will improve the codiagonalization part of this process. Then in Section~\ref{section::algo_iso} we will extend the algorithms for graph state synthesis to deal with the case of any arbitrary Clifford isometry.

\subsection{Practical algorithms: a variant of the syndrome decoding based algorithm and a depth-optimized greedy algorithm}

Following our framework, $G_k$ is a $0 \times 0$ matrix and $B_{k,n}$ a $0 \times n$ matrix. Thus, the synthesis only consists in reducing a symmetric $n \times n$ boolean matrix $G_n$ to the zero matrix with the following operations: 
\begin{itemize}
    \item flip entries $G_n[i,j]$ and $G_n[j,i]$,
    \item add row $i$ to row $j$ and column $i$ to column $j$, $i \neq j$,
    \item rank-one updates as described in Table~\ref{framework}.
\end{itemize}

We will restrict ourselves to the gate set $\{CNOT, CZ, S\}$ which means that the rank-one updates are not available. The algorithms are variants of two algorithms designed by some of the authors for the synthesis of CNOT circuits \cite{DBLP:conf/rc/BrugiereBVMA20, de2021decoding, de2021reducing}. We present a sketch of how the algorithms work by highlighting the differences with CNOT circuits but more details can be found in the original articles. 

\subsubsection{The syndrome decoding based algorithm \cite{DBLP:conf/rc/BrugiereBVMA20, de2021decoding}.}

This algorithm aims at optimizing the total number of two-qubit gates in the circuit. The algorithm is recursive, suppose we managed to reduce $G_n$ to 
\[ G_n = \begin{bmatrix} 0_{n-1} & s^T \\ s & \star \end{bmatrix}  \]
with a sequence of operations $C$ acting on the first $n-1$ qubits where $0_{n-1}$ is the $n-1 \times n-1$ zero matrix, $s$ is an arbitary vector of size $n-1$ and $\star$ can be $0$ or $1$. We show how to efficiently add elementary operations to $C$ to create a new sequence $C'$ such that its execution will zero $s$ while keeping the $n-1 \times n-1$ submatrix equal to $0$. $G[n,n]$ can be ultimately zeroed with an S gate and does not participate in the increase of the 2-qubit gate count. Overall the algorithm consists in:
\begin{itemize} 
    \item a recursive call on a submatrix of $G_n$ of size $n-1$, giving a sequence $C$,
    \item and a call to the procedure that will treat separately the last line and column of $G_n$.
\end{itemize}

One key of the algorithm is to notice that at any time during the execution of $C$ we are free to do the following operations: 
\begin{itemize}
    \item flip any entry of $s$ with a CZ gate, by symmetry this will flip the same entry in $s^T$,
    \item add any row $1 \leq i \leq n-1$ of the current value of $G_n$ to $s$ with a CNOT gate, by symmetry the column operation will also modify $s^T$,
    \item flip any diagonal entry $i$ of $G[1:n-1,1:n-1]$ with an S gate, then add row $i$ to $s$ with a CNOT gate and flip again the $i-th$ diagonal entry.
\end{itemize}

These operations will only modify $s$ in $G_n$ and keep $G_n[1:n-1,1:n-1]$ unchanged and therefore ultimately $0$. All these operations have a two-qubit cost of $1$. Overall any of these three operations we can be seen as adding a vector to $s$. Therefore, the procedure to zero $s$ consists in two steps:
\begin{enumerate}
    \item scan the matrix $G[1:n-1,1:n-1]$ during its zeroing and store all the vectors that can be added to $s$ in a matrix $H$. $H$ is eventually $m \times (n-1)$ where $m$ is the number of different parities encountered,
    \item solve $xH = s$ with $x$ of smallest Hamming weight possible. Each vector to add costs one two-qubit gate so the smallest the Hamming weight of $x$, the smallest the number of extra two-qubit gates.
\end{enumerate}

Step 2 is also known as the syndrome decoding problem and we use a greedy method to solve it: we choose the vector $v$ in $H$ that minimizes the Hamming weight $v \oplus s$, we set $s \leftarrow s \oplus v$ and we repeat this until $s$ is $0$. The presence of the canonical vectors guarantees that the procedure eventually terminates. To improve the efficiency of the greedy method, it is possible to solve $xHP = sP$ for different invertible matrices $P$ and keep the best result. This method is known as the Information Set Decoding strategy and significant improvements can be obtained if sufficient iterations are done. The only constraint on $P$ is that the canonical vectors must be present in $HP$. 

Once the syndrome decoding problem is solved, we get a sequence of vectors, each corresponding to a specific operation (i.e, its type -- CNOT, CZ or S+CNOT+S -- and the index where it should be inserted inside $C$ to perform the desired operation). To create the new sequence of $C'$ we just need to update $C$ by inserting accordingly the operations.

\subsubsection{A depth-optimized greedy algorithm.} \label{sec::depth_opti_graph_state}

We use the following greedy algorithm to optimize the depth. Until $G$ is zero repeat the following procedure: 
\begin{enumerate}
    \item initialize an empty list $L$ of used qubits and an empty set of operations $O$. 
    \item At each iteration choose the row or column operations among the non-used qubits that reduces the Hamming weight of $G_n$ the most. If the decrease in the Hamming weight is strictly larger than $1$, then perform the operation, add it to $O$ and add the two qubits to $L$. 
    \item When no such operation is available, compute the set 
    \[ Q = \llbracket 1, n \rrbracket \setminus L, \] 
    and consider the submatrix $G_n[Q, Q]$: this is the submatrix on which entries can still be flipped without increasing the depth of $O$. To compute the largest number of entries that can be flipped in depth $1$ see $G_n[Q, Q]$ as the adjacency matrix of a graph. Each $1$ entry in $G[Q, Q]$ is an edge in the graph. Then a maximum matching in this graph gives a set of maximum size of entries that can be flipped. 
    \item One step of this procedure produces a set $O$ of depth $1$. Add $O$ to the full set of operations done on $G_n$.
\end{enumerate}    

We could have avoided the use of $O$ in the description of our algorithm but we wanted to highlight the principle of the algorithm: at each step we compute a set of operations that reduces the most the number of $1$ in $G_n$ and that can be executed in depth $1$. The algorithm always terminates as at each step we strictly reduce the number of nonzero elements in $G_n$.

\subsection{Application to the codiagonalization of Pauli rotations}

Given a set of $k$ Pauli operators on $n$-qubit $\{P_1, P_2, \hdots, P_k \}$, if they all commute then there exists a Clifford operator $C$ such that 
\[ CP_iC^{\dag} = Z^{a_i} = Z_1^{a_{i,1}} \otimes Z_2^{a_{i,2}} \otimes \hdots \otimes Z_n^{a_{i,n}} \forall i \in \llbracket 1,k \rrbracket, a_i \in \mathbb{F}_2^n. \]

$C$ is not unique and we can look for an optimized one using our graph synthesis framework. If we write our Pauli operators as a tableau
\[ \begin{bmatrix} Z \\ X \end{bmatrix} \in \mathbb{F}_2^{n \times k}, \]
the codiagonalization problem is then equivalent to finding a Clifford operator that will zero the $X$ part. We propose the following procedure: 
\begin{enumerate}
    \item considering the following block structure of our tableau 
    \[ \begin{bmatrix} Z_k \\ Z_{n-k} \\ X_k \\ X_{n-k} \end{bmatrix} \in \mathbb{F}_2^{n \times k} \]
    where $Z_k, X_k$ are $k \times k$ and $Z_{n-k}, X_{n-k}$ are $(n-k) \times k$. Find a circuit of Hadamard gates and SWAP gates such that $X_k$ is full rank. 
    \item find a Clifford operator that zeroes $X_{n-k}$.
    \item with free column operations reduce $X_k$ to the identity operator,
    \item use the graph-state synthesis framework to zero $Z_k$,
    \item use Hadamard gates on the first $k$ qubits to commute $X_k$ with the zero block
\end{enumerate}

At the end of the procedure, we eventually have $X=0$. The main complexity of this procedure lies in steps 2 and 4. The Hadamard layers do not increase either the 2-qubit gate count or the 2-qubit gate depth. The SWAP gates can be transfered at the end of the circuit and post-processed. Although the column operations during step 3 are free, they change the final set of diagonalized Pauli rotations. This does not change the fact that the resulting Clifford circuit will codiagonalize our initial set of Pauli rotations but, once such a Clifford is computed, we need to reverse the column operations to get back the actual diagonal Pauli operators. Alternatively, one can execute the Clifford operator again on the initial set of Pauli rotations.

We already have algorithms for step 4, we now give methods to do step 2 as well. Those methods are variants of the ones we already used for graph state synthesis so we will address them quickly. 

\subsubsection{2-qubit count optimization with the syndrome decoding problem.}

By assumption $X_k$ is full rank. With suitable column operations we can have $X_k = I$. We zero each row of $X_{n-k}$ one by one starting from row $1$ to row $n-k$. During the $i$-th step we compute the rows that appeared in the zeroing of $X_{n-k}[1:i-1]$ that we gather in a matrix $H$. We also add the canonical vectors given in $X_k$. We similarly solve the syndrome decoding problem $Hx = X_{n-k}[i,:]$ to have the smallest set of rows from $H$ one has to add to $X_{n-k}[i,:]$ to zero the row. Those row operations can be performed with suitable CNOT gates inserted in the current Clifford circuit. Note that if we apply an Hadamard gate, resp. an $R_x(\pi/2)$ gate, on the $k+i$-th qubit, the row to zero becomes $Z_{n-k}[i,:]$, resp. $X_{n-k}[i,:] \oplus Z_{n-k}[i,:]$: we can solve the syndrome decoding problem with these three possibilities and keep the best result at the cost of adding one-qubit gates.

\subsubsection{Depth optimization with a greedy procedure.} This case is even closer to the CNOT case from \cite{de2021reducing} than the graph-state case. Consider the matrix 
\[ B = X_{n-k}X_k^{-1}. \]
Any row operation on $X_{n-k}$ is equivalent to a row operation on $B$, any row operation on $X_k$ is equivalent to a column operation on $B$ and adding a row of $X_k$ to a row of $X_{n-k}$ is equivalent to shifting an entry of $B$. Until $B$ is zero repeat the following procedure: 
\begin{enumerate}
    \item initialize two sets of non-used qubits $L_1 = \llbracket 1,k \rrbracket, L_2 = \llbracket k+1,k+n \rrbracket$,
    \item compute the row or column operation that reduces at most the number of $1$ in $B$ if such row or column operation involves only qubits in $L_1$ or $L_2$, perform this operation and remove the two used qubits from either $L_1$ or $L_2$,
    \item when no more row or column operation can be done consider the submatrix $B[L_2, L_1]$, this is the adjacency matrix of a bipartite graph $G$. Compute a maximum matching in $G$ that will give you the set of row operations to perform between $X_k$ and $X_{n-k}$.
\end{enumerate}

One iteration of the three steps above gives a CNOT circuit of depth $1$ that tries to reduce at most the number of one entries in $B$. The procedure always terminates as we can always remove the entry one by one during step 3 of the procedure above.

\subsubsection{Extra-optimization.}

Note that if the number of rotations $k$ is close to $n$, it might be preferable to complete the set of Pauli operators into $n$ operators that commute and directly apply the graph state synthesis on $n$ qubits. In the following benchmarks we will always try the two approaches and keep the best results. We also do the codiagonalization multiple times with random qubits permutation at the beginning of the algorithm, again we keep the best result and we post-processed the selected permutation.

\section{Practical algorithms for Clifford isometry synthesis} \label{section::algo_iso}

We reuse the methods used in Section~\ref{section:algorithms} to deal with the case of general Clifford isometries given by matrices of the form $\begin{pmatrix} G_k & B_k \\ B_k^T & G_n \end{pmatrix}$. Again we restrict ourselves to the gate set \{CNOT,CZ,S\}. Essentially, we show how to zero both $G_n$ and $G_k$ using the graph state synthesis techniques while reducing $B_k$ to the identity. 

\subsection{Count optimization with the syndrome decoding problem} \label{sec:count_cliff_opti}

For any $n \times k$ matrix $A$, there exists an $n \times n$ permutation matrix $P$, an $n \times k$ lower triangular operator $L$ and a $k \times k$ upper triangular matrix $U$ such that 
\[ A = PLU. \]

We use this so-called LU decomposition on $B_k^T$ to split our synthesis in three steps: 
\begin{enumerate}
    \item first, deal with the permutation matrix $P$ such that we get a matrix of the form 
    \[ \begin{pmatrix} G_k & (LU)^T \\ LU & G_n \end{pmatrix}, \] 
    \item then, reduce $G_n$ to $0$ while reducing $L$ to the identity with operations on the output side, this would give us an intermediate matrix of the form 
    \[ \begin{pmatrix} G_k & U^T \\ U & 0 \end{pmatrix}, \]
    \item finally, reduce $G_k$ to $0$ while reducing $U$ to the identity with operations on the input side to get the identity 
    \[ \begin{pmatrix} 0 & I_k^T \\ I_k & 0 \end{pmatrix}. \] 
\end{enumerate}

Once this is done we have a valid implementation of the isometry. 

\paragraph{Dealing with the permutation.} To deal with $P$, one can post-process it to the end of the circuit. If for some reason the permutation cannot be postponed, one can find a CNOT circuit on the output side such that the resulting $B_k$ has an LU decomposition with $P=I$. This is possible if and only if all its leading principal minors are nonzero (i.e, $B_k[:i,:i]$ is invertible for every $i=1...k$). Therefore, for all $i=1...k$, assuming all $B_k[:j,:j]$ are invertible for $j < i$, one can look for a CNOT gate between qubit $i$ and a qubit $i_2  > i$ to make $B_k[:i,:i]$ invertible. This is always possible as $B_k$ is of full rank. This procedure only requires at most $k$ additional CNOT gates.

\paragraph{Graph state + linear reversible triangular operator synthesis.} The last two steps essentially reduce to the same problem. In the case of step 2 we are given a $n \times (n+k)$ matrix of the form 
\[ \begin{pmatrix} L & G_n \end{pmatrix} \]

where $L$ is lower triangular and $G_n$ symmetric. The goal is to reduce it to \[ \begin{pmatrix} I_k & 0 \end{pmatrix} \]

with the following available operations: 
\begin{itemize}
    \item CZ and S gates will flip any entry of $G_n$,
    \item CNOT gates will act as row and column operations on $G_n$ while only acting as row operations on $L$.
\end{itemize}

Step 3 is dealt in a similar way on $k \times 2k$ matrices of the form 
\[ \begin{pmatrix} G_k & L \end{pmatrix} \]

where $L$ is $k \times k$.

We only show how to deal with step 2. A recursive formulation similar to the one presented in Section~\ref{section:algorithms} works. If we managed to deal with the first $n-1$ lines we end up with a matrix of the form 
\[ \begin{pmatrix} I_{k-1} & 0_{(n-1)} \\ s' & s \end{pmatrix} \]

where $s'$ is of size $\min(k,n)$ and $s$ is of size $n$. To complete the synthesis, we need to zero $s[1:n-1]$ and $s'$ (unless $k=n$, in this case $s'$ needs to be reduced to the canonical vector $e_n$ therefore only $s'[1:n-1]$ needs to be zeroed). 
We can just stack $s$ and $s'$ and get a larger syndrome decoding problem instance $xH = [s' s]$ to solve where the vectors in $H$ are also concatenation of rows in $L$ and rows in $G_n$. We are guaranteed that the $[s' s]$ can be zeroed by a greedy process: 
\begin{itemize}
    \item the canonical vectors for the $s$ part are available through the CZ gates,
    \item the final matrix form $\begin{pmatrix} I_{k-1} & 0_{(n-1)} \\ s' & s \end{pmatrix} $ that we get after having executed all the elementary operations on the first $n-1$ lines ensure that the canonical vectors for the $s'$ part are accessible through CNOT gates.
\end{itemize}

Again, we can improve the performance of the syndrome decoding solver by solving it with different changes of basis.

\subsection{Depth optimization}

The algorithm for depth optimization proceeds in a similar fashion that the count optimization one. Again, we apply an $LU$ decomposition to $B_k^T$ and we follow the same three steps as explained in Section~\ref{sec:count_cliff_opti}.

\paragraph{Dealing with the permutation.} It is known that any permutation can be implemented in depth $d=6$ in an all-to-all architecture \cite{DBLP:journals/siamcomp/MooreN01}. 

\paragraph{Graph state + linear reversible triangular operator synthesis.} To reduce a $n \times (n+k)$ matrix of the form 
\[ \begin{pmatrix} L & G_n \end{pmatrix} \]

with $L$ lower triangular to \[ \begin{pmatrix} I_k & 0 \end{pmatrix} \]

we follow the same greedy procedure as in Section~\ref{sec::depth_opti_graph_state} for regular graph state synthesis. The only difference is that we have to keep the triangular structure of $L$ in order to reduce it to the identity matrix. Essentially this reduces to only selecting CNOT gates between qubits such that row operations $L[i,:] \leftarrow L[i,:] \oplus L[j,:]$ with $i > j$ are performed if $i \leq k$. Note that there are no constraints on the rows $k+1 \hdots n$ of $L$ as this part is not triangular.

We are ensured that the greedy process can terminate because $CZ$ and $S$ gates can always be used to strictly decrease the Hamming weight of $G_n$, then CNOT gates will eventually reduce $L$ to $I_k$ because of the triangular structure of $L$. The first column of $L$ can always be reduced to $0$ using the first line of $L$, the second column with the second row and so forth. 

The reduction of the $k \times 2k$ matrix
\[ \begin{pmatrix} G_k & L \end{pmatrix} \]

is done in a similar way. Here the constraints apply to all the rows of the $k \times k$ matrix $L$.

\section{Benchmarks}
We implemented the different algorithms of Section \ref{section::algo_iso} in a python-binded Rust crate called \emph{Rustiq}\footnote{\href{https://github.com/smartiel/rustiq}{https://github.com/smartiel/rustiq}}. The library offers a straightforward python interface to call the various synthesis algorithms (and more) via basic python types.
In order to evaluate the performances of the algorithms introduced in section \ref{section:algorithms}, we ran benchmarks over both random and structured instances. Results are compared with the co-diagonalization heuristics described in \cite{cowtan2020generic} and with depth and count oriented Clifford synthesis heuristics from \cite{bravyi2021clifford,Maslov_2022}.

\subsection{Co-diagonalization and stabilizer state synthesis}

\noindent{\bf Random instances:} In the first benchmark, we simply pick some Clifford Tableaux uniformily at random and co-diagonalize their Z outputs. In other words, given some Clifford operator $C$, we use our heuristics to co-diagonalize operators $\{CZ_iC^\dagger, \forall i \in [1,n] \}$. We then compare the entangling gate count and entangling depth of the output circuits. Results are reported in Figures \ref{fig:random_count}, \ref{fig:random_depth},
\ref{fig:random_count_fixed}, \ref{fig:random_depth_fixed}. For full stabilizer state synthesis, we can observe that the Syndrome algorithm asymptotically outperforms the state-of-the-art method in both entangling gate count and entangling depth. 
Interestingly, so does the Matching algorithm, while it only aims at limiting the circuit's depth without being conservative in the entangling gate count. 
When using our heuristics to co-diagonalize incomplete set of rotations (Figure \ref{fig:random_count_fixed} and \ref{fig:random_depth_fixed}), the Syndrome algorithm performs similarly to the state-of-the-art heuristic until some break-point where it stagnates and thus starts outperforming it. This behavior can be observed in both entangling count and depth. The Matching algorithm is outperformed by the other two in entangling count until it also starts stagnating and ends up outperforming
the state-of-the-art.

\medskip

\begin{figure}[h]
    \begin{adjustwidth}{-1.5cm}{}
    \begin{center}
        \begin{tabular}{cc}
            \includegraphics[scale=0.5]{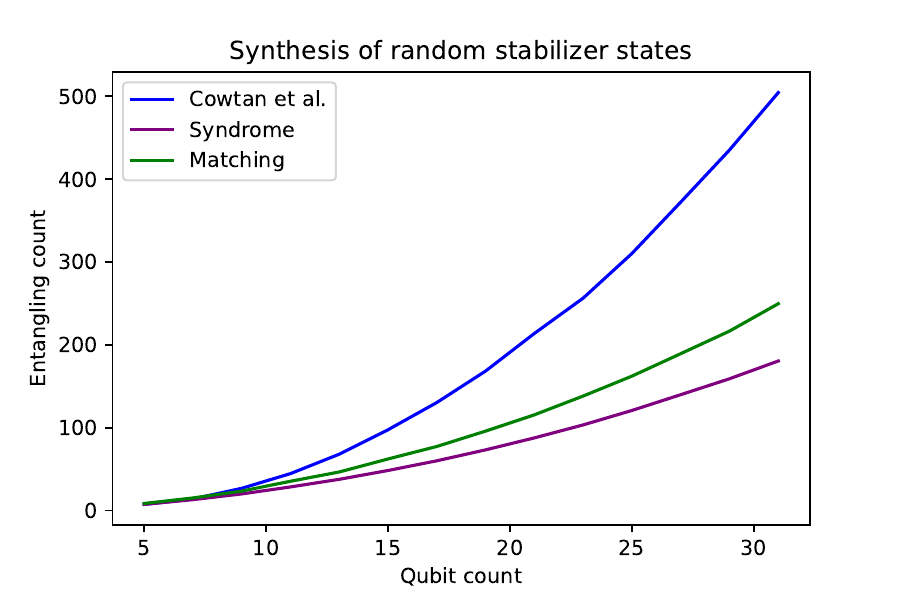}&\includegraphics[scale=0.5]{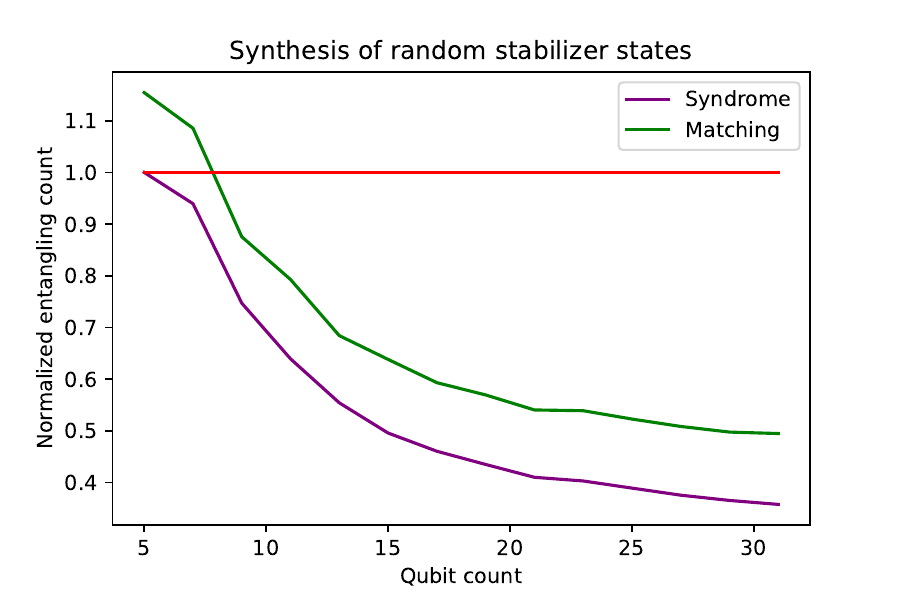}
        \end{tabular}
    \end{center}
    \end{adjustwidth}
    \caption{Left: Entangling count (as in CNOT equivalent count) as a function of the number of qubits. Each point is generated by averaging the entangling count of 40 random Clifford tableaux. Right: the same data, but normalized by the entangling count obtained via the state-of-the-art method of \cite{cowtan2020generic}.}\label{fig:random_count}
\end{figure}

\begin{figure}[h]
    \begin{adjustwidth}{-1.5cm}{}
    \begin{center}
        \begin{tabular}{cc}
            \includegraphics[scale=0.5]{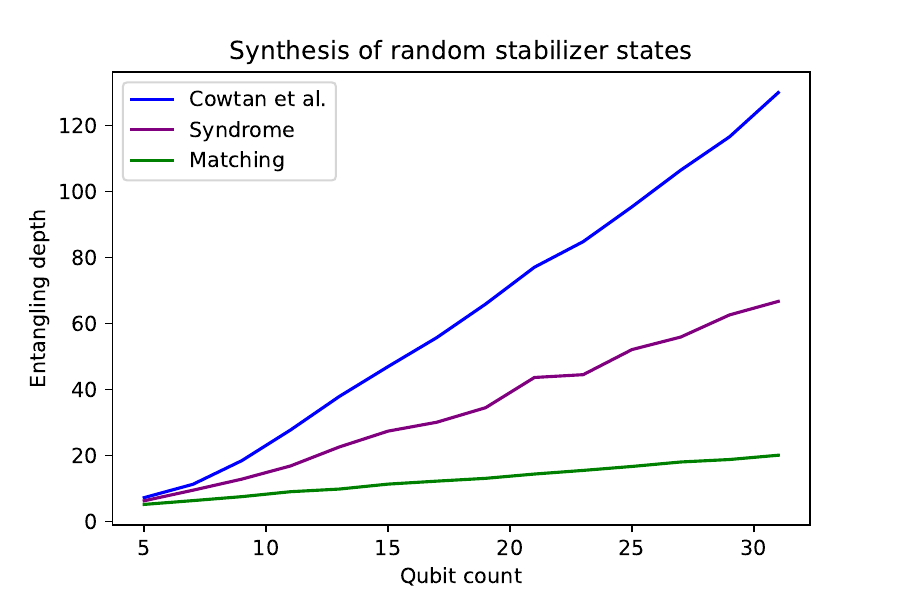}&\includegraphics[scale=0.5]{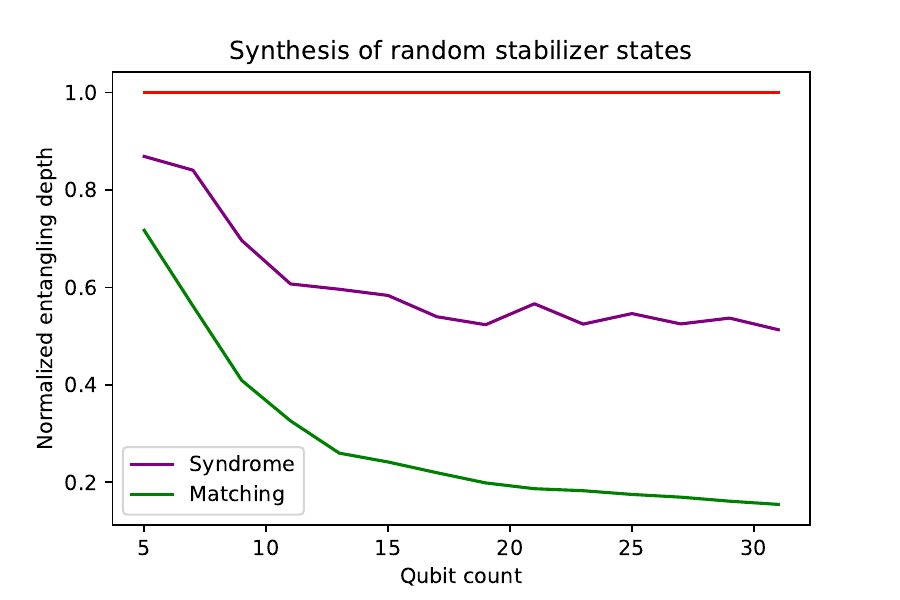}
        \end{tabular}
    \end{center}
    \end{adjustwidth}
    \caption{Left: Entangling depth (as in CNOT equivalent depth) as a function of the number of qubits. Each point is generated by averaging the entangling depth of 40 random Clifford tableaux. Right: the same data, but normalized by the entangling depth obtained via the state-of-the-art method of \cite{cowtan2020generic}.}\label{fig:random_depth}
\end{figure}

\begin{figure}[h]
    \begin{adjustwidth}{-1.5cm}{}
    \begin{center}
        \begin{tabular}{cc}
            \includegraphics[scale=0.5]{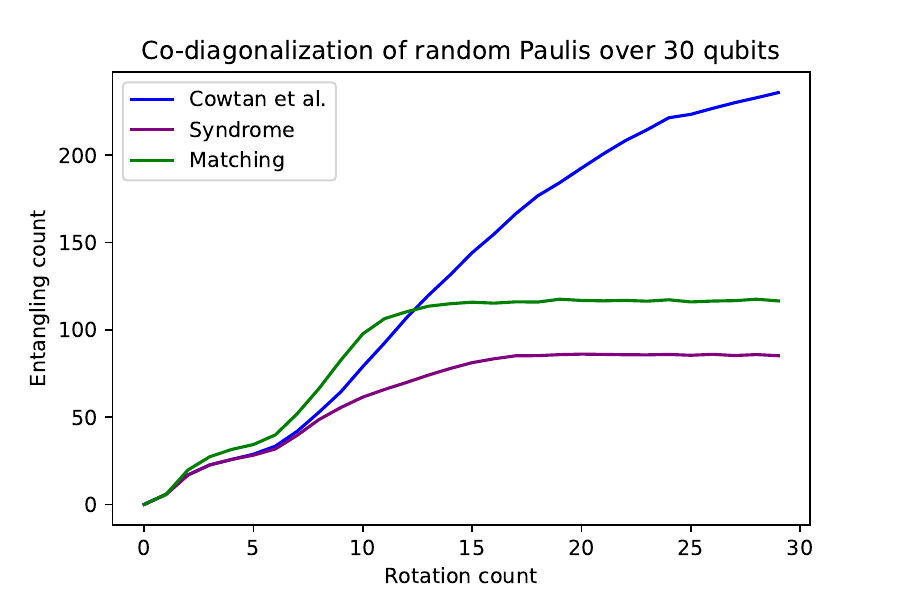}&\includegraphics[scale=0.5]{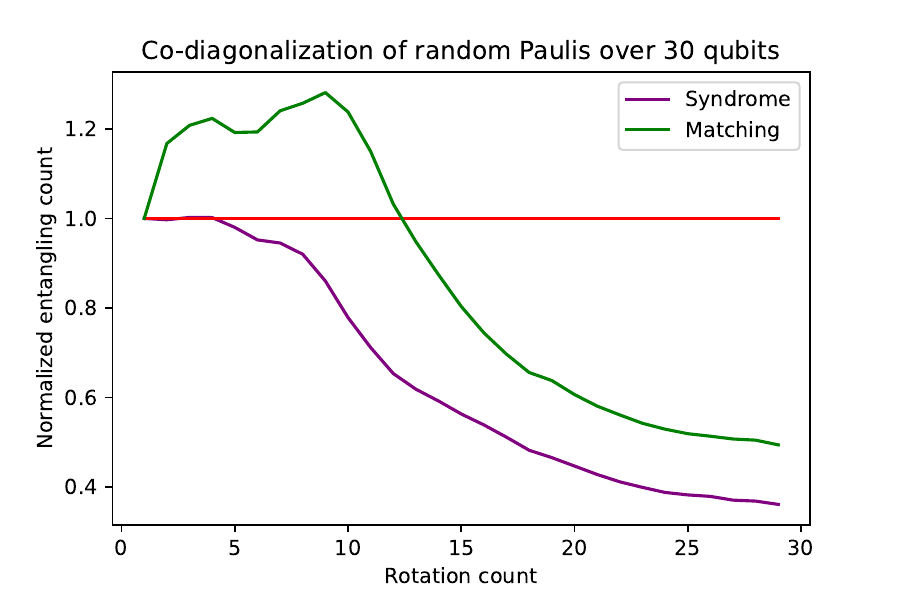}
        \end{tabular}
    \end{center}
    \end{adjustwidth}
    \caption{Left: Entangling count (as in CNOT equivalent count) as a function of the number of rotations in a commuting group of rotations over 30 qubits. Each point is generated by averaging the entangling count of 40 random rotation groups. Right: the same data, but normalized by the entangling count obtained via the state-of-the-art method of \cite{cowtan2020generic}.}\label{fig:random_count_fixed}
\end{figure}

\begin{figure}[h]
    \begin{adjustwidth}{-1.5cm}{}
    \begin{center}
        \begin{tabular}{cc}
            \includegraphics[scale=0.5]{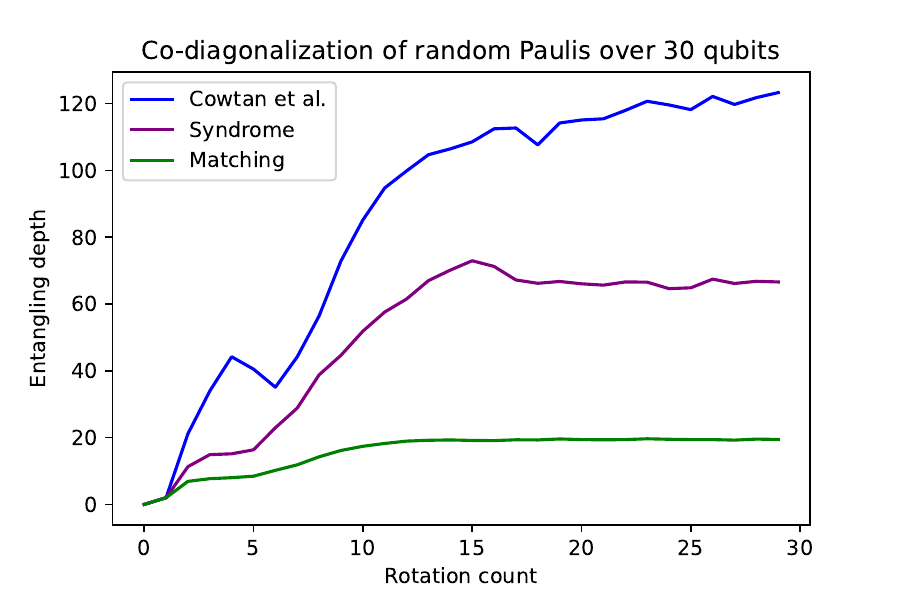}&\includegraphics[scale=0.5]{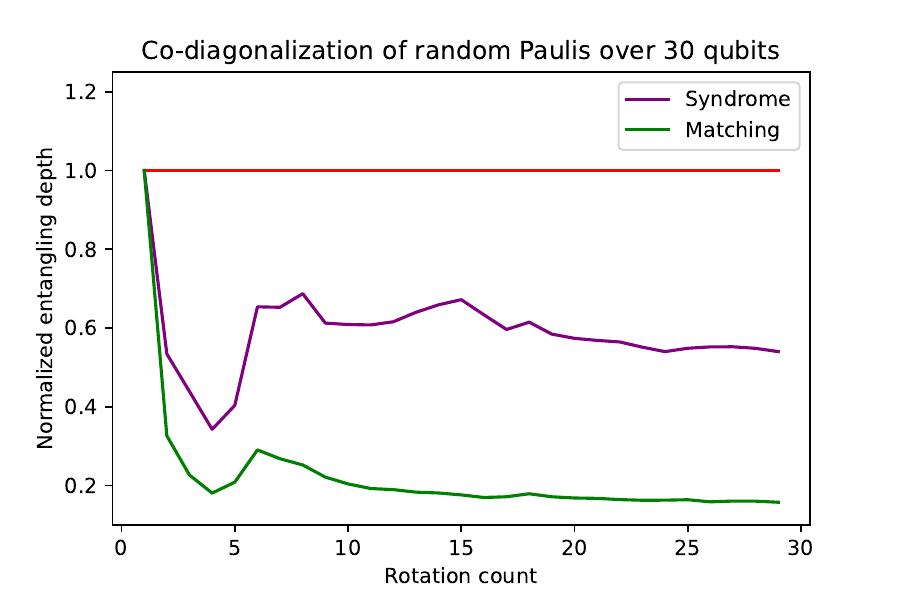}
        \end{tabular}
    \end{center}
    \end{adjustwidth}
    \caption{Left: Entangling depth (as in CNOT equivalent depth) as a function of the number of rotations in a commuting group of rotations over 30 qubits. Each point is generated by averaging the entangling depth of 40 random rotation groups. Right: the same data, but normalized by the entangling depth obtained via the state-of-the-art method of \cite{cowtan2020generic}.}\label{fig:random_depth_fixed}

\end{figure}

\noindent{\bf Structured instances:} In order to assess the performances of these algorithms over interesting use cases, we embedded them into the meta-heuristics presented in \cite{cowtan2020generic} that tackles synthesis of sequences of Pauli rotations (up to arbitrary commutation of these rotations). In this meta-heuristic, Pauli rotations are first  into buckets of pairwise commuting rotations. Then, rotations in each bucket are co-diagonalized (using any co-diagonalization routine), and
finally synthesized via a phase polynomial synthesis routine (GraySynth from \cite{amy2018controlled} in the formulation of \cite{cowtan2020generic}). Our two algorithms, together with the co-diagonalization routine from \cite{cowtan2020generic} effectively gives us three different algorithms to synthesize sequences of Pauli rotations. We ran these algorithm on the UCCSD (Unitary Coupled-Cluster Singles and Doubles) instances provided in \cite{cowtan2020generic}. 
Once again, we measured the entangling gate count and depth of the resulting co-diagonalization circuits produced during the synthesis, and reported them in Tables \ref{table:bench_mol_1},\ref{table:bench_mol_2},\ref{table:bench_mol_3}.

As expected, the Syndrome algorithm systematically outfperforms the Cowtan et al. approach. For large molecules, it reaches up to a $10\%$ CNOT count reduction. Interestingly, on these particularly structures instances, the Matching algorithm manages to significantly reduce the circuit depth (up to $71\%$ depth reduction) while only adding $1\%$ CNOT gates. This opens up the possibility of generating UCCSD Ansätze with a significatively lower CNOT depth than the one proposed in
\cite{cowtan2020generic}, as long as one can also improve the entangling depth of the phase polynomials in between the co-diagonalization circuits.

\begin{table}
\centering
\resizebox{1.1\columnwidth}{!}{
\begin{tabular}{ccccccccccccccccc}
     \toprule
     \toprule
    ~  &~  &~  &~  && \multicolumn{2}{c}{Cowtan et al.} && \multicolumn{4}{c}{Syndrome} && \multicolumn{4}{c}{Matching} \\
    \cmidrule(lr){6-7} \cmidrule(lr){9-12} \cmidrule(lr){14-17} 
     mol. & orbitals & enc. & basis  && count & depth && count & rel. count & depth & rel. depth && count & rel. count & depth & rel. depth \\
\\
H2 & cmplt & P & sto3g && 0 & 0 && 0&0\% & 0&0\% && 0&0\% & 0&0\%\\
H2 & cmplt & BK & sto3g && 0 & 0 && 0&0\% & 0&0\% && 0&0\% & 0&0\%\\
H2 & cmplt & JW & sto3g && 8 & 4 && 8&-0.00\% & 4&-0.00\% && 8&-0.00\% & 4&-0.00\%\\
H2 & cmplt & P & 631g && 50 & 46 && 46&-8.00\% & 38&-17.39\% && 50&-0.00\% & 38&-17.39\%\\
H2 & cmplt & BK & 631g && 50 & 48 && 50&-0.00\% & 48&-0.00\% && 52&4.00\% & 40&-16.67\%\\
H4 & cmplt & BK & sto3g && 66 & 38 && 64&-3.03\% & 40&5.26\% && 68&3.03\% & 40&5.26\%\\
H2 & cmplt & JW & 631g && 68 & 64 && 68&-0.00\% & 64&-0.00\% && 68&-0.00\% & 46&-28.12\%\\
H4 & cmplt & P & sto3g && 76 & 50 && 76&-0.00\% & 50&-0.00\% && 88&15.79\% & 62&24.00\%\\
LiH & frz & P & sto3g && 98 & 90 && 92&-6.12\% & 82&-8.89\% && 98&-0.00\% & 74&-17.78\%\\
H4 & cmplt & JW & sto3g && 98 & 50 && 98&-0.00\% & 48&-4.00\% && 98&-0.00\% & 36&-28.00\%\\
LiH & frz & JW & sto3g && 120 & 114 && 118&-1.67\% & 110&-3.51\% && 120&-0.00\% & 80&-29.82\%\\
LiH & frz & BK & sto3g && 134 & 122 && 128&-4.48\% & 106&-13.11\% && 142&5.97\% & 90&-26.23\%\\
NH & frz & JW & sto3g && 276 & 176 && 276&-0.00\% & 160&-9.09\% && 276&-0.00\% & 112&-36.36\%\\
NH & frz & P & sto3g && 334 & 242 && 330&-1.20\% & 206&-14.88\% && 368&10.18\% & 178&-26.45\%\\
H2O & frz & JW & sto3g && 428 & 236 && 428&-0.00\% & 218&-7.63\% && 428&-0.00\% & 164&-30.51\%\\
NH & cmplt & JW & sto3g && 428 & 236 && 428&-0.00\% & 218&-7.63\% && 428&-0.00\% & 164&-30.51\%\\
LiH & cmplt & JW & sto3g && 440 & 220 && 440&-0.00\% & 208&-5.45\% && 440&-0.00\% & 160&-27.27\%\\
LiH & cmplt & P & sto3g && 480 & 302 && 470&-2.08\% & 266&-11.92\% && 532&10.83\% & 248&-17.88\%\\
NH & frz & BK & sto3g && 486 & 426 && 452&-7.00\% & 274&-35.68\% && 524&7.82\% & 196&-53.99\%\\
CH2 & frz & JW & sto3g && 496 & 232 && 496&-0.00\% & 206&-11.21\% && 516&4.03\% & 136&-41.38\%\\
H2 & cmplt & P & ccpvdz && 520 & 510 && 484&-6.92\% & 446&-12.55\% && 522&0.38\% & 374&-26.67\%\\
LiH & frz & P & 631g && 520 & 510 && 496&-4.62\% & 464&-9.02\% && 522&0.38\% & 374&-26.67\%\\
H2O & frz & P & sto3g && 526 & 368 && 516&-1.90\% & 312&-15.22\% && 582&10.65\% & 274&-25.54\%\\
NH & cmplt & P & sto3g && 526 & 368 && 512&-2.66\% & 298&-19.02\% && 582&10.65\% & 278&-24.46\%\\
LiH & frz & JW & 631g && 554 & 528 && 554&-0.00\% & 526&-0.38\% && 556&0.36\% & 376&-28.79\%\\
H2 & cmplt & JW & ccpvdz && 554 & 528 && 554&-0.00\% & 528&-0.00\% && 556&0.36\% & 376&-28.79\%\\
     \bottomrule
 \end{tabular}}
\caption{Entangling count and depth of the co-diagonalization circuits produced by our algorithms (Syndrome and Matching) and the state-of-the-art routine (Cowtan et al.). For the Syndrome and Matching algorithms, we additionally display the improvement (or degradation) relative to the state of the art.}\label{table:bench_mol_1}
\end{table}
\begin{table}

\resizebox{1.1\columnwidth}{!}{
\begin{tabular}{ccccccccccccccccc}
     \toprule
     \toprule
    ~  &~  &~  &~  && \multicolumn{2}{c}{Cowtan et al.} && \multicolumn{4}{c}{Syndrome} && \multicolumn{4}{c}{Matching} \\
    \cmidrule(lr){6-7} \cmidrule(lr){9-12} \cmidrule(lr){14-17} 
     mol. & orbitals & enc. & basis  && count & depth && count & rel. count & depth & rel. depth && count & rel. count & depth & rel. depth \\
\\
H2 & cmplt & BK & ccpvdz && 646 & 596 && 622&-3.72\% & 540&-9.40\% && 654&1.24\% & 398&-33.22\%\\
LiH & frz & BK & 631g && 646 & 596 && 620&-4.02\% & 534&-10.40\% && 656&1.55\% & 400&-32.89\%\\
H2O & cmplt & JW & sto3g && 776 & 490 && 774&-0.26\% & 452&-7.76\% && 790&1.80\% & 322&-34.29\%\\
CH2 & frz & P & sto3g && 778 & 512 && 766&-1.54\% & 432&-15.62\% && 870&11.83\% & 332&-35.16\%\\
NH & cmplt & BK & sto3g && 898 & 658 && 846&-5.79\% & 484&-26.44\% && 978&8.91\% & 380&-42.25\%\\
H2O & frz & BK & sto3g && 898 & 658 && 846&-5.79\% & 500&-24.01\% && 978&8.91\% & 378&-42.55\%\\
LiH & cmplt & BK & sto3g && 948 & 652 && 904&-4.64\% & 518&-20.55\% && 1074&13.29\% & 380&-41.72\%\\
CH2 & cmplt & JW & sto3g && 1010 & 550 && 1004&-0.59\% & 464&-15.64\% && 1082&7.13\% & 336&-38.91\%\\
H4 & cmplt & JW & 631g && 1024 & 524 && 1024&-0.00\% & 488&-6.87\% && 1024&-0.00\% & 376&-28.24\%\\
H2O & cmplt & P & sto3g && 1084 & 734 && 1030&-4.98\% & 594&-19.07\% && 1188&9.59\% & 496&-32.43\%\\
CH2 & frz & BK & sto3g && 1094 & 710 && 1040&-4.94\% & 540&-23.94\% && 1188&8.59\% & 362&-49.01\%\\
H4 & cmplt & P & 631g && 1134 & 690 && 1118&-1.41\% & 606&-12.17\% && 1228&8.29\% & 552&-20.00\%\\
H4 & cmplt & BK & 631g && 1334 & 848 && 1302&-2.40\% & 690&-18.63\% && 1566&17.39\% & 668&-21.23\%\\
H2O & cmplt & BK & sto3g && 1406 & 972 && 1278&-9.10\% & 736&-24.28\% && 1470&4.55\% & 502&-48.35\%\\
CH2 & cmplt & P & sto3g && 1412 & 900 && 1356&-3.97\% & 728&-19.11\% && 1558&10.34\% & 506&-43.78\%\\
H8 & cmplt & JW & sto3g && 1534 & 710 && 1518&-1.04\% & 574&-19.15\% && 1680&9.52\% & 420&-40.85\%\\
LiH & frz & P & ccpvdz && 1808 & 1798 && 1748&-3.32\% & 1690&-6.01\% && 1810&0.11\% & 1252&-30.37\%\\
H8 & cmplt & P & sto3g && 1858 & 1074 && 1812&-2.48\% & 856&-20.30\% && 2014&8.40\% & 572&-46.74\%\\
LiH & frz & JW & ccpvdz && 1866 & 1808 && 1866&-0.00\% & 1806&-0.11\% && 1868&0.11\% & 1256&-30.53\%\\
H8 & cmplt & BK & sto3g && 1894 & 1096 && 1812&-4.33\% & 908&-17.15\% && 2176&14.89\% & 616&-43.80\%\\
LiH & frz & BK & ccpvdz && 2036 & 1952 && 1982&-2.65\% & 1828&-6.35\% && 2064&1.38\% & 1326&-32.07\%\\
CH2 & cmplt & BK & sto3g && 2372 & 1486 && 2142&-9.70\% & 1154&-22.34\% && 2442&2.95\% & 664&-55.32\%\\
LiH & cmplt & JW & 631g && 2482 & 1410 && 2474&-0.32\% & 1270&-9.93\% && 2526&1.77\% & 958&-32.06\%\\
LiH & cmplt & P & 631g && 3682 & 2402 && 3572&-2.99\% & 1862&-22.48\% && 3986&8.26\% & 1468&-38.88\%\\
NH & frz & JW & 631g && 4208 & 2636 && 4156&-1.24\% & 1966&-25.42\% && 4598&9.27\% & 1368&-48.10\%\\
H2 & cmplt & P & ccpvtz && 4498 & 4488 && 4416&-1.82\% & 4328&-3.57\% && 4500&0.04\% & 3074&-31.51\%\\
     \bottomrule
 \end{tabular}}
    \caption{See Table \ref{table:bench_mol_1}.}
    \label{table:bench_mol_2}
\end{table}
\begin{table}

\resizebox{1.1\columnwidth}{!}{
\begin{tabular}{ccccccccccccccccc}
     \toprule
     \toprule
    ~  &~  &~  &~  && \multicolumn{2}{c}{Cowtan et al.} && \multicolumn{4}{c}{Syndrome} && \multicolumn{4}{c}{Matching} \\
    \cmidrule(lr){6-7} \cmidrule(lr){9-12} \cmidrule(lr){14-17} 
     mol. & orbitals & enc. & basis  && count & depth && count & rel. count & depth & rel. depth && count & rel. count & depth & rel. depth \\
\\
H2 & cmplt & JW & ccpvtz && 4586 & 4488 && 4586&-0.00\% & 4488&-0.00\% && 4588&0.04\% & 3076&-31.46\%\\
NH & frz & P & 631g && 4838 & 2894 && 4710&-2.65\% & 2216&-23.43\% && 5282&9.18\% & 1508&-47.89\%\\
H2 & cmplt & BK & ccpvtz && 5056 & 4808 && 4932&-2.45\% & 4602&-4.28\% && 5066&0.20\% & 3154&-34.40\%\\
LiH & cmplt & BK & 631g && 6600 & 4344 && 5954&-9.79\% & 2954&-32.00\% && 7170&8.64\% & 1810&-58.33\%\\
NH & frz & BK & 631g && 6946 & 4380 && 6524&-6.08\% & 3132&-28.49\% && 7646&10.08\% & 1832&-58.17\%\\
NH & cmplt & JW & 631g && 7014 & 3748 && 6904&-1.57\% & 2868&-23.48\% && 7880&12.35\% & 2004&-46.53\%\\
H2O & frz & JW & 631g && 7436 & 3684 && 7318&-1.59\% & 2928&-20.52\% && 8546&14.93\% & 2134&-42.07\%\\
NH & cmplt & P & 631g && 7522 & 4104 && 7212&-4.12\% & 3120&-23.98\% && 8128&8.06\% & 1884&-54.09\%\\
H2O & frz & P & 631g && 8700 & 4834 && 8408&-3.36\% & 3666&-24.16\% && 9584&10.16\% & 2236&-53.74\%\\
C2H4 & frz & JW & sto3g && 8854 & 3912 && 8564&-3.28\% & 3004&-23.21\% && 10472&18.27\% & 1990&-49.13\%\\
LiH & cmplt & JW & ccpvdz && 9684 & 5480 && 9504&-1.86\% & 4748&-13.36\% && 10156&4.87\% & 3748&-31.61\%\\
H4 & cmplt & JW & ccpvdz && 10162 & 5334 && 10148&-0.14\% & 4774&-10.50\% && 10766&5.94\% & 3980&-25.38\%\\
H2O & frz & BK & 631g && 10746 & 6002 && 10040&-6.57\% & 4184&-30.29\% && 12168&13.23\% & 2402&-59.98\%\\
NH & cmplt & BK & 631g && 11956 & 7418 && 10486&-12.30\% & 4762&-35.80\% && 12526&4.77\% & 2546&-65.68\%\\
H2O & cmplt & JW & 631g && 12398 & 5948 && 12052&-2.79\% & 4674&-21.42\% && 14400&16.15\% & 3082&-48.18\%\\
C2H4 & frz & P & sto3g && 12660 & 6820 && 11646&-8.01\% & 4620&-32.26\% && 13404&5.88\% & 2556&-62.52\%\\
H4 & cmplt & P & ccpvdz && 13594 & 8446 && 13158&-3.21\% & 6580&-22.09\% && 14702&8.15\% & 5198&-38.46\%\\
LiH & cmplt & P & ccpvdz && 13992 & 8630 && 13418&-4.10\% & 6434&-25.45\% && 15102&7.93\% & 4738&-45.10\%\\
H4 & cmplt & BK & ccpvdz && 17058 & 10284 && 16328&-4.28\% & 7802&-24.13\% && 19360&13.50\% & 6072&-40.96\%\\
H2O & cmplt & P & 631g && 18242 & 9642 && 16742&-8.22\% & 6584&-31.72\% && 19458&6.67\% & 3674&-61.90\%\\
C2H4 & cmplt & JW & sto3g && 18734 & 9008 && 17538&-6.38\% & 6734&-25.24\% && 21654&15.59\% & 4114&-54.33\%\\
LiH & cmplt & BK & ccpvdz && 18906 & 11342 && 17460&-7.65\% & 7942&-29.98\% && 21090&11.55\% & 5410&-52.30\%\\
C2H4 & frz & BK & sto3g && 20720 & 9718 && 17730&-14.43\% & 6904&-28.96\% && 21914&5.76\% & 3422&-64.79\%\\
C2H4 & cmplt & P & sto3g && 22402 & 11642 && 20142&-10.09\% & 7826&-32.78\% && 23576&5.24\% & 4082&-64.94\%\\
H2O & cmplt & BK & 631g && 28008 & 14542 && 23914&-14.62\% & 9718&-33.17\% && 29076&3.81\% & 4548&-68.73\%\\
C2H4 & cmplt & BK & sto3g && 41642 & 19966 && 34076&-18.17\% & 12622&-36.78\% && 42030&0.93\% & 5774&-71.08\%\\
     \bottomrule
 \end{tabular}}

    \caption{See Table \ref{table:bench_mol_1}.}
    \label{table:bench_mol_3}

\end{table}

\subsection{Clifford operator synthesis}

We also compared our Clifford isometry synthesis algorithms with state-of-art heuristics for Clifford operator synthesis. We ran seven different synthesis algorithms: 
\begin{itemize}
    \item our count-optimized synthesis algorithm, using different number of syndrome decoding iterations (1, 10, 100, 500), (key {\bf Syndrome $n$}, for $n$ iterations)
    \item our depth-optimized synthesis algorithm (key {\bf Depth}).
    \item the count-optimized heuristic from \cite{bravyi2021clifford} (key {\bf bshm}),
    \item and the depth-optimized algorithm from \cite{Maslov_2022} (key {\bf mz}).
\end{itemize}

We generated random Clifford operators over different qubit counts and averaged the entangling count and depth over 20 instances for each input size. The results are presented in Figure \ref{fig:rnd_clifford_}.

Our syndrome decoding based heuristic seems to perform quite well for small number of qubits. 
Calling the syndrome decoding greedy solver 10 times per decoding instance seems enough to beat the state-of-the-art methods up to 60 qubits. Increasing that number only improves performances. However a quick numerical extrapolation show that even using 500 calls to the solver is not enough to reach state-the-art performances for a few hundred qubits. This leads us to believe that for any fixed number of calls to the decoding solver, the resulting heuristic will eventually be worse than the state-of-the-art heuristic of \cite{bravyi2021clifford}. This behavior was already observed in \cite{DBLP:conf/rc/BrugiereBVMA20, de2021decoding} for linear operator synthesis.
Nevertheless, we think that our heuristic remains competitive for practical applications where one might be ready to spend a large amount of time/resources in order to optimize a single Clifford portion in a larger circuit.

Our {\bf Depth} method seems to outperform the algorithm of \cite{Maslov_2022} even asymptotically. Of course, the main interest of the method presented in \cite{Maslov_2022} lies in the derivation of an upper bound on the produced circuit's depth, which we fail to provide in our work.

\begin{figure}[h!]
    \centering
    \includegraphics[scale=0.45]{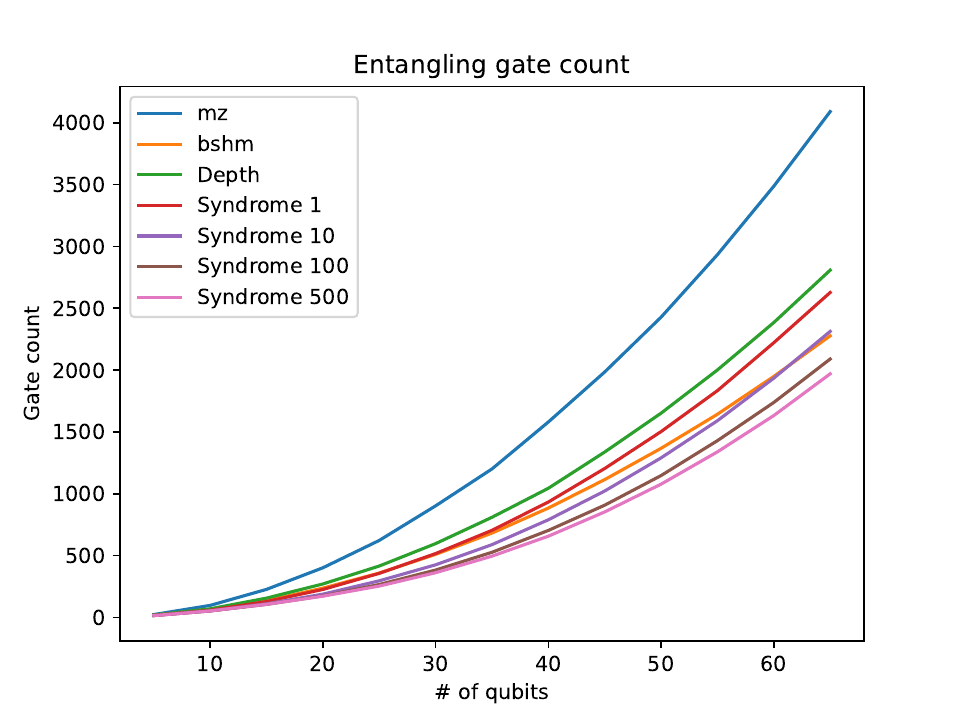}\includegraphics[scale=0.45]{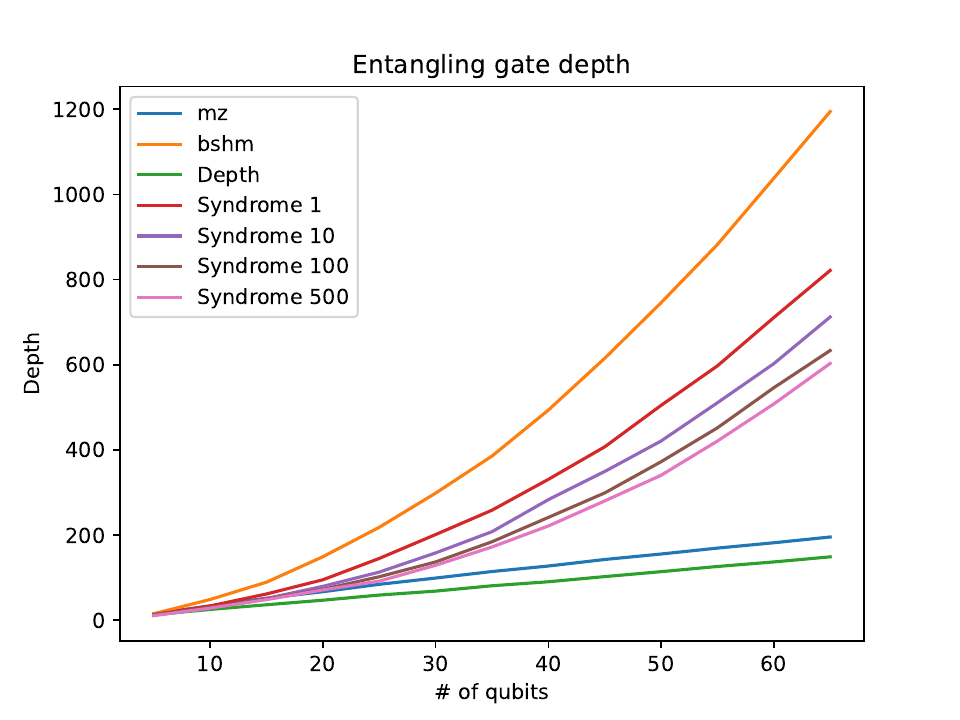}
    \caption{Entangling gate count and depth as a function of the number of qubits averaged over 20 random Clifford operators.}
    \label{fig:rnd_clifford_}
\end{figure}

\begin{figure}[h!]
    \centering
    \includegraphics[scale=0.45]{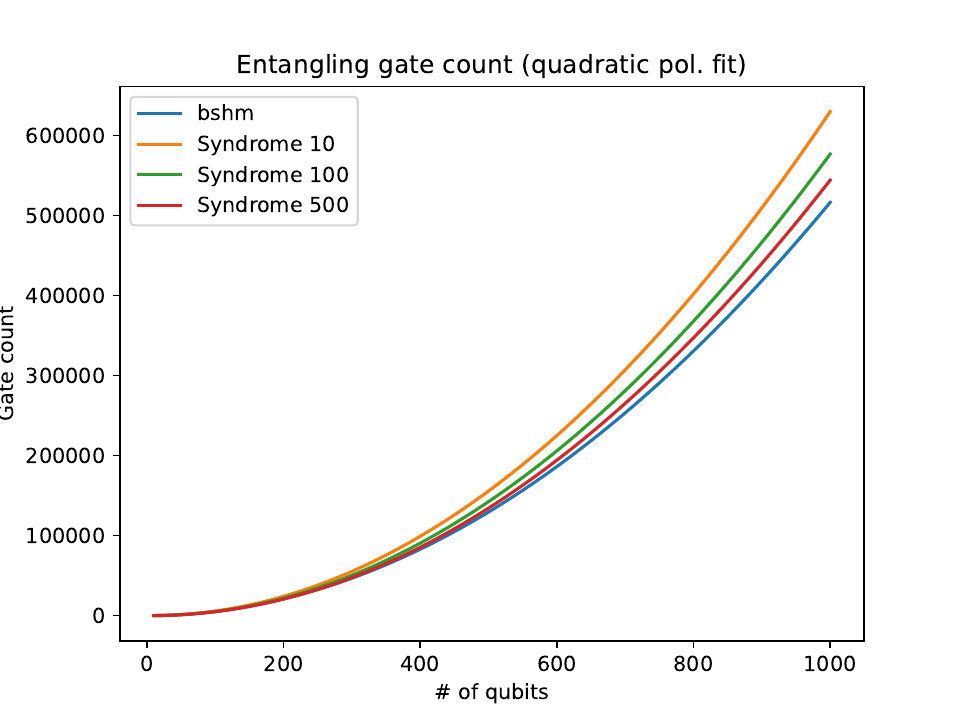}\includegraphics[scale=0.45]{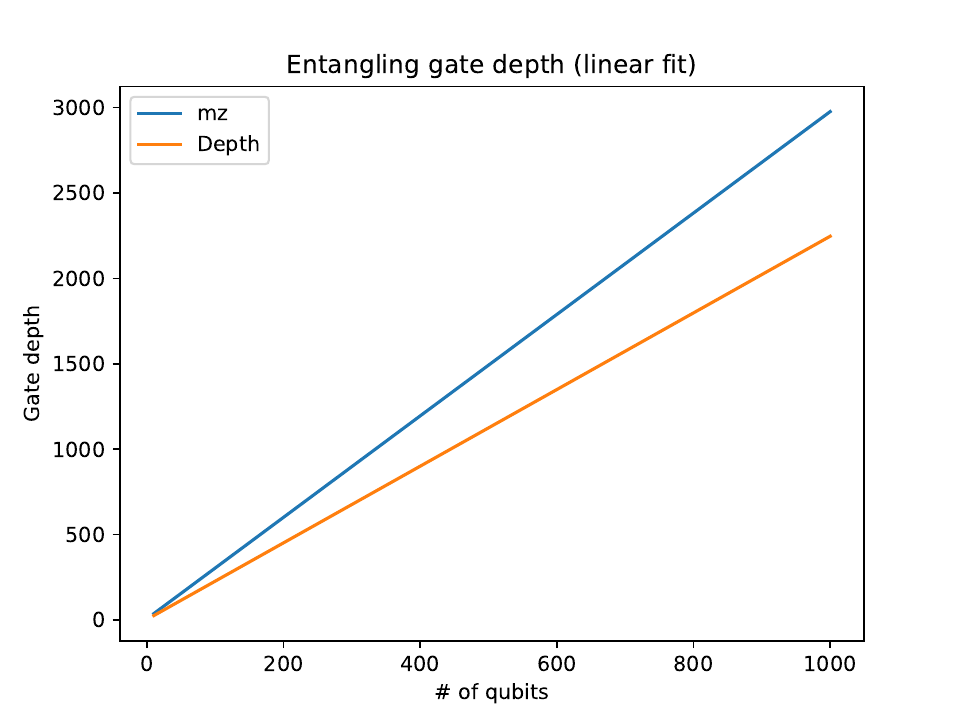}
    \caption{Extrapolated entangling gate count and depth for relevant methods. The fit are degree 2 polynomials for count and linear functions for depth. The fits were derived from the data presented in Figure \ref{fig:rnd_clifford_}.}
    \label{fig:rnd_clifford_ext}
\end{figure}

\section{Conclusion}

This article provides a generic compilation method for Clifford isometries. The framework exploits the graph-state structure of Clifford operators already exposed in previous works in the literature but with a minimal set of theoretical requirements. Overall, the synthesis of Clifford isometries is encoded as the reduction of a symmetric boolean matrix into a purposely defined identity matrix. The available operations are elementary: row and column operations, entries flip, and suitable rank-one updates. The versatility of our framework is highlighted by its capability of rediscovering almost every normal forms of the literature for Clifford operators and stabilizer states. Our framework also highlights deeper understandings about the role of phase polynomials and equivalent structures (namely CNOT + Rx circuits) in the construction of Clifford circuits. One application is the proof that any $n$-qubit Clifford circuit can be executed in two-qubit depth $7n$ on an LNN architecture. We also practically improve the codiagonalization of Pauli rotations and use it to optimize quantum circuits from quantum chemistry.

Still, while being simple and accessible, our framework lacks some universality. The Hadamard gate can only be applied in specific cases, and we are constrained to keep our Clifford isometries in a graph-state form. Therefore our framework can fail at offering the possibility to find optimal Clifford circuits for one given operator. As a future work it would be interesting to extend the graph-state formalism and see if a simple framework as universal as the tableau formalism and as simple as the graph-state approach can be derived.

\section*{Acknowledgments}

The authors thank Simon Perdrix for comments on an earlier version of this article. This work has been supported by the French state through the ANR as a part of \emph{Plan France 2030}, projects NISQ2LSQ (ANR-22-PETQ-0006) and EPiQ (ANR-22-PETQ-0007), as well as the ANR project SoftQPro (ANR-17-CE25-0009).

\bibliographystyle{quantum}
\bibliography{Biblio}

\section*{Action of the Clifford gates in the graph-state formalism}

\begin{enumerate} 
  \item $S$ on the output qubit $i$ ($1 \leq i \leq n$) \\
We perform the row operation $M[i,:] = M[i,:] \oplus M[i+n,:]$. We get 
\[ M = \left[ \begin{array}{c|c} G_nB \oplus \tilde{e}_{ii}B & (B^T)^{-1} \begin{pmatrix} I_k \\ 0 \end{pmatrix} \oplus G_nB \begin{pmatrix} G_k \\ 0 \end{pmatrix} \oplus \tilde{e}_{ii} B \begin{pmatrix} G_k \\ 0 \end{pmatrix} \\ \hline B & B \begin{pmatrix} G_k \\ 0 \end{pmatrix} \end{array} \right] \]
and we have $G_n = G_n \oplus \tilde{e}_{ii}$. Equivalently we have 
\[ G = G \oplus \tilde{e}_{i+k, i+k} \]

  \item $S$ on the input qubit $i$ ($1 \leq i \leq k$) \\
We perform the column operation $M[:,i+n] = M[:,i+n] \oplus M[:,i]$. We get 
  \[ M = \left[ \begin{array}{c|c} G_nB & (B^T)^{-1} \begin{pmatrix} I_k \\ 0 \end{pmatrix} \oplus G_nB \begin{pmatrix} G_k \\ 0 \end{pmatrix} \oplus G_nB\tilde{e}_{ii} \\ \hline B & B \begin{pmatrix} G_k \\ 0 \end{pmatrix} \oplus B\tilde{e}_{ii} \end{array} \right] \]  
  and we have $G_k = G_k \oplus \tilde{e}_{ii}$. Equivalently we have 
  \[ G = G \oplus \tilde{e}_{i, i} \]

  \item $CZ$ on the output qubits $i,j$ ($1 \leq i < j \leq n$) \\
  We perform the row operations $M[i,:] = M[i,:] \oplus M[j+n,:]$ and $M[j,:] = M[j,:] \oplus M[i+n,:]$. We get 
  \[ M = \left[ \begin{array}{c|c} G_nB \oplus \tilde{e}_{ij}B & (B^T)^{-1} \begin{pmatrix} I_k \\ 0 \end{pmatrix} \oplus G_nB \begin{pmatrix} G_k \\ 0 \end{pmatrix} \oplus \tilde{e}_{ij} B \begin{pmatrix} G_k \\ 0 \end{pmatrix} \\ \hline B & B \begin{pmatrix} G_k \\ 0 \end{pmatrix} \end{array} \right] \]
  and we have $G_n = G_n \oplus \tilde{e}_{ij}$. Equivalently we have 
  \[ G = G \oplus \tilde{e}_{i+k, j+k} \]

  \item $CZ$ on the input qubits $i,j$ ($1 \leq i < j \leq n$) \\
  We perform the column operations $M[:,i+n] = M[:,i+n] \oplus M[:,j]$ and $M[:,j+n] = M[:,j+n] \oplus M[:,i]$. We get
  \[ M = \left[ \begin{array}{c|c} G_nB & (B^T)^{-1} \begin{pmatrix} I_k \\ 0 \end{pmatrix} \oplus G_nB \begin{pmatrix} G_k \\ 0 \end{pmatrix} \oplus G_nB\tilde{e}_{ij} \\ \hline B & B \begin{pmatrix} G_k \\ 0 \end{pmatrix} \oplus B\tilde{e}_{ij} \end{array} \right] \]  
  and we have $G_k = G_k \oplus \tilde{e}_{ij}$. Equivalently we have 
  \[ G = G \oplus \tilde{e}_{ij} \]

  \item $CNOT$ on the output qubits with control $i$, target $j$ ($1 \leq i < j \leq n$) \\
  We perform the row operations $M[i,:] = M[i,:] \oplus M[j,:]$ and $M[j+n,:] = M[j+n,:] \oplus M[i+n,:]$. We get 
  \[ M = \left[ \begin{array}{c|c} E_{ij}G_nB & E_{ij} (B^T)^{-1} \begin{pmatrix} I_k \\ 0 \end{pmatrix} \oplus E_{ij} G_nB \begin{pmatrix} G_k \\ 0 \end{pmatrix} \\ \hline E_{ji} B & E_{ji} B \begin{pmatrix} G_k \\ 0 \end{pmatrix} \end{array} \right]. \]
  We write 
  \[ M = \left[ \begin{array}{c|c} E_{ij}G_nE_{ji} \times E_{ji}B &  (E_{ji}B)^{-T} \begin{pmatrix} I_k \\ 0 \end{pmatrix} \oplus E_{ij} G_nE_{ji} \times E_{ji}B \begin{pmatrix} G_k \\ 0 \end{pmatrix} \\ \hline E_{ji} B & E_{ji} B \begin{pmatrix} G_k \\ 0 \end{pmatrix} \end{array} \right] \]
  and we get $G_n = E_{ij}G_nE_{ji}$ and $B = E_{ji}B$, i.e, $B_{k,n} = B_{k,n}E_{ji}$ or $B_{k,n}^T = E_{ij}B_{k,n}^T$. Equivalently we have 
  \[ G = E_{i+k,j+k}GE_{j+k,i+k}. \]

  \item $CNOT$ on the input qubits with control $i$, target $j$ ($1 \leq i < j \leq k$) \\ 
  We perform the column operations $M[:,j] = M[:,j] \oplus M[:,i]$ and $M[:,i+n] = M[:,i+n] \oplus M[:,j+n]$. We get 
  \[ M = \left[ \begin{array}{c|c} G_nBE_{ij} & (B^T)^{-1} \begin{pmatrix} I_k \\ 0 \end{pmatrix} E_{ji} \oplus G_nB \begin{pmatrix} G_k \\ 0 \end{pmatrix} E_{ji} \\ \hline B E_{ij} & B \begin{pmatrix} G_k \\ 0 \end{pmatrix} E_{ji} \end{array} \right]. \]
  We write 
  \[ M = \left[ \begin{array}{c|c} G_nBE_{ij} & (BE_{ij})^{-T}) \begin{pmatrix} I_k \\ 0 \end{pmatrix} \oplus G_nBE_{ij} \begin{pmatrix} E_{ij}G_kE_{ji} \\ 0 \end{pmatrix}  \\ \hline B E_{ij} & BE_{ij} \begin{pmatrix} E_{ij}G_kE_{ji} \\ 0 \end{pmatrix} \end{array} \right] \]
  and we get $G_k = E_{ij}G_kE_{ji}$ and $B = BE_{ij}$, i.e, $B_{k,n} = E_{ij}B_{k,n}$. Equivalently we have 
  \[ G = E_{ij}GE_{ji}. \]
  \item if $G_n[i,i] = 0$, then we can apply an $R_x(\pi/2)$ on the output qubit $i$, we get 
  \[ M = \left[ \begin{array}{c|c} G_nB & (B^T)^{-1} \begin{pmatrix} I_k \\ 0 \end{pmatrix} \oplus G_nB \begin{pmatrix} G_k \\ 0 \end{pmatrix} \\ \hline (I + e_{ii}G_n)B & (I + e_{ii}G_n)B \begin{pmatrix} G_k \\ 0 \end{pmatrix} \oplus e_{ii}(B^T)^{-1} \begin{pmatrix} I_k \\ 0 \end{pmatrix} \end{array} \right].  \]
  One can check that $(I + e_{ii}G)^2 = I$, therefore:
  \[ \hspace*{-2cm} M = \left[ \begin{array}{c|c} G_n(I + e_{ii}G_n)(I + e_{ii}G_n)B & (((I + e_{ii}G_n)(I + e_{ii}G_n)B)^T)^{-1} \begin{pmatrix} I_k \\ 0 \end{pmatrix} \oplus G_n(I + e_{ii}G_n)(I + e_{ii}G_n)B \begin{pmatrix} G_k \\ 0 \end{pmatrix} \\ \hline (I + e_{ii}G_n)B & (I + e_{ii}G_n)B \begin{pmatrix} G_k \\ 0 \end{pmatrix} \oplus e_{ii}(((I + e_{ii}G_n)(I + e_{ii}G_n)B)^T)^{-1} \begin{pmatrix} I_k \\ 0 \end{pmatrix} \end{array} \right].  \]

  It is easy to see that we will have 
  \[ G_n \leftarrow G_n(I + e_{ii}G_n), \]
  \[ B \leftarrow (I + e_{ii}G_n)B, \]

  but it is trickier to see the effect on $G_k$. First, let's simplify the formula by replacing by the new values of $B$ and $G_n$:
  \[ M = \left[ \begin{array}{c|c} G'_nB' & (I + e_{ii}G_n)^T(B'^T)^{-1} \begin{pmatrix} I_k \\ 0 \end{pmatrix} \oplus G'_nB' \begin{pmatrix} G_k \\ 0 \end{pmatrix} \\ \hline B' & B' \begin{pmatrix} G_k \\ 0 \end{pmatrix} \oplus e_{ii}(I + e_{ii}G_n)^T(B'^T)^{-1} \begin{pmatrix} I_k \\ 0 \end{pmatrix} \end{array} \right].  \]

  We have $e_{ii}(I + e_{ii}G_n)^T = e_{ii}(I + G_ne_{ii}) = e_{ii}$ because $e_{ii}G_ne_{ii} = G_n[i,i]e_{ii} = 0$, then

  \[ M = \left[ \begin{array}{c|c} G'_nB' & (B'^T)^{-1} \begin{pmatrix} I_k \\ 0 \end{pmatrix} \oplus G_ne_{ii}(B'^T)^{-1}\begin{pmatrix} I_k \\ 0 \end{pmatrix} \oplus G'_nB' \begin{pmatrix} G_k \\ 0 \end{pmatrix} \\ \hline B' & B' \begin{pmatrix} G_k \\ 0 \end{pmatrix} \oplus e_{ii}(B'^T)^{-1} \begin{pmatrix} I_k \\ 0 \end{pmatrix} \end{array} \right].  \]  
  We remind that $G_nB = G'_nB'$, therefore $G_n = G'_nB'B^{-1}$: 
  \[ M = \left[ \begin{array}{c|c} G'_nB' & (B'^T)^{-1} \begin{pmatrix} I_k \\ 0 \end{pmatrix} \oplus G'_nB' B^{-1}e_{ii}(B'^T)^{-1}\begin{pmatrix} I_k \\ 0 \end{pmatrix} \oplus G'_nB' \begin{pmatrix} G_k \\ 0 \end{pmatrix} \\ \hline B' & B' \begin{pmatrix} G_k \\ 0 \end{pmatrix} \oplus B'B'^{-1}e_{ii}(B'^T)^{-1} \begin{pmatrix} I_k \\ 0 \end{pmatrix} \end{array} \right].  \]  
  We need to compute two last quantities: 
  \[ B^{-1}e_{ii}(B'^T)^{-1} \]
  and
  \[ B'^{-1}e_{ii}(B'^T)^{-1} \]
\ \\
  We recall that $e_{ii}(I + e_{ii}G_n)^T = e_{ii}$, and similarly $(I + e_{ii}G_n)e_{ii} = e_{ii}$. So
    \[ B^{-1}e_{ii}(B'^T)^{-1} = B^{-1} e_{ii}(I + e_{ii}G_n)^T (B^{-1})^T = B^{-1} e_{ii} (B^{-1})^T, \]
    \[ B'^{-1}e_{ii}(B'^T)^{-1} = B^{-1} (I + e_{ii}G_n)e_{ii}(I + e_{ii}G_n)^T (B^{-1})^T = B^{-1} e_{ii} (B^{-1})^T. \]

\ \\

  Writing $b_i = B^{-1}[:,i]$, $B^{-1} e_{ii} (B^{-1})^T = b_ib_i^T$ and finally we have 
    \[ M = \left[ \begin{array}{c|c} G'_nB' & (B'^T)^{-1} \begin{pmatrix} I_k \\ 0 \end{pmatrix} \oplus G'_nB' \left( b_ib_i^T \begin{pmatrix} I_k \\ 0 \end{pmatrix} \oplus \begin{pmatrix} G_k \\ 0 \end{pmatrix} \right)  \\ \hline B' & B' \left( b_ib_i^T \begin{pmatrix} I_k \\ 0 \end{pmatrix} \oplus \begin{pmatrix} G_k \\ 0 \end{pmatrix} \right) \end{array} \right].  \] 
    i.e
  \[ \begin{pmatrix} G_k \\ 0 \end{pmatrix} \leftarrow  \begin{pmatrix} G_k \\ 0 \end{pmatrix} \oplus b_ib_i^T \begin{pmatrix} I_k \\ 0 \end{pmatrix} \]
  and
  \[ G_k \leftarrow G_k \oplus b_i[1:k]b_i[1:k]^T = G_k \oplus B_{k,n}[:,i]B_{k,n}[:,i]^T. \]
  We do not care about the block below $G_k$ being modified as we can always zero it with free column operations. \\

  To summarize, we have 
  \[ G_n \leftarrow G_n(I + e_{ii}G_n), \text{ equivalently, } G_n \leftarrow G_n \oplus G_n[:,i]G_n[:,i]^T \]
  \[ G_k \leftarrow G_k \oplus B_{k,n}[:,i]B_{k,n}[:,i]^T, \]
  \[ B^{-1} \leftarrow B^{-1}(I + e_{ii}G_n), \text{ equivalently, } B_{k,n} \leftarrow B_{k,n} \oplus B_{k,n}[:,i]G_n[:,i]^T. \]

  One can check that this is equivalent to 
  \[ G \leftarrow G \oplus G[:,k+i]G[:,k+i]^T. \]

  \item A similar proof can be given to show the effect of an $R_x(\pi/2)$ gate on the input qubits.

  \item For the Hadamard gates, simply write $H \sim SR_x(\pi/2)S$ and apply successively the three gates to have the results.
\end{enumerate}

\end{document}